\documentclass[11pt]{article}

\usepackage{fullpage}
\usepackage{natbib}

\usepackage{times}
\usepackage{setspace}

\usepackage{amsmath, amssymb, amsthm, dsfont}

\usepackage{thm-restate}
\usepackage{xspace}

\usepackage{amsmath,amsfonts,bm}

\def\eqref#1{equation~\ref{#1}}

\def\1{\bm{1}}

\def\eps{{\varepsilon}}
\providecommand{\email}[1]{\href{mailto:#1}{\nolinkurl{#1}\xspace}}

\DeclareMathAlphabet{\mathsfit}{\encodingdefault}{\sfdefault}{m}{sl}
\SetMathAlphabet{\mathsfit}{bold}{\encodingdefault}{\sfdefault}{bx}{n}

\usepackage{hyperref}
\usepackage{url}
\hypersetup{
    colorlinks=true,
    linkcolor=blue,
    urlcolor=cyan,
    citecolor=blue,
}

\newtheorem{theorem}{Theorem}
\newtheorem{lemma}{Lemma}
\newtheorem{proposition}{Proposition}
\newtheorem{corollary}{Corollary}

\theoremstyle{definition}
\newtheorem{definition}{Definition}

\newcommand{\delete}[1]{}

\title{High Probability Streaming Lower Bounds for $F_2$ Estimation}

\author{
William Swartworth\thanks{The Voleon Group. 
E-mail: \email{wswartwo@andrew.cmu.edu}.
Work done in part while at Carnegie Mellon University. 
}
\and
David P. Woodruff\thanks{Carnegie Mellon University and Google Research. 
E-mail: \email{dwoodruf@cs.cmu.edu}. 
David P. Woodruff is supported in part by Office of Naval Research award number N000142112647 and a Simons Investigator Award.}
\and
Samson Zhou\thanks{Texas A\&M University. 
E-mail: \email{samsonzhou@gmail.com}. 
Supported in part by NSF CCF-2335411. 
The author gratefully acknowledges funding provided by the Oak Ridge Associated Universities (ORAU) Ralph E. Powe Junior Faculty Enhancement Award.}
}

\date{}  

\begin{document}

\maketitle

\begin{abstract}
Estimating the second frequency moment ($F_2$) of an underlying frequency vector is a fundamental problem in the streaming model. While recent work by Braverman and Zamir [STOC 2025] resolved the space complexity for constant failure probability in the insertion-only model, the optimal dependence on the failure parameter $\delta$ remained open.

We close this gap by proving a tight high-probability lower bound of $\Omega\left(\frac{1}{\eps^2}\log\frac{1}{\delta}\,\log\frac{\eps\sqrt{n}}{\log(1/\delta)}\right)$
for $(1\pm\eps)$-approximate $F_2$ estimation. The key challenge is the failure of prior multi-scale direct sum arguments under noise sensitivity. We introduce a noise-robust communication primitive, \textit{Exam Mostly Set Disjointness}, and prove an $\Omega\left(\frac{m}{t}\log\frac{1}{\delta}\right)$ one-way lower bound. Embedding this into a multi-scale reduction yields the correct $\log(1/\delta)$ dependence.

We also give two complementary algorithms under natural structure assumptions. For streams with frequency bound $B$, we design a subsampling method using continuous $F_0$ tracking that replaces a $\log(n)$ factor with $\text{polylog}(B)$. For $k$-sparse streams, we develop a two-stage sketch using approximate Morris counters, replacing $\log n$ with $\log k$ and achieving a further $\log\log m$ dependence on stream length.
\end{abstract}

\section{Introduction}
The emergence of large-scale data applications has made streaming algorithms a fundamental framework for space-efficient computation.
Such applications routinely generate massive datasets arising from sources such as social networks, web search systems, network monitoring, and distributed sensor platforms. 
A central problem in this model is the estimation of frequency moments. 
Here, elements from a universe $[n]:=\{1,2,\ldots,n\}$ arrive one by one in a stream, implicitly defining a frequency vector $f$, where $f_i \in \mathbb{N}$ denotes the number of occurrences of each element $i \in[n]$. 
We typically denote the stream length by $m$ and assume that $n = \mathrm{poly}(m)$. 
The goal is to output, with high probability, a $(1 \pm \eps)$-approximation to the $p$-th frequency moment of the stream, $F_p := \sum_{x \in U} f_x^p$, using space sublinear in both $m$ and $n$. 
Since the seminal work of Alon, Matias, and Szegedy~\cite{AlonMS99}, the frequency moments problem has been closely studied for more than two decades~\cite{ChakrabartiKS03,Bar-YossefJKS04,Woodruff04,IndykW05,Indyk06,Li08,KaneNW10,KaneNPW11,Ganguly11,BravermanO13,BravermanKSV14,BlasiokDN17,BravermanVWY18,GangulyW18,WoodruffZ21b,WoodruffZ21,Ben-EliezerEO22,JayaramWZ24,WoodruffZ24,GribelyukLWYZ25,GribelyukLWYZ26,Lin0W0Z26}.

The second frequency moment ($p=2$) is a fundamental quantity in data stream analysis. 
This statistic, commonly referred to as the repeat rate or surprise index, appears in a range of settings including database query optimization \cite{AlonGMS02}, approximate histogram maintenance \cite{GilbertGIKMS02}, machine learning \cite{muthukrishnan2005data, woodruff2014sketching}, and network traffic anomaly detection \cite{KrishnamurthySZC03}, among others. 
In addition to these applications, $F_2$ provides a measure of skew or homogeneity in a dataset, which is important for performance optimization in database systems. 
In distributed and parallel computation, it is used to assess load imbalance and to guide how data should be partitioned across processors. 
It also informs other streaming tasks; for instance, some hybrid algorithms for estimating the number of distinct elements ($F_0$) rely on $F_2$ to decide which sampling strategy to apply.

Exactly computing $F_2$ in a streaming setting requires storing all frequency counts, which leads to $\Omega(n)$ space usage and becomes impractical for large inputs. 
This has driven the development of streaming algorithms that maintain compact summaries of the data and produce accurate estimates in a single pass.
Formally, the goal of the $F_2$ estimation problem is to compute a $(1 \pm \eps)$-approximation with probability at least $1 - \delta$ while minimizing space. 
The classical AMS sketch of Alon, Matias, and Szegedy~\cite{AlonMS99} achieves this using $O\left(\frac{1}{\eps^2}\log\frac{1}{\delta}\log n\right)$ bits of space for a universe of size $n$. 
Despite extensive progress, the best known lower bound until 2025, due to Woodruff~\cite{Woodruff04}, was $\Omega\left(\frac{1}{\eps^2} + \log n\right)$. 
Finally, a recent breakthrough by Braverman and Zamir \cite{BravermanZ25} closed this gap for the case of constant failure probability $\delta$ in the insertion-only model, proving a tight lower bound of $\Omega(\frac{1}{\eps^2} \log(n\eps^2))$ space. 
When $\eps$ is not too small, e.g., $\eps > n^{-1/2+c}$ for $c>0$, this matches the AMS upper bound of $O\left(\frac{1}{\eps^2} \log n\right)$. 
For smaller $\eps$, \cite{BravermanZ25} also provided a modified algorithm matching their lower bound. 
However, while the complexity in terms of $n$ and $\eps$ was resolved, the optimal dependence on the failure probability $\delta$ remains an open question. 

Our main result closes this gap by establishing a matching lower bound on the space complexity of $F_2$ estimation as a function of the failure probability $\delta$, thereby resolving the dependence on $\delta$ in the insertion-only streaming model:

\begin{restatable}{theorem}{thmmain}
\label{thm:main}
Let $\mathcal{A}$ be a streaming algorithm that, for any data stream of length polynomial in $n$ over a universe of size $n$, computes an estimate $\hat{F_2}$ such that $\Pr[|\hat{F_2} - F_2| \le \eps F_2] \ge 1 - \delta$. For $\eps\sqrt{n}\geq \log\frac{1}{\delta}$,
the space used by $\mathcal{A}$ is at least:
$$ \Omega\left(\log\left(\frac{\eps\sqrt{n}}{\log\frac{1}{\delta}}\right)\frac{1}{\eps^2} \log\left(\frac{1}{\delta}\right)\right).$$
\end{restatable}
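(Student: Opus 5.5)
The plan is to reduce from a multi-party, one-way communication problem, following the multi-scale direct-sum template of \cite{BravermanZ25}, but with their base primitive replaced by a noise-robust one that carries the $\log\frac1\delta$ factor. The abstract calls this primitive \emph{Exam Mostly Set Disjointness} (EMSD). I will first prove, or take as the key lemma, an $\Omega\left(\frac{m}{t}\log\frac1\delta\right)$ one-way lower bound for EMSD. Here $m$ is the size of the set-disjointness instance and $t$ is the number of players (or the relevant block size). The instance is a collection of disjointness-type coordinates; a protocol only has to succeed with probability $1-\delta$ on a \emph{mostly} disjoint or \emph{mostly} intersecting promise, and correctness is only tested on a random "exam" subset of coordinates. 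This relaxation is what survives the noise introduced when many scales are superimposed in one stream.

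To get the $\log\frac1\delta$ factor inside EMSD, I will use a hard distribution in which a $\delta$-probability event carries the distinguishing signal. Concretely, I will plant a bias of order $\sqrt{\log(1/\delta)/k}$ among $k$ coordinates, so that any protocol succeeding with probability $1-\delta$ must distinguish two distributions whose statistical distance it cannot shrink below $\delta$. I will then show via an information-complexity or hypercontractivity argument that a message of $s$ bits can only detect such a bias with success probability $1-\delta$ if $s \gtrsim \frac{m}{t}\log\frac1\delta$. My first attempt will use the standard high-probability argument: lower-bound the KL divergence needed between the conditional transcript distributions under the two hypotheses, using a direct sum over the $m/t$ independent blocks.

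Next comes the reduction to $F_2$. I will create $L = \Theta\left(\log\frac{\eps\sqrt n}{\log(1/\delta)}\right)$ scales. At scale $\ell$ the players encode an EMSD instance over roughly $\frac{1}{\eps^2}\log\frac1\delta$ "effective" items with frequencies scaled by $2^\ell$ and universe blocks of size $\approx 4^{\ell}$. The choice of $L$ is the largest for which the blocks still fit into a universe of size $n$; this is exactly where the condition $\eps\sqrt n \ge \log\frac1\delta$ and the logarithm's argument come from. An $F_2$ estimate that is $(1\pm\eps)$-accurate with probability $1-\delta$ then reveals, at a random scale, the answer to the embedded EMSD instance, because the contribution of the target scale changes $F_2$ by an $\eps$-fraction. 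The other scales act as noise that is independent of the answer; this noise is exactly what ordinary disjointness cannot tolerate, and why the "mostly" and "exam" relaxations are needed. A direct-sum argument over the $L$ scales, with each scale forced to cost $\Omega\left(\frac{1}{\eps^2}\log\frac1\delta\right)$ bits of memory, gives the product bound.

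I expect the main obstacle to be the direct-sum step combined with noise robustness. I must show that the information the memory state reveals about the different scales adds up, even though the estimator only needs to be correct on the aggregate and errors at other scales blur the signal at the target scale. My first approach will be to embed the EMSD instance at a uniformly random scale and simulate all other scales with public and private randomness. I will then argue that the resulting protocol solves EMSD with failure probability $O(\delta)$ using $s/L$ bits of information, and conclude $s/L = \Omega\left(\frac1{\eps^2}\log\frac1\delta\right)$.
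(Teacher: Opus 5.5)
Your overall template matches the paper: a noise-robust primitive placed inside the Braverman--Zamir multi-scale framework. The argument breaks at the step you flag as the main obstacle. You propose to embed EMSD at a uniformly random scale, simulate the other scales with public and private randomness, and conclude that the embedded protocol reveals only $s/L$ bits. That is the standard direct sum, and it is unavailable in either version of the construction.

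\emph{Independent scales.} Suppose the scales are independent instances on disjoint universe blocks, as ``the blocks still fit into a universe of size $n$'' suggests. Then $F_2$ is additive over scales, and each scale has only about $n/L$ player items. Detecting an $\eps n$ change against an $O(n)$ baseline still forces super-elements of size $d\gtrsim\eps^2 n/t^2$. So each scale's instance has $m\lesssim t^2/(\eps^2 L)$, the per-scale bound shrinks by a factor $L$, and all $L$ scales together give only $\frac{1}{\eps^2}\log\frac1\delta$.

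\emph{Shared stream.} If instead the scales share one stream, which is what avoids this dilution, they are not independent, so there is nothing separate to simulate privately. Moreover, a game with $t_l=2^l$ players reads off its own family of memory snapshots spaced $n/t_l$ apart. There is therefore no common transcript whose information splits $L$ ways.

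\emph{What the paper does instead.} Following Braverman--Zamir, it uses time-local accounting on a single stream of $n$ i.i.d.\ elements from a huge universe. At every good level $l$ this stream is simultaneously a NO instance of EMostlyDISJ$_{t_l}$ with super-elements of size $d_l\approx\eps^2 n/(4t_l^2)$. The paper defines $I_l=\sum_j I(X_{(j-n/2^l,\,j-n/2^{l+1})};M_j\mid M_{j-n/2^l})$. It proves $I_l\ge\Omega(\frac{n}{\eps^2}\log\frac1\delta)$ by packing $n/t_l$ shifted instances, each costing $\Omega(\frac{t_l}{\eps^2}\log\frac1\delta)$. For fixed $j$ these past segments are disjoint across $l$, so the chain rule gives $\sum_l I_l\le\sum_j I(X_{<j};M_j)\le nS$. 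This mechanism is the missing idea.

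Your source of the $\log\frac1\delta$ factor is also not one the reduction can use. A planted bias of order $\sqrt{\log(1/\delta)/k}$, analyzed via KL divergence or hypercontractivity, is an undeveloped Gap-Hamming-style heuristic, and the $F_2$ reduction produces no such bias. The reduction's gap comes from one super-element lying in at least $ct$ sets, together with the referee's candidate appended $\lceil t/\eps\rceil$ times, giving a cross term $dtk\approx\eps n$.

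This is what ``exam'' means: the referee holds a candidate element and decides whether it is the heavy one. It does not mean that correctness is checked on a random subset of coordinates. The candidate is what allows $d\approx\eps^2 n/t^2$, so $m/t\approx t/\eps^2$, while still giving an $\eps$-relative gap, hence $1/\eps^2$ per scale.

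In the paper, $\log\frac1\delta$ comes from the threshold relaxation: the heavy element need only appear in $ct$ sets, so no single player's set certifies it. The argument runs in three steps:
\begin{itemize}
\item The one-bit problem $F_{ct,t}$ (weight $\le 1$ versus $\ge ct$) has conditional information cost $\Omega(\min(1,\frac1t\log\frac1\delta))$ under $\nu$. This uses the KPW21 bound plus amplification by $r=\Theta(t/\log\frac1\delta)$ repetitions.
\item This transfers to $\Omega(p\log\frac1\delta)$ under $\mu_p$ once $t\gtrsim\log\frac1\delta$.
\item A direct sum over the $|U|$ coordinates gives $\Omega(\frac mt\log\frac1\delta)$.
\end{itemize}
The requirement $t_l\ge C\log\frac1\delta$ is also needed so that the $O(\kappa t/\eps)$ collision terms do not swamp the gap. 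It fixes the lower end of the good levels and produces the $\log\frac1\delta$ inside $\log\frac{\eps\sqrt n}{\log(1/\delta)}$. It does not come from fitting blocks into the universe.
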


Beyond the main lower bound, our techniques yield additional results at both the communication and algorithmic levels. 
At the core of our approach is a new communication primitive called Exam Mostly Set Disjointness (EMostlyDISJ). 
In this problem, $t$ players hold sets with a total size parameter $m$, and a referee holds a target element. 
The players' sets are promised to be either $\kappa$-almost disjoint (meaning the total number of repeated occurrences of elements across the sets is at most $\kappa$) or to have a unique element common to at least a constant fraction of the sets. 

The referee must decide whether the latter case holds and the common element matches their target.

\begin{theorem}[EMostlyDISJ lower bound]
Any one-way communication protocol for the $t$-player EMostlyDISJ problem with failure probability at most $\delta$ requires $\Omega\left(\frac{m}{t}\log\frac{1}{\delta}\right)$ communication when the almost-disjointness parameter is $\kappa = O\left(\log\frac{1}{\delta}\right)$. 
(Informal, see Corollary~\ref{cor:comm_cost_final}.)
\end{theorem}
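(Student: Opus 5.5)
We reduce to a distributional lower bound for a single-coordinate "exam" problem and then run an information-complexity argument across the $t$ players. First we fix a hard distribution. In the NO case, the $m$ elements of the universe are partitioned into $t$ random disjoint sets of size $m/t$. We then plant up to $\kappa = \Theta(\log\frac1\delta)$ "noise" repetitions, which is allowed by the $\kappa$-almost-disjoint promise. In the YES case, a random element $j$ is additionally inserted into a random constant fraction of the sets. The referee's target is either $j$ or a uniformly random element. Since the protocol is one-way ($P_1\to P_2\to\cdots\to P_t\to$ referee), the referee sees only the last message $M$. Correctness with probability $1-\delta$ on every target means that, for essentially every candidate element $x$, the transcript must let the referee decide, with error $\delta$, whether $x$ is the common element.

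The key intermediate statement is a single-element, high-probability one-way lower bound. Consider the following one-way game: player $i$ holds a bit $b_i$ (whether $x\in S_i$), and the referee must decide whether $x$ lies in a constant fraction of the sets or in at most $\kappa$ of them. We aim to show that, under the hard distribution, the information the messages carry about the bits $(b_i)$ summed over players is $\Omega(\frac1t\log\frac1\delta)$ per element. By a direct-sum argument, embedding a random coordinate $x$ among the $m$ elements and using that the sets are nearly independent across elements, the total communication is at least $m$ times this per-element cost, giving $\Omega(\frac mt\log\frac1\delta)$.

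For the per-element bound we use the standard "$\log(1/\delta)$ via small-probability events" technique, as in Jayram--Woodruff's high-probability lower bounds. On YES-type inputs, the referee must reject configurations in which only $\le\kappa$ bits are on, with failure $\delta$. Because $\kappa=\Theta(\log\frac1\delta)$, a NO instance with $\kappa$ planted ones has probability roughly $e^{-O(\kappa)}\approx\delta^{O(1)}$ under a suitably tilted product distribution, comparable to the allowed error. The protocol therefore cannot simply absorb these instances as errors. It must distinguish between seeing $\kappa$ ones and seeing $\Omega(t)$ ones, and this forces Hellinger distance $1-\delta^{\Omega(1)}$ between the corresponding transcript distributions. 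Converting this to information via the Hellinger--KL relation, in the form $D \ge \Omega(\log\frac1\delta)$ when Hellinger affinity is $\le\delta^{c}$, yields $\Omega(\log\frac1\delta)$ bits in total over the $t$ players for each element. Averaging over players, and using that in a one-way protocol each player's message must carry its own share, gives $\frac1t\log\frac1\delta$ per player per element.

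The main obstacle is the noise, i.e. the $\kappa$ planted repetitions. Without noise, a protocol could cheaply exploit exact disjointness. The earlier multi-scale direct-sum arguments break precisely because a few collisions corrupt the signal, and our hard distribution must use $\kappa$ large enough to force the $\log\frac1\delta$ factor while keeping the direct sum over $m$ coordinates valid. I would first try a cut-and-paste / rectangle property for one-way protocols on the tilted product distribution, with the noise placed independently per coordinate so that conditional independence is preserved. Corollary~\ref{cor:comm_cost_final} then follows by combining the per-element bound with the direct sum.
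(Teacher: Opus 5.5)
The gap is the per-element step, which is where all the difficulty lies. You argue that because the referee must separate a NO configuration from a weight-$ct$ configuration with error $\delta$, the two transcript distributions have affinity $\delta^{\Omega(1)}$ and hence KL divergence $\Omega(\log\frac1\delta)$. You then say that since ``each player's message must carry its own share,'' a single player reveals $\Omega(\frac1t\log\frac1\delta)$ about its bit on NO inputs.

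The first half only bounds a divergence between $\Pi(0^t)$ and $\Pi(1_S)$, i.e.\ on a YES-type configuration with $ct$ holders. What the direct sum needs is $I(Y_i;\Pi)$ on NO inputs, where each element has one holder. That is a Jensen--Shannon-type quantity between $\Pi(0^t)$ and $\Pi(e_i)$, bounded by one bit.

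The second half is exactly the transfer that needs proof, and the tools you name do not give it. Hellinger cut-and-paste inequalities cap each single-flip term at a constant, so they recover only the constant-error bound $\Omega(1/t)$ per element. For fixed public coins, the rectangle property does give the exact identity $D(\Pi(0^t)\,\|\,\Pi(1_S))=\sum_{i\in S}D(\Pi(0^t)\,\|\,\Pi(e_i))$, but it does not control information. If each player announces its bit with probability $q$, every term on the right is infinite for every $q>0$, so the identity cannot be what forces $q\gtrsim\frac1t\log\frac1\delta$.

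The paper closes this step differently. In Lemma~\ref{lem:sparse_cost_formal} it imports Lemma 3.7 of \cite{KPW21}, which gives information $\Omega(1)$ under $\nu$ once the error is at most $e^{-\Theta(t)}$. It runs $r=\Theta(t/\log\frac1\delta)$ independent copies to push the error below that threshold. The chain rule plus independence of the copies then shows a single copy already reveals $\Omega(1/r)=\Omega(\frac1t\log\frac1\delta)$. Lemma~\ref{lem:info_relation_formal} moves this to the product distribution $\mu_p$ via the coupling variable $D$.

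There are two further problems. First, you attribute the $\log\frac1\delta$ factor to the $\kappa$ planted repetitions, arguing that NO instances with $\kappa$ ones have probability $\delta^{O(1)}$ and so ``cannot be absorbed as errors.'' But the error guarantee is worst case over valid inputs, so an input's probability under your distribution has no bearing on whether the protocol may err on it. Widening the NO promise only adds constraints. In the paper the one-bit problem has NO instances of weight at most $1$ and no noise at all. The $\log\frac1\delta$ comes from the $ct$-versus-$\le 1$ gap at error $\delta$. The parameter $\kappa=O(\log\frac1\delta)$ is needed only so that randomly generated sets are valid NO instances with probability $1-\delta$ (Proposition~\ref{prop:almost_disjoint}).

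Second, your hard distribution (a random partition into sets of size exactly $m/t$ plus planted collisions) has dependent coordinates. The super-additivity $I(Y;\Pi)\ge\sum_j I(Y^j;\Pi)$ that your direct sum rests on therefore does not apply, and ``nearly independent'' does not repair it. The paper instead draws i.i.d.\ $\mathrm{Bern}(p)$ columns with $p=m/(2t|U|)$, pads each set to size $m/t$, and aborts on oversize sets. It pays for this with the $h_2(\beta)$ and $\beta(C_\Pi+1)$ corrections in Corollary~\ref{cor:comm_cost_final}.
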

\noindent
We then use this lower bound as a core primitive in our reduction framework, together with a multi-scale embedding argument, to obtain the streaming lower bound above.

On the algorithmic side, we identify structured regimes where faster algorithms are possible.

\begin{theorem}[Bounded-frequency streams]
Let $x \in \mathbb{Z}_{\ge 0}^n$ be the frequency vector of an insertion-only stream of length $m$. 
Suppose that $0 \leq x_i \leq B$ for all $i \in [n]$. 
Then there exists a streaming algorithm that, with probability at least $1 - \delta$, computes a $(1 \pm \eps)$-approximation to $\|x\|_2^2$ using space (in bits)
\[O\left(\frac{1}{\eps^2}\log^2\frac{B}{\eps}\log\frac{1}{\delta}\left(\log B + \log\frac{1}{\eps}\right)\right) + O\left(\log\frac{n}{\delta}\right).\]
\end{theorem}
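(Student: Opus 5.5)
We will estimate $\|x\|_2^2$ by sampling the support of $x$ and using the bound $x_i\le B$ to keep every counter small. The abstract describes this as subsampling together with continuous $F_0$ tracking. The starting observation is that when all frequencies lie in $[1,B]$, we have $F_0\le F_2\le B^2F_0$. The support size therefore determines $F_2$ up to a $B^2$ factor. If we sample each support element with probability $p$, the sample has size about $pF_0$, and $\sum_{i\in S}x_i^2/p$ is an unbiased estimator of $F_2$. Its variance is at most
\[
\frac{1}{p}\sum_i x_i^4\le \frac{B^2}{p}\sum_i x_i^2\le \frac{B^2}{p}F_2 .
\]
Relative to $F_2^2\ge F_0F_2$, this is small once $pF_0\gtrsim B^2/\eps^2$. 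So we aim to keep a sample of about $O(B^2/\eps^2)$ support elements with exact counts, each count using $O(\log B)$ bits. That naive bound has $B^2$ rather than $\mathrm{polylog}(B)$, so we refine it by stratifying by frequency level.

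\textbf{Refined plan.} Split the coordinates into geometric levels $L_j=\{i: x_i\in[2^j,2^{j+1})\}$ for $j=0,\dots,\log B$. Each level's contribution $2^{2j}|L_j|$ is determined by $|L_j|$ up to a factor $4$. We discard levels contributing less than an $\eps/\log B$ fraction of $F_2$. Since the final level of a coordinate is not known while it is still arriving, we hash coordinates with a pairwise/limited-independence hash $h:[n]\to[0,1]$ at geometric rates $p_k=2^{-k}$. At each rate $k$ we maintain exact counters, of $O(\log B)$ bits each, for the sampled ids that have appeared. Each id is stored as an $O(\log(B/\eps))$-bit fingerprint via a secondary hash rather than $\log n$ bits. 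The hash seeds cost the additive $O(\log(n/\delta))$.

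The rate is chosen by a continuous $F_0$ tracker, namely a small KMV/Bar-Yossef style sketch run at constant accuracy. This tracker tells us the smallest rate at which the number of sampled support elements stays below a budget $K=O(\eps^{-2}\log^2(B/\eps))$. Whenever the count at the current rate exceeds $K$, we move to the next rate, halving the sample and discarding the dropped ids. Because the stream is insertion-only and counters of retained ids are never lost, the retained counters are exact. Each frequency level $j$ is then estimated as $|L_j\cap S|/p$, with a rate chosen per level where needed. Levels with large $x_i$ but few elements are caught at higher rates, which we keep simultaneously for $O(\log(B/\eps))$ consecutive rates. This gives a space of $O(\log(B/\eps))$ rates times $K$ items times $O(\log B+\log(1/\eps))$ bits each. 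That is exactly the main term of the theorem with $\delta$ constant.

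\textbf{Boosting and the main obstacle.} To boost to failure probability $\delta$, we run $O(\log(1/\delta))$ independent copies and take the median, which supplies the $\log(1/\delta)$ factor. The main obstacle is the variance argument showing that $K=\tilde O(\eps^{-2})$, rather than $B^2/\eps^2$, suffices. One would first try the following: for each level whose contribution is at least $\eps F_2/\log B$, the rate at which it has about $\eps^{-2}\log^2(B/\eps)$ sampled members is among the tracked ones. Chebyshev then gives a $(1\pm\eps)$ estimate of $|L_j|$ with constant probability after a union bound over the $\log B$ levels. We also need to handle two further issues. First, fingerprint collisions: we take the fingerprint range to be $\mathrm{poly}(K)$ so that collisions occur with small constant probability. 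Second, elements crossing level boundaries during the stream: levels are only defined by the final counts, which we read from the exact counters at query time.
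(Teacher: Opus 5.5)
There is a genuine gap in the refined plan, at exactly the step you flag as the main obstacle: stratifying by frequency level does not let you shrink the stored sample to $\tilde O(\eps^{-2})$ items. The sampling decision has to be oblivious to final frequencies, since a coordinate whose current count is $1$ may still become heavy. So at rate $p$ you must keep every sampled support element, about $pF_0$ of them.

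Take $F_0$ coordinates, of which $F_0/B^2$ end with frequency $B$ and the rest with frequency $1$; the heavy level then carries about half of $F_2$. Any rate within your budget, $pF_0\le K=O(\eps^{-2}\log^2(B/\eps))$, contains in expectation only $K/B^2$ heavy coordinates. This is $\ll 1$ once $B$ is large compared with $\eps^{-1}\log(B/\eps)$, so $\sum_{i\in S}x_i^2/p$ does not concentrate. To see $\Omega(\eps^{-2})$ heavy coordinates you need $p\ge B^2/(\eps^2F_0)$, and then you store about $B^2/\eps^2$ items, which is the naive bound you were trying to avoid.

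Hence the claim that ``the rate at which [a significant level] has about $\eps^{-2}\log^2(B/\eps)$ sampled members is among the tracked ones'' fails under a per-rate budget of $K$. The higher rates you propose to keep simultaneously hold about $2^iK$ items at $i$ steps above the base rate, which reintroduces a $\mathrm{poly}(B/\eps)$ factor. Your inverse-probability estimator, read off an explicitly stored uniform support sample, therefore needs space polynomial in $B$ on such inputs.

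The missing idea, which is what the paper does, is to keep the sample at the naive size $K=\Theta(B^2/\eps^2)$ but never store it. Instead, feed the substream restricted to the sampled coordinates into an AMS sketch with $O(\eps^{-2})$ counters. Each counter is bounded by $\|x|_S\|_1=O(BK)$ and so costs $O(\log B+\log(1/\eps))$ bits. After hashing identifiers into $[\Theta(K^2)]$, the AMS sign seeds cost the same, so $B^2$ enters only inside a logarithm.

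The unknown $F_0$ is handled as follows. Set $N=\lceil\log(B^2/\eps^2)\rceil$. Each time the continuous $F_0$ estimate doubles, spawn a group of $N+1$ sketches whose rates are calibrated to $F_0$ scales $2^0,\dots,2^N$ times the current estimate, and keep only the last $N+1$ groups. The $O(N^2)$ live sketches are where the $\log^2(B/\eps)$ factor comes from. The discarded prefix is charged via $\|x_1\|_2^2\le B^2\|x_1\|_0\le B^2 2^{-N}\|x\|_0\le\eps^2\|x\|_2^2$.

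There are also three secondary problems in your write-up:
\begin{itemize}
\item Estimating level $j$ by $|L_j\cap S|/p$ pins its contribution down only to a factor $4$. You would need $\sum_{i\in L_j\cap S}x_i^2/p$ instead.
\item Your space count of $O(\log(B/\eps))$ rates times $K$ items times $O(\log(B/\eps))$ bits is $\eps^{-2}\log^4(B/\eps)$, one logarithm above the stated bound.
\item If you store only fingerprints, you cannot tell which stored items survive when you halve the rate, unless you also store each item's hash level.
\end{itemize}
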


\begin{theorem}[Sparse streams]
Let $x \in \mathbb{Z}_{\geq 0}^n$ be the frequency vector of an insertion-only stream of length $m$. Suppose that $\|x\|_0 \leq k$. 
Then there exists a streaming algorithm that, with probability at least $1 - \delta$, computes a $(1 \pm \eps)$-approximation to $\|x\|_2^2$ using space (in bits)
\[O\left(\frac{1}{\eps^2}\log\left(\frac{1}{\delta}\right)\left(\log\frac{k}{\eps} + \log\log m\right) + \log n \cdot \log\frac{1}{\delta} \right).\]
\end{theorem}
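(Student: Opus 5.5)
The plan is a two-stage sketch for $k$-sparse streams. The first stage hashes the universe down to a small domain so that identities cost $O(\log(k/\eps))$ bits. The second stage replaces exact counters by approximate (Morris-type) counters, which cost $O(\log\log m)$ bits. We then boost the constant-probability estimator by a median of $O(\log\frac1\delta)$ independent copies. This median trick accounts for the factor $\log\frac1\delta$ multiplying both terms, since each copy needs its own $O(\log n)$-bit hash seeds.

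\textbf{Stage 1 (hashing).} Draw a pairwise-independent hash $h:[n]\to[N]$ with $N = O(k^2/\eps)$, so storing $h$ costs $O(\log n)$ bits. Define the hashed vector $y_j=\sum_{i:h(i)=j}x_i$. Since $x\ge0$, we have $\|y\|_2^2 = \|x\|_2^2 + \sum_{i\ne i',h(i)=h(i')}x_ix_{i'}$. The cross term is nonnegative, and its expectation is at most $\|x\|_1^2/N \le k\|x\|_2^2/N$ by Cauchy--Schwarz on the support of size $k$. Choosing $N=\Theta(k/\eps)$ therefore makes the error at most $\eps\|x\|_2^2$ with constant probability, by Markov's inequality. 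On the new domain $[N]$, every index costs $O(\log(k/\eps))$ bits.

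\textbf{Stage 2 (AMS sketch with approximate counters).} Run an AMS-style estimator on $y$: use $O(1/\eps^2)$ buckets, each with a 4-wise independent sign hash $\sigma$ and a bucket hash $g:[N]\to[O(1/\eps^2)]$. A CountSketch-type $F_2$ estimator with $O(1/\eps^2)$ buckets has variance $O(\eps^2\|y\|_2^4)$. The obstacle is that bucket counters are signed sums of magnitude up to $m$, which would cost $\log m$ bits each. To avoid this, split each bucket into two nonnegative counters $P_b=\sum_{\sigma=+1}y_j$ and $M_b=\sum_{\sigma=-1}y_j$. Each is insertion-only, so it can be maintained by a Morris counter with relative accuracy $\eta$ in $O(\log(1/\eta)+\log\log m)$ bits. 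The trouble is that $(P_b-M_b)^2$ suffers catastrophic cancellation when $P_b\approx M_b$.

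I would handle cancellation with the second level of hashing. Buckets are taken at fine granularity $N'=O(k^2)$ or larger relative to the support, so with good probability each bucket contains at most one nonzero $y_j$ (the sparsity $k$ makes this affordable). With one nonzero coordinate per bucket, one of $P_b,M_b$ is zero, and $(P_b-M_b)^2$ is approximated within $1\pm O(\eta)$. Store counters only for occupied buckets, in a sparse representation keyed by bucket index, costing $O(\log(k/\eps))$ bits per key. To keep the total at $O(1/\eps^2)$ counters rather than $k$, I expect the bucketing to be done within each AMS row so that only $O(1/\eps^2)$ Morris counters of $O(\log(1/\eps)+\log\log m)$ bits are kept. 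Taking $\eta=\eps$ means each counter costs $O(\log\frac1\eps+\log\log m)$ bits.

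\textbf{Error analysis and the hard step.} The error analysis combines three sources: the hashing bias from Stage 1, the AMS variance, and the Morris-counter errors. The Morris errors contribute multiplicative $(1\pm\eps)$ factors per counter; they are independent across counters, so they average well. Summing the costs gives $O(\frac1{\eps^2}(\log\frac k\eps+\log\log m))$ bits per copy plus $O(\log n)$ bits of seeds, and the median over $O(\log\frac1\delta)$ copies yields the claimed bound. The step I expect to be hardest is reconciling the cancellation problem with the $1/\eps^2$ counter budget when $k\gg1/\eps^2$. If isolation fails there, my fallback is to identify heavy coordinates by exact hashing with small counters. The light part would then be estimated by sampling with $O(1/\eps^2)$ Morris counters of sampled coordinates, each squared directly, which avoids signed cancellation entirely at the cost of a subsampling-variance argument that uses sparsity.
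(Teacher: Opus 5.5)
Your architecture matches the paper's: hash $[n]$ to a domain of size $\mathrm{poly}(k/\eps)$, run AMS on the hashed vector with each signed counter split into two insertion-only parts tracked by Morris counters, and take a median of $O(\log\frac{1}{\delta})$ copies. Your Stage~1 is fine; $N=\Theta(k/\eps)$ indeed suffices because the collision term is nonnegative and Markov applies. The gap is in how you handle cancellation.

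Isolating every nonzero $y_j$ in its own bucket, so that one of $P_b,M_b$ vanishes, needs at least $\|y\|_0$ occupied buckets. That can be up to $k$, hence $\Omega(k)$ Morris counters, which exceeds the $O(1/\eps^2)$ budget exactly when $k\gg 1/\eps^2$. ``Bucketing within each AMS row'' cannot fix this: by pigeonhole, a row with $O(1/\eps^2)$ buckets cannot isolate $k\gg 1/\eps^2$ coordinates.

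Without isolation, your choice $\eta=\eps$ fails. The Morris error in $P_b-M_b$ has size about $\eps(P_b+M_b)$, which can be $\eps m$. So it shifts $\mathbb{E}[\hat A^2]$ upward by up to $\eps^2 m^2$, and for a $k$-sparse vector this can be $\Theta(\eps^2 k)\|x\|_2^2$. This is a one-sided bias, so averaging independent counters does not remove it. Hence your claim that the Morris errors act as multiplicative $(1\pm\eps)$ factors per counter fails exactly where it is needed. The heavy/light fallback is too unspecified to close the gap.

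The missing idea is to reuse the Cauchy--Schwarz bound from your own Stage~1 inside Stage~2, and to notice that the budget lets each counter spend $\log\frac{k}{\eps}$ bits of precision rather than $\log\frac{1}{\eps}$.

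\begin{itemize}
\item Since $P_b+M_b\le\|y\|_1=m$ and $\|x\|_2^2\ge\|x\|_1^2/k=m^2/k$, Morris counters with relative variance $\gamma=\Theta(\eps^2/k^2)$ give $\operatorname{Var}(\hat A)\le\gamma(P^2+M^2)\le\gamma m^2$ for each approximate counter $\hat A=\hat P-\hat M$.
\item The bias is therefore $\mathbb{E}[\hat A^2]-A^2=\operatorname{Var}(\hat A)\le\gamma m^2=O(\eps^2/k)\,\|x\|_2^2$.
\item The extra variance of $\hat A^2$ is $O(\gamma m^4)=O(\eps^2\|x\|_2^4)$. This drops to $O(\eps^4\|x\|_2^4)$ after averaging $\Theta(1/\eps^2)$ counters with independent Morris randomness.
\item Each such counter costs $O(\log\frac{1}{\gamma}+\log\log m)=O(\log\frac{k}{\eps}+\log\log m)$ bits.
\end{itemize}

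That per-counter precision, not the cost of storing bucket indices as your accounting suggests, is where the $\frac{1}{\eps^2}\log\frac{k}{\eps}$ term in the bound comes from. With this choice, no isolation, sparse bucket representation, or heavy/light split is needed. This is exactly the paper's argument, using plain AMS counters, each with $A^++A^-=m$.
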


Together, these results provide a refined picture of the complexity of $F_2$ estimation, capturing both its limitations in the general case and improved guarantees under natural structural assumptions.

\section{Background and Technical Overview}

Proving strong streaming space lower bounds via communication complexity often encounters a known barrier: traditional reductions struggle to amplify the hardness of a single communication problem to achieve the logarithmic factors (e.g., $\log n$) required for $F_2$ estimation. 
A recent breakthrough by Braverman and Zamir \cite{BravermanZ25} overcame this barrier by introducing a novel multi-scale direct sum argument applied directly at the streaming level, built on top of a specialized communication model.

\paragraph{The Braverman-Zamir Approach and Exam Set Disjointness.} 
At the core of their lower bound is the \emph{Exam Set Disjointness} (EDISJ) problem. In this communication game, $t$ players each hold a set of items, and a referee holds a single target element $y$. The players communicate sequentially, eventually sending a message to the referee, who must determine if $y$ is the \emph{unique} intersection of all the players' sets. This ``exam'' model is crucial because it forces the players to communicate enough information to verify a specific intersection, rather than merely detecting whether any intersection exists. They showed that solving EDISJ requires $\Omega(m/t)$ communication, where $m$ is the total size of the sets.

To lift this communication bound to the streaming setting, Braverman and Zamir embed multiple instances of EDISJ into a single stream. These instances encode games with varying numbers of players (e.g., $2, 4, 8, \ldots$, up to roughly $\eps\sqrt{n}$) and are interleaved at different ``scales'' or ``levels.'' Because these instances share parts of the stream, they are dependent, and standard direct sum theorems do not apply. Instead, they developed a clever information-theoretic argument to show that a streaming algorithm must effectively solve all these dependent instances simultaneously. At any arbitrary point $j$ in the stream, an algorithm needs information about the ``past'' relevant to each scale. For a level $l$ (corresponding to $2^l$ players), the algorithm must remember information about the preceding $n/2^l$ elements of the stream to solve the EDISJ instance at that scale. They demonstrated that these different-scaled pasts are sufficiently disjoint that the required information adds up. Summing the $\Omega(1/\eps^2)$ cost across each of the $\Theta(\log(n\eps^2))$ scales yields their final $\Omega(\frac{1}{\eps^2} \log n)$ space bound.

\paragraph{The Barrier to Optimal Error Dependence.} 
While this multi-scale argument elegantly resolves the space complexity in terms of $n$ and $\eps$, it fails to capture the optimal dependence on the failure probability $\delta$. The limitation stems directly from the EDISJ problem itself: the standard Set Disjointness problem, and its exam variant, are highly sensitive to noise. Because the YES instance relies on a single, perfect intersection, an algorithm that is permitted to fail with probability $\delta$ might simply miss this unique intersection without paying the full information cost. Consequently, EDISJ cannot yield the robust $\log(1/\delta)$ communication lower bounds necessary for stronger high-probability streaming guarantees.

\paragraph{Our Contributions: Exam Mostly Set Disjointness.} 
To achieve tight dependence on $\delta$, we adapt the Braverman-Zamir framework by replacing EDISJ with a new, noise-robust communication primitive that we call \emph{Exam Mostly Set Disjointness} (EMostlyDISJ). This problem is a synthesis of the Exam Set Disjointness problem from \cite{BravermanZ25} and the Mostly Set Disjointness problem introduced by Kamath et al. \cite{KPW21}. 

Instead of demanding perfect disjointness, EMostlyDISJ relaxes the conditions to make the problem strictly harder for algorithms prone to error. By distinguishing between sets being almost disjoint (few intersections) and having an element common to a large fraction of the sets, the problem becomes robust to failure probabilities. 

\begin{definition}[Exam Mostly Set Disjointness (EMostlyDISJ)]
The setup involves $t$ players and one referee. Let $U$ be a universe of size $|U|$. The inputs are sets $S_1, \ldots, S_t \subseteq U$ for the players and an element $j \in U$ for the referee. 
The input sets are promised to be either (i) $\kappa$-almost disjoint (the total number of repeated occurrences of elements across the sets is at most $\kappa$) or (ii) to have a unique element $j_0 \in U$ that is common to at least $ct$ of the sets for some constant $c \in (0,1)$, and the remaining elements across the sets are $\kappa$-almost-disjoint. 
The communication is one-way from player $i$ to player $i+1$, and finally to the referee. 
With failure probability at most $\delta$, the referee must decide if the input is an instance of case (ii) and the intersecting element $j_0$ is equal to its element $j$.
\end{definition}

We prove that this game requires $\Omega\left(\frac{m}{t} \log \frac{1}{\delta}\right)$ communication when the almost-disjointness parameter is bounded by $M = O\left(\log \frac{1}{\delta}\right)$.

\paragraph{Putting it Together: The Multi-Scale Reduction.} 
With the robust communication bound in hand, we plug EMostlyDISJ directly into the multi-scale direct sum framework. In our reduction, the stream is divided into blocks corresponding to players, and elements are grouped into ``super-items'' of size $d = \Theta(\frac{\eps^2 n}{t^2})$. If a super-item is repeated among a constant fraction of the players (the ``Yes'' case of EMostlyDISJ), the $F_2$ moment of the stream increases significantly compared to the ``No'' case, where items are repeated at most $M$ times. 
By embedding these robust EMostlyDISJ instances across multiple scales (different values of $t$) and applying the direct sum argument over the disjoint ``pasts,'' the robust $\log(1/\delta)$ communication costs naturally accumulate at every valid level of the stream, culminating in our complete and tight space lower bound for $F_2$ estimation.

\paragraph{Bounded-frequency streams.}
We consider two classes of insertion-only streams for which the general space lower bound above does not apply. 
By exploiting structural assumptions about the data, we show that it is possible to beat the lower bound with more space-efficient algorithms. 
The first class is streams with entries bounded by $B$, i.e., the maximum frequency of any element is $x_i \leq B$. 
The main observation is that when $B$ is small, we obtain a good estimate of $F_2(x)$ by subsampling entries from the support of $x$. 
We implement this by applying a pairwise independent hash function to filter the stream down to an expected effective dimension of roughly $K = O(\frac{B^2}{\eps^2})$. Because the frequencies are strictly bounded by $B$, the variance of this sampling-based estimator remains well-controlled. 

This subsampling reduces the size of the vector dramatically. 
When we apply the classical AMS sketch of \cite{AlonMS99} purely to this restricted substream, the counters accumulate much smaller values (bounded by $O(BK)$). 
This drastically lowers the bit complexity per entry, effectively reducing the dominant multiplicative space dependence on $\log n$ down to roughly $\log B$. 
A full statement is given in Theorem \ref{thm:l2_bounded_freq_improved}.

There are several complications in following the outlined procedure in a single-pass streaming model. In particular, the optimal subsampling rate needs to depend inversely on the total support size ($F_0$), which we do not have access to at the start of the stream and grows dynamically. To handle this, we use a continuous $F_0$ tracker to estimate the sparsity at all points in time. As the stream progresses and our $F_0$ estimate crosses power-of-two thresholds, we dynamically spawn new sets of sketches tuned to progressively smaller sampling rates. By maintaining only a small active window of parallel instances (specifically, $O(\log(B/\eps))$ instances) at any given time and discarding older ones, we successfully isolate a sketch with the correct sampling scale for the suffix of the stream, while bounding the error accumulated from the prefix.

\paragraph{Sparse streams.}
We also consider sparse streams, where the underlying frequency vector has at most $k$ non-zero entries (i.e., $\|x\|_0 \leq k$). For this setting, our strategy utilizes a two-stage dimensionality reduction tailored specifically for sparse vectors.

First, we hash the massive $n$-dimensional universe items into $M = \Theta(k/\eps^2)$ random buckets. Because the underlying vector is guaranteed to be $k$-sparse, mapping the non-zero elements into this number of buckets results in very few collisions. This ensures that the $F_2$ moment of the resulting compressed, low-dimensional vector closely preserves the original $F_2$ moment up to a $(1 \pm \eps/4)$ factor. We then compose this with an AMS sketch applied directly to the resulting compressed vector.

To aggressively optimize the space and further remove lingering logarithmic dependencies on the stream length $m$, we do not store the entries of the AMS sketch exactly. The AMS sketch relies on linear combinations with random $\pm 1$ signs, so we conceptually decompose each sketch counter into separate positive and negative updates. We then use Morris Counters \cite{morris1978counting} to separately estimate the positive and negative contributions to each bucket. 

The key to the error analysis relies on the structure of sparse vectors: by the Cauchy-Schwarz inequality, any $k$-sparse stream of length $m$ satisfies $F_2 \geq m^2/k$. By tuning the Morris counters to operate with an error parameter of $\eps' = O(\eps/k)$, the additive error introduced by the approximate counting is tightly bounded, ensuring it is small enough to be absorbed into the overall $(1 \pm \eps)$ multiplicative error. This layered approach roughly allows us to replace the heavy $\log n$ dependence with $\log k$ (alongside a deeply sub-logarithmic $\log\log m$ factor), where $k$ is the sparsity. 
A full statement is given in Theorem~\ref{thm:l2_sparse_improved}.

\section{Preliminaries}
\label{sec:preliminaries}

In this section, we formally define the fundamental concepts, models, and tools used throughout the paper.
We review key concepts from information theory and communication complexity, and briefly describe the standard probability bounds and sketching algorithms utilized in our upper bounds. 

\subsection{Information Theory}

Our space lower bounds rely heavily on information-theoretic measures to quantify how much the internal memory of a streaming algorithm ``knows'' about the input stream. We use standard concepts from Shannon information theory. All logarithms are taken to base $2$. Let $X, Y,$ and $Z$ be discrete random variables.

\begin{definition}[Entropy and Conditional Entropy]
The Shannon entropy of $X$, which measures its expected uncertainty or information content, is defined as:
\[
H(X) := \sum_{x \in \operatorname{supp}(X)} \Pr[X=x] \log \left(\frac{1}{\Pr[X=x]}\right).
\]
The conditional entropy of $X$ given $Y$ is the expected remaining uncertainty of $X$ after $Y$ is observed:
\[
H(X|Y) := \sum_{y \in \operatorname{supp}(Y)} \Pr[Y=y] H(X \mid Y=y) = H(X, Y) - H(Y).
\]
A fundamental property of entropy is that conditioning cannot increase uncertainty: $H(X|Y) \le H(X)$, with equality holding if and only if $X$ and $Y$ are completely independent.
\end{definition}

\begin{definition}[Mutual Information]
The mutual information between $X$ and $Y$ quantifies the amount of information that observing $Y$ reveals about $X$ (and vice versa):
\[
I(X; Y) := H(X) - H(X|Y) = H(Y) - H(Y|X).
\]
By definition, mutual information is symmetric and non-negative: $I(X; Y) \ge 0$. The mutual information between $X$ and $Y$ conditioned on a third variable $Z$ is defined as:
\[
I(X; Y \mid Z) := H(X|Z) - H(X|Y,Z).
\]
\end{definition}

\begin{lemma}[Chain Rules]
Information-theoretic quantities can be decomposed over sequences of random variables using chain rules. For a sequence of random variables $X_1, X_2, \ldots, X_k$, and letting $X_{<i}$ denote the prefix $(X_1, \ldots, X_{i-1})$:
\begin{itemize}
    \item \textbf{Chain Rule for Entropy:} $H(X_1, \ldots, X_k \mid Y) = \sum_{i=1}^k H(X_i \mid X_{<i}, Y)$.
    \item \textbf{Chain Rule for Mutual Information:} $I(X_1, \ldots, X_k ; Y \mid Z) = \sum_{i=1}^k I(X_i ; Y \mid X_{<i}, Z)$.
\end{itemize}
\end{lemma}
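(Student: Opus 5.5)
The statement to prove is the chain rule for entropy and for mutual information. I would prove both by induction on $k$, reducing to the two-variable identity $H(X_1,X_2\mid Y)=H(X_1\mid Y)+H(X_2\mid X_1,Y)$.

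\textbf{Base identity.} I would first establish the two-variable identity. By the definition of conditional entropy, $H(X\mid Y)=H(X,Y)-H(Y)$; this follows from expanding $\Pr[X=x,Y=y]=\Pr[Y=y]\Pr[X=x\mid Y=y]$ inside the logarithm. Applying it twice gives
\[
H(X_1,X_2\mid Y)=H(X_1,X_2,Y)-H(Y)=\bigl(H(X_1,X_2,Y)-H(X_1,Y)\bigr)+\bigl(H(X_1,Y)-H(Y)\bigr).
\]
The two bracketed terms are exactly $H(X_2\mid X_1,Y)$ and $H(X_1\mid Y)$, which proves the identity.

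\textbf{Induction for entropy.} I would treat the tuple $(X_1,\ldots,X_{k-1})$ as a single random variable $X_{<k}$. The base identity then gives
\[
H(X_1,\ldots,X_k\mid Y)=H(X_{<k}\mid Y)+H(X_k\mid X_{<k},Y).
\]
The inductive hypothesis expands $H(X_{<k}\mid Y)$ as $\sum_{i<k}H(X_i\mid X_{<i},Y)$, which yields the first bullet.

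\textbf{Mutual information.} I would write
\[
I(X_1,\ldots,X_k;Y\mid Z)=H(X_1,\ldots,X_k\mid Z)-H(X_1,\ldots,X_k\mid Y,Z)
\]
and expand both terms with the entropy chain rule, once conditioning on $Z$ and once on $(Y,Z)$. Subtracting term by term gives
\[
\sum_i \bigl[H(X_i\mid X_{<i},Z)-H(X_i\mid X_{<i},Y,Z)\bigr]=\sum_i I(X_i;Y\mid X_{<i},Z),
\]
which is the definition of conditional mutual information applied to each summand.

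The only step needing care is the base identity. There I must check that the sums over the supports are manipulated correctly, with zero-probability outcomes omitted by convention. Everything else is bookkeeping.
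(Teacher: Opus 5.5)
Your proof is correct. The paper states these chain rules as standard background without proof, and your argument is the usual textbook derivation: induction from $H(X\mid Y)=H(X,Y)-H(Y)$, which the paper already builds into its definition of conditional entropy. The only tacit assumption is that the entropies are finite, so that the subtractions make sense; this holds for the finitely supported variables used throughout the paper.
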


\begin{lemma}[Data Processing Inequality]
If $X, Y,$ and $Z$ form a Markov chain $X \to Y \to Z$ (meaning $Z$ is conditionally independent of $X$ given $Y$), then algorithmic post-processing cannot generate new information: $I(X; Z) \le I(X; Y)$. 
\end{lemma}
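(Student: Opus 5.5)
We prove the Data Processing Inequality by expanding the joint mutual information $I(X; Y, Z)$ in two different orders using the Chain Rule for Mutual Information stated just above, and then comparing the two expansions.

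First, apply the chain rule with $Y$ revealed before $Z$:
\[
I(X; Y, Z) = I(X; Y) + I(X; Z \mid Y).
\]
The Markov condition $X \to Y \to Z$ says that $Z$ is conditionally independent of $X$ given $Y$. Hence, for every $y$ in the support of $Y$, we have $H(X \mid Y=y) = H(X \mid Z, Y=y)$. This uses the equality case of ``conditioning cannot increase entropy'' from the definition of conditional entropy, applied to the conditional distribution given $Y=y$. Averaging over $y$ gives $H(X \mid Y) = H(X \mid Y, Z)$, so $I(X; Z \mid Y) = 0$ by the definition of conditional mutual information. Therefore $I(X; Y, Z) = I(X; Y)$.

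Second, apply the chain rule with $Z$ revealed first:
\[
I(X; Y, Z) = I(X; Z) + I(X; Y \mid Z).
\]
We need $I(X; Y \mid Z) \ge 0$. This follows because $I(X; Y \mid Z) = H(X \mid Z) - H(X \mid Y, Z)$ is an average over $z$ of unconditional mutual informations $I(X; Y \mid Z=z)$. Each of these is nonnegative, since conditioning does not increase entropy, again applied to the distribution conditioned on $Z=z$.

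Combining the two expansions gives
\[
I(X; Z) \le I(X; Z) + I(X; Y \mid Z) = I(X; Y, Z) = I(X; Y),
\]
which is the claimed inequality.

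The only step requiring some care is justifying $I(X; Z \mid Y) = 0$ from the Markov property. The argument above does this pointwise in $y$, using the fact that conditional independence means $\Pr[X=x, Z=z \mid Y=y] = \Pr[X=x \mid Y=y]\,\Pr[Z=z \mid Y=y]$. The remaining steps are direct applications of the chain rule and of nonnegativity.
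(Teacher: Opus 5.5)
Your proof is correct. The paper gives no proof of this lemma; it lists it among the standard information-theoretic preliminaries. Your argument (expand $I(X;Y,Z)$ in both orders by the chain rule, use the Markov property to get $I(X;Z\mid Y)=0$, and drop $I(X;Y\mid Z)\ge 0$) is the standard textbook derivation. It uses only facts the preliminaries supply: the chain rule, nonnegativity, and the equality case of ``conditioning cannot increase entropy.'' The paper states the chain rule with the chain in the first argument, so you are also implicitly using symmetry of mutual information. The paper actually applies the conditional form, e.g.\ $I(Y;\Pi'\mid F=0,C)\le I(X';\Pi'\mid F=0,C)$ in the proof of Corollary~\ref{cor:comm_cost_final}. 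Your argument gives that version unchanged if you run it under each fixed value of the extra conditioning variable and average, provided the Markov property holds conditionally on that variable.
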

Because a streaming algorithm's memory state is a function of the stream prefix and internal randomness, the data processing inequality is commonly used to bound the information the algorithm retains about the original input stream.

\subsection{Communication Complexity and Information Cost}

To lower bound the space complexity of a streaming algorithm, we rely on reductions from multi-party communication games. 

\begin{definition}[The Multi-Party One-Way Model]
In the $t$-player one-way communication model, $t$ players $P_1, P_2, \ldots, P_t$ receive respective inputs $X_1, X_2, \ldots, X_t$ drawn from a joint probability distribution $\mu$. A referee may also receive an independent input $y$. The players collaborate to compute a joint function $f(X_1, \ldots, X_t, y)$ or solve a decision problem. 

The communication flows strictly one-way: player $P_1$ computes a message based on their input $X_1$ and sends it to $P_2$. Then, $P_2$ computes a message based on $X_2$ and the message from $P_1$, sending it to $P_3$, and so on. The final player $P_t$ sends a message to the referee, who must output the correct answer. 
\end{definition}

\begin{definition}[Protocols and Transcripts]
A randomized protocol $\Pi$ governs the players' message generation. Players may utilize local private randomness as well as a shared \emph{public randomness} string $R$, which is visible to all participants without incurring communication costs. The \emph{transcript} of the protocol, denoted $\Pi(X)$ (or simply $\Pi$), is the concatenation of all messages sent over the channel. The \emph{communication cost} of $\Pi$ is the worst-case bit-length of its transcript.
\end{definition}

\begin{definition}[Conditional Information Cost]
While traditional communication complexity bounds the raw bit-length of the transcript, \emph{information complexity} measures the actual amount of information the transcript reveals about the inputs. For a protocol $\Pi$ running on inputs $X \sim \mu$ and utilizing public randomness $R$, the \emph{conditional information cost} is defined as the mutual information between the inputs and the transcript, conditioned on the public randomness:
\[
I_\mu(X_1, \ldots, X_t ; \Pi \mid R).
\]
By Shannon's source coding theorem, the expected communication cost of a protocol is always lower-bounded by its information cost: $\mathbb{E}[|\Pi|] \ge I_\mu(X ; \Pi \mid R)$. 
\end{definition}

\textbf{Reduction to streaming space.} 
A streaming algorithm $\mathcal{A}$ evaluated on a stream naturally induces a one-way communication protocol. 
By partitioning the stream into $t$ contiguous blocks assigned to $t$ players, Player $i$ simulates $\mathcal{A}$ on their block and passes the resulting memory state $\mathcal{M}$ as their message to Player $i+1$. 
Thus, proving a lower bound on the information cost of a distributed game strictly lower bounds the space complexity of any streaming algorithm capable of solving that game.

\subsection{Algorithmic Primitives and Probabilistic Tools}

In our upper bound constructions, we utilize the following standard sketching tools and probability structures:

\begin{lemma}[Concentration Inequalities]
We frequently apply standard probability tail bounds to control algorithm failure rates:
\begin{itemize}
    \item \textbf{Markov's Inequality:} For any non-negative random variable $X$ and $a > 0$, $\Pr[X \ge a] \le \frac{\mathbb{E}[X]}{a}$.
    \item \textbf{Chernoff Bound:} For a sum of independent indicator variables $X = \sum_{i=1}^k X_i$, the probability of a large relative deviation is bounded exponentially. For any $\delta > 0$, $\Pr[X \ge (1+\delta)\mathbb{E}[X]] \le \exp\left(-\frac{\delta^2 \mathbb{E}[X]}{2+ \delta}\right)$ and $\Pr[X \le (1-\delta)\mathbb{E}[X]] \le \exp\left(-\frac{\delta^2 \mathbb{E}[X]}{2}\right)$.
\end{itemize}
\end{lemma}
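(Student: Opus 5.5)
The statement just given consists of two standard tail bounds, and I would prove them in order, since the Chernoff bound follows from Markov's inequality applied to an exponential moment.

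\textbf{Markov's inequality.} Since $X\ge 0$, we have the pointwise inequality $a\cdot\mathds{1}[X\ge a]\le X$. Taking expectations gives $a\Pr[X\ge a]\le \mathbb{E}[X]$. Dividing by $a>0$ gives the claim.

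\textbf{Upper tail of the Chernoff bound.} Let $p_i=\Pr[X_i=1]$ and $\mu=\mathbb{E}[X]=\sum_i p_i$. Fix $\lambda>0$.
\begin{itemize}
\item Because $e^{\lambda x}$ is increasing, applying Markov's inequality to the nonnegative variable $e^{\lambda X}$ gives $\Pr[X\ge(1+\delta)\mu]\le \mathbb{E}[e^{\lambda X}]\,e^{-\lambda(1+\delta)\mu}$.
\item By independence, $\mathbb{E}[e^{\lambda X}]=\prod_i\bigl(1+p_i(e^\lambda-1)\bigr)$.
\item Using $1+y\le e^y$, this is at most $\exp\bigl(\mu(e^\lambda-1)\bigr)$.
\item Choosing $\lambda=\ln(1+\delta)$ yields the classical form $\Pr[X\ge(1+\delta)\mu]\le \bigl(e^{\delta}/(1+\delta)^{1+\delta}\bigr)^{\mu}$.
\end{itemize}
It then remains to show $(1+\delta)\ln(1+\delta)-\delta\ge \frac{\delta^2}{2+\delta}$ for all $\delta>0$. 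This follows from the inequality $\ln(1+\delta)\ge \frac{2\delta}{2+\delta}$. Substituting it gives
\[
(1+\delta)\ln(1+\delta)-\delta\;\ge\;\frac{2\delta(1+\delta)}{2+\delta}-\delta\;=\;\frac{\delta^2}{2+\delta}.
\]
The inequality $\ln(1+\delta)\ge \frac{2\delta}{2+\delta}$ is checked by differentiation: set $g(\delta)=\ln(1+\delta)-\frac{2\delta}{2+\delta}$. Then $g(0)=0$ and
\[
g'(\delta)=\frac{1}{1+\delta}-\frac{4}{(2+\delta)^2}=\frac{\delta^2}{(1+\delta)(2+\delta)^2}\ge 0 .
\]

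\textbf{Lower tail of the Chernoff bound.} For $0<\delta<1$, apply the same argument to $e^{-\lambda X}$ with $\lambda>0$, and choose $\lambda=-\ln(1-\delta)$. This gives
\[
\Pr[X\le(1-\delta)\mu]\le \bigl(e^{-\delta}/(1-\delta)^{1-\delta}\bigr)^{\mu}.
\]
It remains to show $(1-\delta)\ln(1-\delta)\ge -\delta+\delta^2/2$. Let $h(\delta)=(1-\delta)\ln(1-\delta)+\delta-\delta^2/2$. Then $h(0)=0$ and $h'(\delta)=-\ln(1-\delta)-\delta$. The Taylor series $-\ln(1-\delta)=\delta+\delta^2/2+\delta^3/3+\cdots$ shows $h'(\delta)\ge 0$ on $[0,1)$, so $h\ge 0$ there.

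For $\delta\ge 1$ the lower-tail statement needs separate handling. The event $X\le(1-\delta)\mu$ is either $X\le 0$ (when $\delta=1$) or empty (when $\delta>1$). For $\delta=1$, the bound $\Pr[X=0]=\prod_i(1-p_i)\le e^{-\mu}\le e^{-\mu/2}$ suffices, and for $\delta>1$ the probability is zero.

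\textbf{Main obstacle.} No step is deep. The only place needing care is the two elementary logarithmic inequalities that convert the optimized moment-generating-function bounds into the stated forms $\frac{\delta^2}{2+\delta}$ and $\frac{\delta^2}{2}$. I would verify both by the derivative arguments given above.
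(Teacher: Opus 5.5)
Your proof is correct. It gives the standard textbook derivation: Markov's inequality from a pointwise bound, and both Chernoff tails via Markov applied to $e^{\pm\lambda X}$, the estimate $1+y\le e^y$, and the two elementary logarithmic inequalities, whose derivative computations check out. The paper gives no proof here and simply quotes these as standard tools.

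There is one trivial slip. For $\delta>1$, the event $\{X\le(1-\delta)\mu\}$ is empty only when $\mu>0$. When $\mu=0$ it has probability $1$, but the right-hand side is then $e^{0}=1$, so the bound still holds.
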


\begin{definition}[$k$-wise Independent Hash Functions]
A family of hash functions $\mathcal{H} = \{h: [n] \to [M]\}$ is \emph{$k$-wise independent} if, for any $k$ distinct elements $x_1, \ldots, x_k \in [n]$ and any target values $y_1, \ldots, y_k \in [M]$, a function $h$ chosen uniformly at random from $\mathcal{H}$ satisfies:
\[
\Pr[h(x_1) = y_1 \land \ldots \land h(x_k) = y_k] = \frac{1}{M^k}.
\]
Storing such a function requires only $O(k \log(\max(n, M)))$ bits, which provides highly memory-efficient pseudo-randomness. We utilize pairwise ($2$-wise) and $4$-wise independent hash functions in our sketches.
\end{definition}

\begin{theorem}[AMS Sketch]
\cite{AlonMS99}
There exists a streaming algorithm that uses $O\left(\frac{1}{\eps^2}\log n\right)$ bits of space and outputs a $(1+\eps)$-approximation to the $F_2$ moment with constant probability. 
\end{theorem}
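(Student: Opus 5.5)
The plan is the classical second-moment argument: build an unbiased estimator of $F_2$ from a single random linear counter, bound its variance using only $4$-wise independence, and average $O(1/\eps^2)$ independent copies, finishing with Chebyshev's inequality. We use the $4$-wise independent hash families from the preliminaries, and we assume, as the paper does throughout, that the stream length is $m=\mathrm{poly}(n)$.

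\textbf{Basic estimator.} Draw a $4$-wise independent family of signs $\sigma:[n]\to\{\pm1\}$; the definition above with $M=2$, after relabeling, gives such a family. Maintain the single counter $Z=\sum_{i\in[n]}\sigma(i)f_i$: on each arrival of item $i$, add $\sigma(i)$ to $Z$.

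\textbf{First moment.} Expand $Z^2=\sum_i f_i^2+\sum_{i\ne j}\sigma(i)\sigma(j)f_if_j$. Pairwise independence makes every cross term have zero mean, so $\mathbb{E}[Z^2]=F_2$.

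\textbf{Second moment.} Expand $Z^4$ as a sum of terms $\sigma(i)\sigma(j)\sigma(k)\sigma(l)f_if_jf_kf_l$. By $4$-wise independence, a term has nonzero expectation only when its indices pair up. This gives
\[
\mathbb{E}[Z^4]=\sum_i f_i^4+6\sum_{i<j}f_i^2f_j^2 ,
\]
and hence $\mathrm{Var}(Z^2)=4\sum_{i<j}f_i^2f_j^2\le 2F_2^2$.

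\textbf{Averaging.} Run $k=\lceil 2/(\eps^2\cdot 0.1)\rceil=O(1/\eps^2)$ independent copies, each with its own independently seeded hash function, and output the average $Y$ of the $Z_r^2$. Then $\mathbb{E}[Y]=F_2$ and $\mathrm{Var}(Y)\le 2F_2^2/k$. Chebyshev's inequality gives $\Pr\bigl[|Y-F_2|>\eps F_2\bigr]\le 2/(k\eps^2)\le 0.1$, which is the constant success probability in the statement. A median of such averages would boost this to $1-\delta$, but that is not needed here.

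\textbf{Space.} Each counter satisfies $|Z_r|\le m=\mathrm{poly}(n)$, so it fits in $O(\log n)$ bits. Each $4$-wise independent hash function needs $O(\log n)$ bits of seed, by the storage bound stated in the definition. Over $k$ copies the total is $O(\frac{1}{\eps^2}\log n)$ bits.

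The only step needing care is the fourth-moment computation. We must check that $4$-wise independence, rather than full independence, already kills every term with an unpaired index, and we must count the $\binom{4}{2}=6$ ways of forming two distinct pairs correctly, since that count yields the constant $2$ in the variance bound. Everything else is routine bookkeeping.
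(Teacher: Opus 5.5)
Your proof is correct and takes the same route the paper indicates. The paper does not reprove the statement: it cites \cite{AlonMS99} and remarks that the algorithm keeps $Z=\sum_i \sigma(i) f_i$ for a $4$-wise independent sign hash $\sigma$, with $Z^2$ an unbiased estimator of bounded variance, which is exactly what you verify (fourth-moment count, $\mathrm{Var}(Z^2)\le 2F_2^2$, averaging $O(1/\eps^2)$ copies, Chebyshev, and $O(\log n)$ bits per counter and seed under the paper's stream-length-polynomial-in-$n$ assumption).
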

We remark that the AMS algorithm utilizes a $4$-wise independent hash function $\sigma: [n] \to \{-1, 1\}$ and maintains a linear counter $Z = \sum_{j=1}^m \sigma(a_j) = \sum_{i=1}^n \sigma(i) f_i$, whose square $Z^2$ serves as an unbiased estimator for $F_2$ with bounded variance.

\begin{theorem}[Morris Counters]
\cite{morris1978counting}
There exists a streaming algorithm that outputs a constant-factor approximation to the count of size $n$ with constant probability, using $O(\log\log n)$ bits of space. 
\end{theorem}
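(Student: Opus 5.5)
The plan is to analyze the classical Morris counter with a slightly smaller base. Fix a constant $a\in(0,1]$ and maintain a single integer $X$, initialized to $0$. On each increment, set $X\leftarrow X+1$ with probability $(1+a)^{-X}$ and otherwise leave $X$ unchanged. At query time, output $\hat n=\frac{(1+a)^{X}-1}{a}$. Write $X_n$ for the value after $n$ increments. The argument has three steps: an exact computation of the first two moments of $(1+a)^{X_n}$, Chebyshev's inequality for the approximation guarantee, and Markov's inequality for the space bound.

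\emph{First moment.} Conditioning on $X_{n}$, the update gives
\[
\mathbb{E}\big[(1+a)^{X_{n+1}}\mid X_n\big]=(1+a)^{X_n}+(1+a)^{-X_n}\big((1+a)^{X_n+1}-(1+a)^{X_n}\big)=(1+a)^{X_n}+a .
\]
Inducting from $X_0=0$ yields $\mathbb{E}[(1+a)^{X_n}]=1+an$. Hence $\mathbb{E}[\hat n]=n$, so the estimator is unbiased.

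\emph{Second moment.} The same conditioning gives
\[
\mathbb{E}\big[(1+a)^{2X_{n+1}}\mid X_n\big]=(1+a)^{2X_n}+(2a+a^2)(1+a)^{X_n}.
\]
Taking expectations and plugging in the first moment, then summing the recursion, gives
\[
\mathbb{E}[(1+a)^{2X_n}]=1+(2a+a^2)\Big(n+\tfrac{a n(n-1)}{2}\Big).
\]
Subtracting $(1+an)^2$ shows $\mathrm{Var}[(1+a)^{X_n}]=\frac{a^3n(n-1)}{2}$. Therefore $\mathrm{Var}[\hat n]\le \frac{a n^2}{2}$. This exact bookkeeping is the step I expect to need the most care, though it is routine.

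\emph{Approximation guarantee.} Chebyshev's inequality gives
\[
\Pr\big[|\hat n-n|>n/2\big]\le \frac{a n^2/2}{n^2/4}=2a .
\]
Choosing $a=1/6$ makes this at most $1/3$, so $\hat n\in[n/2,3n/2]$ with probability at least $2/3$, which is a constant-factor approximation.

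\emph{Space.} The algorithm only stores $X$. By Markov's inequality applied to the first moment, $\Pr[(1+a)^{X_n}>10(1+an)]\le 1/10$. So with probability at least $9/10$ we have $X_n\le \log_{1+a}(10(1+an))=O(\log n)$ for constant $a$. Storing $X$ then takes $O(\log\log n)$ bits. To make the space bound worst-case, the algorithm can declare failure if $X$ ever exceeds this threshold; since $X$ is monotone, this only happens on the event of probability at most $1/10$. A union bound over the two failure events leaves success probability at least $2/3-1/10>1/2$, which is a constant, as the theorem requires.
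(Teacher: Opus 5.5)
Your proof is correct. Both conditional recursions, the exact variance $\mathrm{Var}[(1+a)^{X_n}]=a^3n(n-1)/2$, the Chebyshev step with $a=1/6$, and the Markov-plus-monotonicity cap all check out. The cap does presuppose that $n$ (or a polynomially related upper bound) is known in advance, which is consistent with the paper's phrase ``to track a count up to $N$''.

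The paper gives no proof of this statement. It cites Morris and only describes the base-$2$ counter (increment $c$ with probability $2^{-c}$), noting that it is unbiased. The real difference in your route is the base $1+a$ with a small constant $a$. If you ran the same argument with base $2$ (your $a=1$), you would get $\mathrm{Var}[\hat n]=n(n-1)/2$. Chebyshev alone would then certify $\hat n\in[(1-\lambda)n,(1+\lambda)n]$ with $\lambda<1$ only with probability about $1-1/(2\lambda^2)<1/2$. That meets the literal statement, but not the confidence above $1/2$ needed for median amplification; getting there with base $2$ requires Cantelli's inequality or a direct tail analysis of $X_n$ around $\log_2 n$. Shrinking the base makes plain Chebyshev suffice, at a cost of only $O(\log(1/a))$ extra bits.

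Keeping $a$ as a parameter also gives you more than the statement asks for. Your computation shows $\mathrm{Var}[\hat n]\le \frac{a}{2}n^2$ and $X_n=O(a^{-1}\log(2+an))$ with constant probability, i.e., $O(\log(1/a)+\log\log n)$ bits. This is essentially the ``Generalized Morris counter with variance $\gamma$'' that the paper invokes without proof in Theorem~\ref{thm:l2_sparse_improved}, with $a=\Theta(\gamma)=\Theta(\eps^2/k^2)$.
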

To track a count up to $N$, a Morris counter explicitly stores a logarithmic value $c$ (requiring only $O(\log \log N)$ bits). 
To update the counter, $c$ is incremented with probability $2^{-c}$, giving an unbiased estimator for the true count.

\begin{theorem}[$F_0$ Estimators]
\cite{kane2010optimal,Blasiok20}
\label{thm:fzero:monitor}
There exists a streaming algorithm that uses $O\left(\frac{1}{\eps^2}\log\frac{1}{\delta}+\log n\right)$ bits of space and with constant probability, outputs a $(1+\eps)$-approximation to the $F_0$ moment at all points in the stream. 
\end{theorem}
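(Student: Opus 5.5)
The plan is to treat this as a black-box consequence of the optimal fixed-time $F_0$ estimators of \cite{kane2010optimal,Blasiok20}, upgraded to an ``all points in the stream'' guarantee using insertion-only monotonicity and a union bound. The base ingredient is an estimator which, for a \emph{fixed} time $\tau$, outputs a $(1\pm\eps)$-approximation to $F_0$ of the prefix with probability $1-\delta$. It uses $O\left(\frac{1}{\eps^2}\log\frac{1}{\delta}+\log n\right)$ bits: the $\log n$ term pays for the hash functions and a rough constant-factor level estimate, and the $\eps^{-2}\log\frac1\delta$ term pays for the compressed bucket/level counters. I would take this fixed-time statement as given from the cited works.

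\textbf{Step 1 (monotonicity).} Check that both the true quantity and the estimator's output are nondecreasing in time.
\begin{itemize}
\item $F_0$ of the prefix is nondecreasing because the stream is insertion-only.
\item For the estimator, this requires inspecting its structure. In the KNW/Blasiok sketches, each stored quantity (deepest subsampling level reached per bucket, occupied-bucket indicators, and so on) only grows as new distinct items arrive. So I would argue that the reported estimate is a monotone function of the sketch state.
\item If the black-box estimator is not literally monotone, I would modify it to output the running maximum of its estimates. This still needs correctness at the checkpoints defined next.
\end{itemize}

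\textbf{Step 2 (checkpoints).} Define checkpoint times $\tau_k$ as the first time the prefix $F_0$ reaches $(1+\eps)^k$, for $k=0,\dots,K$ with $K=O(\eps^{-1}\log n)$. These times depend only on the (fixed, oblivious) stream, not on the algorithm's randomness. So the fixed-time guarantee applies at each $\tau_k$. A union bound gives simultaneous correctness at all checkpoints with probability $1-K\delta$.

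\textbf{Step 3 (sandwiching).} For an arbitrary time $\tau$ with $\tau_k\le\tau<\tau_{k+1}$, combine monotonicity with the checkpoint guarantees:
\[
(1-\eps)F_0(\tau_k)\le \hat F_0(\tau)\le (1+\eps)F_0(\tau_{k+1})\le(1+\eps)^2F_0(\tau).
\]
Since $F_0(\tau)<(1+\eps)F_0(\tau_k)$, we also get $\hat F_0(\tau)\ge (1-O(\eps))F_0(\tau)$. Rescaling $\eps$ by a constant then gives a $(1\pm\eps)$-approximation at all times.

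\textbf{Parameters and the main obstacle.} Setting the per-time failure probability to $\delta=\Theta(\eps/\log n)$ makes $K\delta$ a small constant, which gives constant overall success probability. The space becomes $O\left(\frac{1}{\eps^2}\log\frac{1}{\delta}+\log n\right)$ with $\delta$ in the form claimed. This is how I would read the theorem's $\delta$: as the per-checkpoint parameter, which is polylogarithmically small. The main obstacle I expect is Step~1, namely certifying that the optimal-space estimator's output is genuinely monotone, or that the running-max fix preserves the upper bound. I would first try the running-max fix, since its upper bound follows directly from correctness at the next checkpoint $\tau_{k+1}$ together with the previous checkpoints. Storing the running max costs only $O(\log n)$ extra bits, which the stated bound absorbs.
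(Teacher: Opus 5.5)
The paper gives no proof of this statement. It is imported as a black box from \cite{kane2010optimal,Blasiok20}, and the all-times guarantee is the separate strong-tracking theorem of \cite{Blasiok20}: space $O(\eps^{-2}(\log\log n+\log\frac{1}{\delta})+\log n)$ for failure probability $\delta$ simultaneously at all times. Your reading of $\delta$ as a polylogarithmically small per-checkpoint parameter fits that bound, and it matches how the paper later uses the result. Your Steps~2--3 are the standard argument for trackers whose output is a nondecreasing function of a nondecreasing state, such as max-lsb registers with a threshold rule like the RoughEstimator of \cite{kane2010optimal}.

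\textbf{The gap is Step~1.} It carries the whole difficulty, and neither of your branches closes it. The space-optimal $(1\pm\eps)$ estimators are not monotone. In \cite{kane2010optimal} the bucket counters are stored as offsets from a base level $b$. This base jumps whenever the rough estimate crosses a threshold, and counters that fall below the new base are reset to $-1$, so information is discarded. The reported value is $2^b\ln(1-T/K)/\ln(1-1/K)$, where $T$ is the number of buckets at level at least $b$. At each re-basing $T$ drops by a random amount, and the output can decrease. So your claim that every stored quantity (including the occupied-bucket indicators) only grows is false for these sketches. Without monotonicity both halves of your sandwich fail, because $\hat F_0(\tau)\ge\hat F_0(\tau_k)$ and $\hat F_0(\tau)\le\hat F_0(\tau_{k+1})$ are exactly monotonicity statements.

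\textbf{The running-maximum fix repairs only the lower bound.} Your union bound gives correctness at $\tau_0,\dots,\tau_K$ and says nothing about $\hat F_0(s)$ at non-checkpoint times $s$. Yet $\max_{s\le\tau}\hat F_0(s)$ ranges over all of those times. A single overestimate at one intermediate time is carried forward to every later $\tau$ until the true $F_0$ catches up, and nothing in your argument excludes it. If $\hat F_0$ were monotone, the running max would coincide with $\hat F_0$, so the fix is either redundant or unjustified. Excluding such excursions by a union bound over every time step would put $\log n$ or $\log m$, rather than $\log\frac{\log n}{\eps}$, inside the $\eps^{-2}\log(\cdot)$ term.

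\textbf{What is missing} is control of the estimator between consecutive checkpoints, uniformly in time. The re-basing times are set by the algorithm's own randomness, so a fixed-time guarantee cannot simply be union-bounded over them. You would need one of two things. Either exhibit an estimator within the stated space that genuinely is a monotone function of a monotone state, and verify its space and tail bounds. Or prove a maximal, uniform-in-time concentration bound within each checkpoint interval for an existing estimator; this is what the strong-tracking theorem of \cite{Blasiok20} supplies.
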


\section{Insertion-Only Lower Bounds for \texorpdfstring{$F_2$}{F2} Estimation}
\label{sec:insertion}

We present a strengthened space lower bound for the problem of $F_2$ estimation in the insertion-only model, incorporating the optimal dependence on the failure probability $\delta$. The proof strategy closely follows the multi-scale framework introduced by \cite{BravermanZ25}, but replaces their underlying communication problem with a robust variant that accounts for error probability.

\thmmain*

We utilize standard concepts from information theory. For a communication protocol $\Pi$ with inputs $X$ drawn from a distribution $\mu$ and public randomness $R$, the conditional information cost is the mutual information $I(X; \Pi(X) | R)$ between the input and the protocol transcript, conditioned on the public randomness. The communication cost of any protocol is an upper bound on its information cost. For a streaming algorithm processing a random stream $X$, we analyze the information $I(X; \mathcal{M})$ that the memory state $\mathcal{M}$ contains about the stream.

\subsection{The Exam Mostly Set Disjointness Problem}

The core of the streaming space lower bound in \cite{BravermanZ25} relies on a communication game called ``Exam Set Disjointness'' (EDISJ). However, EDISJ is highly sensitive to noise; a protocol can fail to identify a unique intersecting element by simply making an error on a single critical coordinate. Consequently, it cannot yield the robust $\Omega(\log(1/\delta))$ information cost bounds required to establish optimal dependencies on the failure probability $\delta$. 
To overcome this limitation, we introduce a noise-robust variant inspired by the ``Mostly Set Disjointness'' problem introduced by Kamath et al.\ \cite{KPW21}. We first define a technical relaxation of disjointness that permits a bounded amount of noise in the sets.

\begin{definition}[$\kappa$-Almost-Disjointness]
    A collection of sets $S_1, \ldots, S_t \subseteq U$ is said to be $\kappa$-almost-disjoint if the total number of non-unique element memberships is at most $\kappa$. Formally, the number of pairs $(i, x)$ where $x$ appears in $S_i$ and at least one other set $S_j$ is bounded by $\kappa$:
    \[
    \big|\{ (i, x) \in [t] \times U : x \in S_i \text{ and } x \in S_j \text{ for some } j \neq i \}\big| \leq \kappa.
    \]
\end{definition}

With this relaxation, we define our core communication primitive.

\begin{definition}[Exam Mostly Set Disjointness (EMostlyDISJ)]
    The setup involves $t$ players and one referee. Let $U$ be a universe of items. The inputs are sets $S_1, \ldots, S_t \subseteq U$ for the players and an element $j \in U$ for the referee. The input sets are promised to satisfy exactly one of two conditions:
    \begin{itemize}
        \item \textbf{NO Instance:} The sets $S_1, \ldots, S_t$ are globally $\kappa$-almost-disjoint.
        \item \textbf{YES Instance:} There is a unique element $j_0 \in U$ that is common to at least $ct$ of the sets for some constant $c \in (0,1)$, and the remaining elements across the sets are $\kappa$-almost-disjoint.
    \end{itemize}
    The communication is one-way from player $1$ to player $t$, and finally to the referee. With failure probability at most $\delta$, the referee must decide if the input is a YES instance and the heavily intersecting element $j_0$ is exactly equal to the referee's target element $j$.
\end{definition}

\subsubsection{The One-Bit Problem: \texorpdfstring{$F_{ct,t}$}{Fctt}}

To establish the information complexity of EMostlyDISJ, we reduce it to a primitive one-bit problem, much like how standard Set Disjointness reduces to the multi-party AND function. 

In the $F_{ct,t}$ problem, each of the $t$ players receives a single bit $Y_i \in \{0,1\}$. Their goal is to distinguish between two cases regarding the Hamming weight of the joint input vector $Y=(Y_1, \ldots, Y_t)$: a NO instance where the Hamming weight is at most $1$, and a YES instance where the Hamming weight is at least $ct$.

We analyze the information cost of this problem with respect to a hard distribution $\nu$ over NO instances, adapting the approach of \cite{KPW21}. By bounding the information revealed on NO instances, we can embed many instances of this problem into a continuous stream.

\begin{definition}[Hard Distribution $\nu$]
    Let $R$ be a public random variable drawn uniformly from the player indices $[t] = \{1, 2, \ldots, t\}$. The distribution $\nu$ over the joint inputs $Y \in \{0,1\}^t$ is defined conditionally based on $R$: if $R=i$, the input $Y$ is the all-zeros vector $0^t$ with probability $1/2$, and the standard basis vector $e_i$ (where only the $i$-th coordinate is $1$) with probability $1/2$.
\end{definition}

\begin{lemma}[Conditional Information Cost on $\nu$, adapted from \cite{KPW21}]\label{lem:sparse_cost_formal}
Let $\Pi$ be a one-way communication protocol that solves $F_{ct,t}$ with failure probability at most $\delta$. 
Let the inputs $Y$ be drawn from the distribution $\nu$. 
The conditional information cost of the protocol with respect to $\nu$ is lower-bounded by:
\[I_\nu(Y; \Pi \mid R) \geq \Omega\left(\min\left(1,\frac{1}{t} \log\left(\frac{1}{\delta}\right)\right)\right).\]
\end{lemma}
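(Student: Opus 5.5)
We follow the Kamath–Price–Woodruff approach for the one-bit problem. Work in the one-way model: player $i$ sees $Y_i$ and the message $M_{i-1}$ from player $i-1$, then sends $M_i$; the transcript $\Pi$ is essentially $(M_1,\dots,M_t)$. For $b\in\{0,1\}^t$ let $\Pi_b$ be the distribution of the transcript on input $b$. By the chain rule and the conditional structure of $\nu$,
\[
I_\nu(Y;\Pi\mid R)=\frac1t\sum_{i=1}^t I(Y;\Pi\mid R=i)\;\ge\;\frac{c_0}{t}\sum_{i=1}^t h^2(\Pi_{0^t},\Pi_{e_i}).
\]
Here $h$ is Hellinger distance. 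Given $R=i$, the input $Y$ is a uniform bit selecting between $0^t$ and $e_i$, and the mutual information of such a bit with $\Pi$ is at least a constant times the squared Hellinger distance between the two conditional transcript laws. So it suffices to show
\[
\sum_i h^2(\Pi_{0^t},\Pi_{e_i})=\Omega\bigl(\min(t,\log(1/\delta))\bigr).
\]

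\textbf{Step 1: tensorize the Hellinger affinities (cut-and-paste).} For one-way (or general) private-coin protocols, the transcript probability factorizes as $\Pr[\Pi=\tau\mid b]=\prod_i q_i(\tau,b_i)$. From this, the Bhattacharyya coefficient $\mathrm{BC}(P,Q)=\sum_\tau\sqrt{P(\tau)Q(\tau)}$ satisfies a tensorization-type inequality. Concretely, fix a YES input $z=1_S$ with $|S|=\lceil ct\rceil$ and pair up the coordinates. Changing one coordinate at a time and using the rectangle property gives
\[
\mathrm{BC}(\Pi_{0^t},\Pi_{z})\;\ge\;\prod_{i\in S}\mathrm{BC}(\Pi_{0^t},\Pi_{e_i})
\]
up to constants in the exponent. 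I would prove this in the form $1-h^2(\Pi_0,\Pi_z)\ge\prod_{i\in S}(1-O(h^2(\Pi_0,\Pi_{e_i})))$, as in KPW21's analysis of $F_{ct,t}$. This step is the main obstacle: the naive triangle inequality only gives additive Hellinger bounds, which lose the $\log(1/\delta)$. The multiplicative form requires writing $\Pi_z(\tau)=\Pi_0(\tau)\prod_{i\in S}\frac{q_i(\tau,1)}{q_i(\tau,0)}$ and using Cauchy–Schwarz/Hölder over the product structure. For one-way protocols I would first try a direct induction over players $i=1,\dots,t$, conditioning on the prefix of messages.

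\textbf{Step 2: use correctness.} The protocol errs with probability at most $\delta$ on both $0^t$ (NO) and $z$ (YES), so the outputs separate. This forces $\mathrm{BC}(\Pi_0,\Pi_z)\le 2\sqrt{\delta}$, since the overlap is bounded by $\sqrt{\Pr_0[\text{YES}]}+\sqrt{\Pr_z[\text{NO}]}$.

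\textbf{Step 3: combine.} Combining Steps 1 and 2 gives
\[
\sum_{i\in S}-\log\bigl(1-O(h_i^2)\bigr)\ge \tfrac12\log(1/\delta)-O(1).
\]
Now average over a uniformly random $S$ of size $ct$, or choose $S$ to be the $ct$ coordinates with smallest $h_i^2$. Let $a=\frac1t\sum_i h_i^2$. Since each $h_i^2\le1$, either some constant fraction of the coordinates have $h_i^2=\Omega(1)$, which gives $\sum_i h_i^2=\Omega(t)$, or the $ct$ smallest values have $-\log(1-O(h_i^2))=O(h_i^2)$. In the latter case $\sum_{i\in S}h_i^2=\Omega(\log(1/\delta))$ whenever $\log(1/\delta)\ge C$. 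Either way,
\[
\sum_i h_i^2=\Omega\bigl(\min(t,\log(1/\delta))\bigr).
\]
Dividing by $t$ yields the claimed bound $\Omega(\min(1,\frac1t\log\frac1\delta))$. When $\delta$ is a constant, the bound $\Omega(1/t)$ follows from the same argument applied to a constant-gap version.
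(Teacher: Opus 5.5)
\textbf{There is a genuine gap: your Step 1 is false for one-way protocols, and the whole argument rests on it.} Your reduction to Hellinger distances is fine, and so is Step 2's bound $\mathrm{BC}(\Pi_{0^t},\Pi_z)\le 2\sqrt{\delta}$. The pointwise factorization $\Pi_z(\tau)=\Pi_{0^t}(\tau)\prod_{i\in S}r_i(\tau)$ is also correct. The problem is that under $\Pi_{0^t}$ the factors $\sqrt{r_i}$ are not independent: $r_i$ depends on earlier messages. So $\mathbb{E}_{\Pi_{0^t}}[\prod_{i\in S}\sqrt{r_i}]$ can be far below $\prod_{i\in S}\mathbb{E}_{\Pi_{0^t}}[\sqrt{r_i}]$. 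The Bhattacharyya coefficient tensorizes only for simultaneous (product) protocols, and cut-and-paste arguments give only additive bounds, which saturate at a constant.

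\emph{Counterexample to Step 1.} Let $S=\{1,\dots,k+1\}$ with $k+1=\lceil ct\rceil$.
For $i\le k$, player $i$ appends a bit $m_i$ with $\Pr[m_i=1\mid Y_i=0]=1-a$ and $\Pr[m_i=1\mid Y_i=1]=1$.
Player $k+1$ appends $Y_{k+1}$ if $m_1=\dots=m_k=1$, and appends $0$ otherwise; all other players are silent.
Then:
\begin{itemize}
\item $h^2(\Pi_{0^t},\Pi_{e_i})=1-\sqrt{1-a}\le a$ for $i\le k$;
\item $h^2(\Pi_{0^t},\Pi_{e_{k+1}})=(1-a)^k\le e^{-ak}$;
\item $\Pi_{0^t}$ always has $m_{k+1}=0$, while $\Pi_{1_S}$ always has $m_{k+1}=1$, so $\mathrm{BC}(\Pi_{0^t},\Pi_{1_S})=0$.
\end{itemize}
Take $a=L/k$ with $L$ a constant satisfying $Ce^{-L}<1$, and let $k$ be large. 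Every $h_i^2$ is then small and $\sum_{i\in S}h_i^2\le L+e^{-L}=O(1)$. Yet Step 1 would assert $0\ge\prod_{i\in S}(1-Ch_i^2)\approx e^{-CL}(1-Ce^{-L})>0$.

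Worse, the referee rule ``YES iff $m_{k+1}=1$'' is error-free on the pair $(0^t,1_S)$. So Steps 1--3 would give $\sum_i h_i^2=\Omega(\min(t,\log(1/\delta)))$ for arbitrarily small $\delta$, which this protocol refutes. Correctness on $0^t$ and a single YES input cannot yield any $\log(1/\delta)$ dependence.

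\textbf{What is missing.} The bound has to come from correctness on the NO inputs $e_i$, together with the freedom in where the $ct$ ones sit. In the example, the protocol errs on $e_{k+1}$ with probability $(1-a)^k\approx e^{-L}$. That is exactly what forces $L$, and hence $\sum_i h_i^2$, to be $\Omega(\log(1/\delta))$. Exploiting this is the content of Lemma 3.7 of KPW21, whose proof deliberately avoids Hellinger tensorization.

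\textbf{How the paper proceeds.} The paper does not reprove that lemma; it uses it as a black box, giving $\Omega(1)$ information cost once $t\le c'\log(1/(2e\delta'))$. It reduces the general case to this regime as follows:
\begin{itemize}
\item run $r=\Theta(t/\log(1/\delta))$ independent copies of $\Pi$ and take a majority vote, so the error drops to $\delta'\le(4\delta)^{r/2}$;
\item apply the KPW21 bound to the boosted protocol;
\item use the chain rule and the independence of the copies to show that a single copy, hence $\Pi$ itself, carries $\Omega(1/r)=\Omega(\frac{1}{t}\log\frac{1}{\delta})$ information.
\end{itemize}
To repair your route you would need a replacement for Step 1 that genuinely uses the error bounds on the $e_i$. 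At that point you would essentially be reproving KPW21's lemma.
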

\begin{proof}
Assume without loss of generality that $\delta \le 1/4$. 
Lemma 3.7 of \cite{KPW21} establishes a constant lower bound for protocols with sufficiently small error. 
Specifically, if a protocol has a failure probability $\delta'$ that satisfies $t \leq c' \log(1/(2e\delta'))$ for some absolute constant $c'$, then its conditional information cost is bounded below by a constant $I_\nu(Y;\Pi \mid R, C) \geq \Omega(1)$,  where $C$ represents the public randomness. 

If the base protocol's failure probability $\delta$ already satisfies this condition ($t \leq c' \log(1/(2e\delta))$), then no amplification is needed. The lower bound is $\Omega(1)$ directly.

To handle arbitrary failure probabilities $\delta$, we use an information-theoretic amplification argument. 
Suppose we have a base protocol $\Pi$ that fails with probability $\delta$. We construct a boosted protocol $\Pi'$ by executing $\Pi$ independently $r$ times using independent private randomness and independent public coins $C_1, \dots, C_r$, and having the referee output the majority vote of the results. 
By a standard Chernoff bound, the failure probability of the boosted protocol $\Pi'$ is bounded by $\delta' \leq (4\delta)^{r/2}$.

We choose the number of repetitions to be $r = \Theta\left(\frac{t}{\log(1/\delta)}\right)$. 
This ensures that the amplified error $\delta'$ is driven down precisely to the threshold required by the condition from \cite{KPW21}. 
The transcript of the boosted protocol $\Pi'$ simply consists of $r$ independent transcripts from the base protocol, denoted $(\Pi^1, \ldots, \Pi^r)$. 
By applying the established lower bound for the small-error regime, we know the information cost of $\Pi'$ is bounded below by a constant:
\[\Omega(1) \leq I_\nu(\Pi^1, \ldots, \Pi^r; Y \mid R).\]
We now decompose this joint information using the chain rule for mutual information:
\[\Omega(1) \le \sum_{i=1}^r I_\nu(\Pi^i; Y \mid R, C_1, \dots, C_r, \Pi^{<i}),\]
where $\Pi^{<i}$ denotes the sequence of transcripts from the first $i-1$ repetitions. 
By an averaging argument, there must exist at least one index $i \in \{1, \ldots, r\}$ that contributes at least the average amount of information to the sum:
\[I_\nu(\Pi^i; Y \mid R, C_1, \dots, C_r, \Pi^{<i}) \ge \frac{\Omega(1)}{r} = \Omega\left(\frac{1}{t}\log\frac{1}{\delta}\right).\]
Finally, we decouple this term from the transcripts of the previous repetitions.
By expanding the conditional mutual information into Shannon entropies, we have:
\begin{align*}
I_\nu(\Pi^i; Y \mid R, C_1, \dots, C_r, \Pi^{<i}) &= H(\Pi^i \mid R, C_1, \dots, C_r, \Pi^{<i}) - H(\Pi^i \mid R, C_1, \dots, C_r, \Pi^{<i}, Y) \\
&\le H(\Pi^i \mid R, C_i, \Pi^{<i}) - H(\Pi^i \mid R, C_i, Y).
\end{align*}
The last inequality holds because the transcripts of separate repetitions are generated using independent private randomness and public coins; thus, conditioned on the exact inputs $Y$, public randomness $C_i$ and the selector $R$, $\Pi^i$ is entirely independent of the previous transcripts $\Pi^{<i}$ and other public coins. 
Next, we use the fundamental property that conditioning reduces entropy, which implies $H(\Pi^i \mid R, C_i, \Pi^{<i}) \le H(\Pi^i \mid R, C_i)$. 
Substituting this upper bound into our equation yields:
\[I_\nu(\Pi^i; Y \mid R, C_1, \dots, C_r, \Pi^{<i}) \le H(\Pi^i \mid R, C_i) - H(\Pi^i \mid R, C_i, Y) = I_\nu(\Pi^i; Y \mid R, C_i).\]
Because $\Pi^i$ is identical in distribution to a single invocation of our original protocol $\Pi$ utilizing public coins $C$, we conclude that the conditional information cost of the base protocol itself satisfies:
\[I_\nu(\Pi; Y \mid R, C) \ge \Omega\left(\frac{1}{t}\log\frac{1}{\delta}\right).\]
This completes the proof.
\end{proof}

\subsubsection{From Hard Distribution to Product Distribution}
To embed instances of the EMostlyDISJ communication game into a continuous streaming algorithm lower bound, we must transition from the artificially correlated hard distribution $\nu$ to a fully independent product distribution. 
We define $\mu_p$ as the product distribution over $\{0,1\}^t$ where each bit $Y_i$ is set to $1$ independently with probability $p$. 
Because data stream elements appear independently, simulating the stream naturally generates inputs drawn from $\mu_p$.

The following lemma, adapting a crucial technique from \cite{BravermanZ25}, establishes that the information cost lower bound holds even under this independent product distribution, scaling proportionately with $p$.

\begin{lemma}[Adapted from Lemma 4.4 of \cite{BravermanZ25}]\label{lem:info_relation_formal}
Let $\Pi$ be a communication protocol for $F_{ct,t}$ with failure probability at most $\delta$. 
Let $p < 1/t$. For the independent product distribution $\mu_p$, the information cost of the protocol is lower-bounded by:
\[I_{\mu_p}(Y; \Pi) \geq \Omega\left(p\cdot \min\left(t,\log\left(\frac{1}{\delta}\right)\right)\right).\]
\end{lemma}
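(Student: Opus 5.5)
We would derive this from Lemma~\ref{lem:sparse_cost_formal} by relating $\mu_p$ to $\nu$, following Lemma~4.4 of \cite{BravermanZ25}. The idea is that $\mu_p$ with $p<1/t$ is, up to constants, a mixture dominated by the all-zeros vector and the weight-one vectors $e_i$, and that information revealed about single coordinates can be charged against $\nu$.

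\textbf{Step 1: superadditivity over coordinates.} Under $\mu_p$ the bits $Y_1,\dots,Y_t$ are independent, so the chain rule gives
\[
I_{\mu_p}(Y;\Pi)=\sum_{i=1}^t I(Y_i;\Pi\mid Y_{<i})\ \ge\ \sum_{i=1}^t I(Y_i;\Pi\mid Y_{-i}) .
\]
Here we use the standard fact that for independent inputs, conditioning on the other coordinates can only increase the per-coordinate information, i.e.\ $I(Y_i;\Pi\mid Y_{<i})\ge I(Y_i;\Pi\mid Y_{<i},Y_{>i})$, which follows from $I(Y_i;Y_{>i}\mid Y_{<i})=0$. We then restrict to the event $Y_{-i}=0$, which has probability $(1-p)^{t-1}\ge (1-1/t)^{t-1}\ge 1/e$. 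Since mutual information terms are nonnegative,
\[
I_{\mu_p}(Y;\Pi)\ \ge\ \frac1e\sum_{i} I(Y_i;\Pi\mid Y_{-i}=0).
\]
On this event $Y_i\sim\mathrm{Bern}(p)$, and the input is $0^t$ or $e_i$.

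\textbf{Step 2: comparing $\mathrm{Bern}(p)$ with $\mathrm{Bern}(1/2)$.} Let $\pi_0,\pi_1$ be the transcript distributions on inputs $0^t$ and $e_i$. We have
\[
I(Y_i;\Pi\mid Y_{-i}=0)=(1-p)\,D(\pi_0\,\|\,\bar\pi)+p\,D(\pi_1\,\|\,\bar\pi),\qquad \bar\pi=(1-p)\pi_0+p\pi_1 .
\]
We claim this is $\Omega(p)$ times the $\mathrm{Bern}(1/2)$ quantity $\tfrac12 D(\pi_0\|\tfrac{\pi_0+\pi_1}{2})+\tfrac12 D(\pi_1\|\tfrac{\pi_0+\pi_1}{2})$. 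The term $p\,D(\pi_1\|\bar\pi)$ alone nearly suffices. Since $\bar\pi\le \pi_0+\pi_1 = 2\cdot\tfrac{\pi_0+\pi_1}{2}$, we have $D(\pi_1\|\bar\pi)\ge D(\pi_1\|\tfrac{\pi_0+\pi_1}{2})-1$, but the additive $-1$ loses too much. The cleaner route is the Hellinger-distance form used in \cite{BravermanZ25}: for small $p$, $I(\mathrm{Bern}(p);\Pi)\ge \Omega(p)\,h^2(\pi_0,\pi_1)$, while $I(\mathrm{Bern}(1/2);\Pi)\le O(1)\,h^2(\pi_0,\pi_1)$, the latter by the standard JS-divergence versus Hellinger comparison. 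Summing over $i$ gives
\[
I_{\mu_p}(Y;\Pi)\ \ge\ \Omega(p)\sum_i h^2(\pi_0,\pi_{e_i}) \ \ge\ \Omega(p\,t)\, I_\nu(Y;\Pi\mid R),
\]
because $I_\nu(Y;\Pi\mid R)=\frac1t\sum_i I(\mathrm{Bern}(1/2)\text{ on }\{0^t,e_i\};\Pi)$.

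\textbf{Step 3: conclude.} Applying Lemma~\ref{lem:sparse_cost_formal} yields $\Omega(pt)\cdot\Omega(\min(1,\tfrac1t\log\tfrac1\delta))=\Omega(p\min(t,\log\tfrac1\delta))$. Public randomness is handled by conditioning on it throughout and averaging at the end.

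\textbf{Main obstacle.} Step 2 is the hardest part: we need a lower bound $I(\mathrm{Bern}(p)\text{-mixture};\Pi)=\Omega(p)\,h^2(\pi_0,\pi_1)$ that carries no additive loss. We would establish it directly from the identity
\[
p\,D(\pi_1\|\bar\pi)+(1-p)\,D(\pi_0\|\bar\pi)\ \ge\ \sum_\tau \frac{p(1-p)\,(\pi_1(\tau)-\pi_0(\tau))^2}{\pi_0(\tau)+\pi_1(\tau)}\cdot\Omega(1),
\]
using a $\chi^2$/triangular-discrimination lower bound on KL divergence. We would split transcripts according to whether $\pi_1(\tau)\le 2\pi_0(\tau)$ and handle the region where $\pi_1$ dominates separately, since there $\log(\pi_1/\bar\pi)\gtrsim\log(1/p)\ge\Omega(1)$.
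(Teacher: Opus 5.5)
Your Step 1 rests on an inequality that points the wrong way, and this is a genuine gap. Because $I(Y_i;Y_{>i}\mid Y_{<i})=0$, the chain rule gives
\[
I(Y_i;\Pi\mid Y_{-i}) \;=\; I(Y_i;\Pi\mid Y_{<i}) + I(Y_i;Y_{>i}\mid \Pi, Y_{<i}) \;\ge\; I(Y_i;\Pi\mid Y_{<i}).
\]
This is what your sentence says in words, but it is the reverse of the inequality you display and then use. Independence alone does not give $I(Y;\Pi)\ge\sum_i I(Y_i;\Pi\mid Y_{-i})$. Take independent uniform bits $Y_1,Y_2$ and $\Pi=Y_1\oplus Y_2$: the left side is $1$ and the right side is $2$. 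So your bound $I_{\mu_p}(Y;\Pi)\ge\frac1e\sum_i I(Y_i;\Pi\mid Y_{-i}=0)$ is unjustified, and Steps 2--3 depend on it.

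The missing idea is the rectangle property of protocols. For fixed public coins, $\Pr[\Pi=\tau\mid Y=y]=\prod_k q_k(\tau,y_k)$, so under the product distribution $\mu_p$ the inputs remain mutually independent conditioned on the transcript. Hence $I(Y_i;Y_{>i}\mid\Pi,Y_{<i})=0$, the two per-coordinate terms coincide, and $I(Y;\Pi\mid C)=\sum_i I(Y_i;\Pi\mid Y_{-i},C)$ holds with equality.

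With that repair your route works, and the ``main obstacle'' in Step 2 disappears. The map $q\mapsto I(\mathrm{Bern}(q);\Pi)$ is concave in the prior and vanishes at $q=0$, so $I(\mathrm{Bern}(p);\Pi)\ge 2p\, I(\mathrm{Bern}(1/2);\Pi)$ for $p\le 1/2$. This yields $I_{\mu_p}(Y;\Pi\mid C)\ge\frac{2pt}{e}\,I_\nu(Y;\Pi\mid R,C)$ directly, with no Hellinger or $\chi^2$ splitting.

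The paper avoids the coordinatewise decomposition entirely. It defines a randomized function $D$ of $Y$ with $\Pr[D=1]=pt/2$ and $(Y\mid D=1)$ distributed as the marginal of $\nu$; this is possible because $\mu_p(0^t)\ge 1/4$ and $\mu_p(e_i)=\Omega(p)$. The Markov chain $D\to Y\to\Pi$ then gives $I_{\mu_p}(Y;\Pi\mid C)\ge\Pr[D=1]\,I_\nu(Y;\Pi\mid C)$, and $R\to Y\to\Pi$ passes to $I_\nu(Y;\Pi\mid R,C)$. That is the same concavity principle applied once to the whole input distribution. It needs no protocol structure and holds for any channel from $Y$ to $\Pi$, whereas your decomposition genuinely depends on $\Pi$ being a protocol transcript.
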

\begin{proof}
We shall define a probabilistic coupling between the product distribution $\mu_p$ and the hard distribution $\nu$. 
We achieve this by introducing an auxiliary indicator variable $D \in \{0,1\}$ that acts as a filter, where conditioning on $D=1$ forces $Y \sim \mu_p$ to perfectly emulate $Y \sim \nu$.

\paragraph{Step 1: Bounding probabilities under $\mu_p$.}
First, we evaluate the probabilities of the critical NO instances (the all-zeros vector $0^t$ and the standard basis vectors $e_i$) under the product distribution $\mu_p$. Since $p < 1/t$, the probability of observing the all-zeros vector is bounded below by a constant:
\[\mu_p(0^t) = (1-p)^t \geq \left(1 - \frac{1}{t}\right)^t \geq \frac{1}{4} \quad (\text{for } t \geq 2).\]
Similarly, the probability of observing any specific standard basis vector $e_i$ is:
\[\mu_p(e_i) = p(1-p)^{t-1} \geq p \left(1 - \frac{1}{t}\right)^{t-1} \geq \frac{p}{e} \geq \frac{p}{4}.\]

\paragraph{Step 2: Constructing the coupling variable $D$.}
Recall that under the marginalized hard distribution $\nu$, the probabilities are $\nu(0^t) = 1/2$ and $\nu(e_i) = 1/(2t)$. 
Because the probabilities of these events under $\mu_p$ strictly dominate their relative frequencies under $\nu$ (up to a scaling factor), we can define a boolean random variable $D$ as a randomized function of $Y \sim \mu_p$ such that:
\begin{enumerate}
\item 
Conditioned on $D=1$, the distribution of $Y$ is exactly the marginal distribution of $\nu$.
\item 
The probability of the coupling succeeding is maximized: $\Pr[D=1] = \Theta(p \cdot t)$.
\end{enumerate}
By Bayes' theorem, ensuring $(Y \mid D=1) \sim \nu$ requires setting the conditional probabilities as:
\[\Pr[D=1 \mid Y=y] = \frac{\nu(y) \Pr[D=1]}{\mu_p(y)}.\]
This is a valid probability assignment as long as the ratio on the right-hand side never exceeds $1$. 
For $y = 0^t$, the ratio $\nu(0^t)/\mu_p(0^t)$ is upper bounded by $(1/2) / (1/4) = 2$. 
For $y = e_i$, the ratio $\nu(e_i)/\mu_p(e_i)$ is upper bounded by $(1/(2t)) / (p/4) = 2/(pt)$. 
Thus by setting $\Pr[D=1] = pt/2$, it follows that all conditional probabilities are contained in $[0,1]$ (since $p < 1/t$, the probability for $0^t$ evaluates to $pt < 1$, and for $e_i$ evaluates to a constant), confirming that $\Pr[D=1] = \Theta(p \cdot t)$. 
For all other vectors $y \notin \{0^t, e_1, \ldots, e_t\}$, we simply set $\Pr[D=1 \mid Y=y] = 0$.

Because $D$ is generated solely based on the realized value of $Y$, the variables form a Markov chain $D \to Y \to \Pi$. 
This implies that, conditioned on $Y$, the protocol transcript $\Pi$ is entirely independent of $D$, yielding $I_{\mu_p}(D; \Pi \mid Y, C) = 0$.

\paragraph{Step 3: Chain rule expansion for $D$.}
We now use the chain rule for mutual information to relate the information cost under $\mu_p$ to the cost conditioned on the coupling success. 
We expand the joint mutual information $I_{\mu_p}(Y, D; \Pi)$ in two different ways:
\begin{align*}
I_{\mu_p}(Y; \Pi \mid C) &= I_{\mu_p}(Y, D; \Pi \mid C) - I_{\mu_p}(D; \Pi \mid Y, C) \\
&= I_{\mu_p}(Y, D; \Pi \mid C) \quad \text{(by the Markov property)} \\
&= I_{\mu_p}(D; \Pi \mid C) + I_{\mu_p}(Y; \Pi \mid D, C)
\end{align*}
By the non-negativity of mutual information, $I_{\mu_p}(D; \Pi \mid C) \ge 0$. 
Thus, we can expand the conditional mutual information over the states of $D$:
\begin{align*}
I_{\mu_p}(Y; \Pi \mid C) &\ge I_{\mu_p}(Y; \Pi \mid D, C) \\
&= \Pr[D=0] I_{\mu_p}(Y; \Pi \mid D=0, C) + \Pr[D=1] I_{\mu_p}(Y; \Pi \mid D=1, C) \\
&\ge \Pr[D=1] I_{\mu_p}(Y; \Pi \mid D=1, C).
\end{align*}
Because the distribution of $Y$ conditioned on $D=1$ is exactly $\nu$, the information cost in this branch is precisely $I_\nu(Y; \Pi \mid C)$. 
Substituting $\Pr[D=1] = \Theta(p \cdot t)$, we have:
\[I_{\mu_p}(Y; \Pi \mid C) \ge \Theta(p \cdot t) \cdot I_\nu(Y; \Pi \mid C).\]

\paragraph{Step 4: Chain rule expansion for $R$.}
We now have a bound in terms of the unconditional information cost $I_\nu(Y; \Pi \mid C)$. 
However, our base lower bound from Lemma \ref{lem:sparse_cost_formal} bounds the \emph{conditional} information cost $I_\nu(Y; \Pi \mid R, C)$. 
    
To bridge this gap, we observe the relationship between $Y$ and the public randomness $R$ under $\nu$. 
If $Y = e_i$, the chosen index must be $R = i$. If $Y = 0^t$, $R$ is uniformly distributed in $[t]$. 
Therefore, $R$ can be viewed as a randomized function of $Y$. 
Consequently, $R \to Y \to \Pi$ forms another Markov chain, meaning $R$ provides no additional information about the transcript $\Pi$ once $Y$ and $C$ are known. 
Thus, $I_\nu(R; \Pi \mid Y, C) = 0$.

Applying the chain rule again:
\begin{align*}
I_\nu(Y; \Pi \mid C) &= I_\nu(Y, R; \Pi \mid C) - I_\nu(R; \Pi \mid Y, C) \\
&= I_\nu(Y, R; \Pi \mid C) \\
&= I_\nu(R; \Pi \mid C) + I_\nu(Y; \Pi \mid R, C) \\
&\ge I_\nu(Y; \Pi \mid R, C).
\end{align*}

\paragraph{Step 5: Putting it all together.}
We concatenate the inequalities derived in Step 3 and Step 4, and substitute the conditional lower bound from Lemma \ref{lem:sparse_cost_formal}:
\begin{align*}
I_{\mu_p}(Y; \Pi \mid C) &\ge \Theta(p \cdot t) \cdot I_\nu(Y; \Pi \mid R, C) \\
&\ge \Theta(p \cdot t) \cdot \Omega\left(\min\left(1,\frac{1}{t}\log\frac{1}{\delta}\right)\right) \\
&= \Omega\left(p\cdot\min\left(t,\log\frac{1}{\delta}\right)\right).
\end{align*}
This establishes the required information cost bound for the independent product distribution.
\end{proof}

\subsubsection{Direct Sum for EMostlyDISJ}
We now apply an information-theoretic direct sum argument, following the framework of Braverman and Zamir \cite{BravermanZ25}, to lift the $\Omega(p \log(1/\delta))$ lower bound from the independent 1-bit $F_{ct,t}$ instances to the full EMostlyDISJ problem. 
Let $m$ be the total set size parameter across all players. 
To suppress spurious collisions and simulate disjointness, we assume the abstract universe $U$ is exceptionally large, specifically $|U| \geq m^4$. 
We set the independent inclusion probability for our product distribution to $p = m/(2t|U|)$. 
Before executing the direct sum, we must guarantee that sampling sets independently under $\mu_p$ naturally generates inputs that satisfy the NO instance promise of EMostlyDISJ: namely, that the resulting sets are $\kappa$-almost-disjoint with overwhelmingly high probability.

\begin{proposition}
\label{prop:almost_disjoint}
Let $U$ be a universe of size $|U| \ge m^4$. 
Suppose $t$ players construct sets $S_1, \ldots, S_t$ such that each $S_i$ is chosen uniformly and independently at random from all subsets of $U$ of size exactly $m/t$. 
With probability at least $1-\delta$, the resulting collection of sets is $\kappa$-almost-disjoint, where $\kappa = O(\log(1/\delta))$. 
\end{proposition}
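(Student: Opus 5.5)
We bound the number of non-unique memberships by a union bound over collisions. The bound will be $\kappa$ for large enough $\kappa = O(\log(1/\delta))$, and we can assume $\delta\ge 1/\mathrm{poly}$ is not relevant, since the tail we obtain is polynomially small in $m$ per collision.

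Let $s = m/t$ denote the common set size. Every non-unique membership $(i,x)$ is witnessed by some $j\neq i$ with $x\in S_j$. Hence the number $N$ of non-unique memberships is at most $2P$, where
\[
P=\big|\{(\{i,j\},x): i\neq j,\ x\in S_i\cap S_j\}\big|
\]
counts colliding pair-incidences. So it suffices to show $\Pr[P\ge k]\le\delta$ for $k=\lceil\kappa/2\rceil$.

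\textbf{Step 1: a single collision.} Fix an unordered pair $\{i,j\}$ and an element $x$. Since $S_i$ and $S_j$ are independent uniform $s$-subsets,
\[
\Pr[x\in S_i\cap S_j]=(s/|U|)^2 .
\]
Summing over the $\binom t2$ pairs and the $|U|$ elements gives
\[
\mathbb{E}[P]\le t^2 s^2/|U| = m^2/|U| \le m^{-2}.
\]

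\textbf{Step 2: $k$ collisions at once.} Bounding $\Pr[P\ge k]$ by $\mathbb{E}\binom{P}{k}$, we need to control, for any $k$ distinct triples $(\{i_\ell,j_\ell\},x_\ell)$, the probability that all occur simultaneously. The events involve memberships of the form $x\in S_i$. For a fixed set $S_i$, the memberships of $r$ distinct elements are negatively correlated: the probability that all $r$ lie in $S_i$ is $\binom{|U|-r}{s-r}/\binom{|U|}{s}\le (s/|U|)^r$. Since the sets are independent across players, the joint probability is at most $(s/|U|)^{D}$, where $D$ is the number of distinct (player, element) incidences required.

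\textbf{Step 3: counting configurations (the main obstacle).} The difficulty is that several triples can share incidences, for example one element lying in many sets. We handle this by grouping the $k$ triples by element. Suppose an element $x$ appears in $a$ of the triples and the union of the involved players has size $b\ge 2$. Then $a\le\binom b2$, the element contributes $D_x=b$ required incidences, and the number of choices is at most $|U|\,t^{b}$. So each element group contributes at most
\[
|U|\,t^{b}(s/|U|)^{b}=|U|(m/|U|)^{b}\le |U|^{1-b}m^{b}\le m^{4(1-b)+b}\le m^{-2}
\]
when $b=2$, and less for larger $b$. Since $b\ge 2$, it also holds that $b\ge\sqrt{a}$, giving
\[
|U|^{1-b}m^{b}\le m^{-(3b-4)}\le m^{-b}\le m^{-\sqrt a}.
\]
Multiplying over the groups and summing over partitions of $k$ into group sizes, with at most $2^{O(k)}$ compositions, yields
\[
\Pr[P\ge k]\le \mathbb{E}\binom Pk\le 2^{O(k)}m^{-\Omega(\sqrt k)}.
\]

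This bound is actually crude. To get a clean exponential tail, I would instead bound $\Pr[P\ge k]$ directly: charge each group by its distinct-incidence count $b$ and note that $P\ge k$ forces $\sum_x \binom{b_x}{2}\ge k$. The cleanest route is to bound the distinct incidences $N$ itself, using $N=\sum_x b_x\cdot\1[b_x\ge2]$. Then
\[
\Pr[N\ge \kappa]\le\sum_{\text{configs}} \prod_x |U|^{1-b_x}m^{b_x}\le \sum 2^{O(\kappa)} m^{-\kappa/2}\le (Cm^{-1/2})^{\kappa}.
\]

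\textbf{Step 4: choosing $\kappa$.} Since $m\ge 2$ can be assumed large, choosing $\kappa=C'\log(1/\delta)$ makes the tail at most $2^{-\kappa}\le\delta$. The delicate point is Step 3, specifically the combinatorial count of configurations where elements are shared by many sets. The plan is to count directly in terms of $N$, which is exactly the quantity in the definition of $\kappa$-almost-disjointness, rather than in terms of pairs $P$.
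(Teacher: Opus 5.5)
Your argument is correct once it rests entirely on the direct count for $N$ at the end of Step 3. The reduction to $P$ (your opening paragraph and the first half of Step 3) should be deleted, not just called crude, because it cannot give $\kappa=O(\log(1/\delta))$ in general. A single element lying in $b$ of the sets adds only $b$ to $N$ but $\binom{b}{2}$ to $P$. For example, with $s=1$ and $|U|=m^4$, this happens with probability about $m^{4-3b}/b!$. Taking $b\approx\sqrt{\kappa}$ shows $\Pr[P\ge\kappa/2]\ge m^{-O(\sqrt{\kappa})}$, which exceeds $\delta$ once $\log(1/\delta)\gg\log^2 m$.

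In the $N$ count, make two points explicit. First, if $N\ge\kappa$, then the realized configuration (every $x$ with $b_x\ge 2$, together with its full set of players) has total incidence $D=N\ge\kappa$, so you can union bound over all $D\ge\kappa$. A fixed composition of $D$ into parts $b_\ell\ge 2$ contributes at most $\prod_\ell |U|^{1-b_\ell}m^{b_\ell}\le\prod_\ell m^{4-3b_\ell}\le m^{-D}$, which is stronger than your $m^{-\kappa/2}$. There are at most $2^D$ compositions, so $\Pr[N\ge\kappa]\le\sum_{D\ge\kappa}(2/m)^D$. Second, you do not need to assume $m$ is large. Since $N\le m$, the event is empty when $m<\kappa$, and for $m\ge\kappa\ge 4$ the sum is at most $2^{1-\kappa}$.

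Compared with the paper: it unions over a choice of $\kappa$ non-unique slots and a partner for each ($\binom{m}{\kappa}m^\kappa$ choices). It then bounds the probability of the resulting $\ge\kappa/2$ equality constraints by counting assignments per connected component, obtaining $e\,(em^2/(\kappa\sqrt{|U|}))^\kappa\le e(e/\kappa)^\kappa$. That is a single uniform count with no negative-correlation step and no bookkeeping over compositions. Its decay comes from the $\kappa^{-\kappa}$ in $\binom{m}{\kappa}$, so it holds for every $m$. However, paying a factor $m$ per slot and per partner means it needs essentially the full hypothesis $|U|\ge m^4$. Your grouping by element charges a factor $|U|$ once per collided element and a factor $ts/|U|=m/|U|$ per incidence, giving $\sum_{D\ge\kappa}(2m/\sqrt{|U|})^D$. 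This tail decays polynomially in $m$ per incidence and shows the proposition already holds for $|U|\ge Cm^2$. The price is the within-player negative-correlation bound and the composition count.
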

\begin{proof}
We upper bound the probability of having too many non-disjoint coordinates, which we call an \emph{overlap} and occurs when an element $u \in U$ is chosen by two or more players. 
Because each of the $t$ players draws a set of size exactly $m/t$, the total number of slots draws across all players is exactly $m$. 
For $\kappa$ slots out of the $m$ total slots to be non-unique, they must form at least $\kappa/2$ colliding pairs. 
We construct a union bound over all possible ways this could occur:
\begin{enumerate}
\item 
There are $\binom{m}{\kappa}$ ways to select which $\kappa$ specific slots will be involved in the collisions.
\item 
For each of these $\kappa$ slots, we must specify a collision partner from the $m$ total slots, giving at most $m^\kappa$ possible partner assignments.
\item 
Forming these partner assignments forces at least $\kappa/2$ equality constraints between the slots. 
We upper bound the probability of satisfying a specific set of $\kappa/2$ equalities. 
The total number of valid ways to assign elements to the $m$ slots, drawing $s$ items without replacement for each of the $t$ players, is exactly $\left(\frac{|U|!}{(|U|-s)!}\right)^t \ge (|U|-s)^m$. 
The number of assignments satisfying the $\kappa/2$ equalities is at most $|U|^{m-\kappa/2}$, because the equalities group the $m$ slots into at most $m-\kappa/2$ connected components, and each component can take at most $|U|$ values. 
Thus, the probability of satisfying a specific set of $\kappa/2$ equalities is at most:
\[\frac{|U|^{m-\kappa/2}}{(|U|-s)^m} = \frac{1}{|U|^{\kappa/2}} \left( \frac{|U|}{|U|-s} \right)^m.\]
Since $|U| \ge m^4$ and $s \le m$, we have $\left( \frac{|U|}{|U|-s} \right)^m = \left( 1 + \frac{s}{|U|-s} \right)^m \le e^{sm / (|U|-s)}$. 
Since $|U|-s \ge m^4-m \ge m^2$, the exponent is at most $sm/(|U|-s) \le m^2/m^2 = 1$. 
Therefore, the probability of satisfying any specific set of $\kappa/2$ equality constraints is at most $e / |U|^{\kappa/2}$.
\end{enumerate}
Multiplying these factors together, the probability of observing at least $\kappa$ overlaps is upper bounded by
\[\binom{m}{\kappa} m^\kappa \frac{e}{|U|^{\kappa/2}} \le e \left(\frac{em}{\kappa}\right)^\kappa \left(\frac{m^2}{|U|}\right)^{\kappa/2} = e \left(\frac{e m^2}{\kappa \sqrt{|U|}}\right)^\kappa.\]
Substituting our universe size assumption $|U| \geq m^4$ implies $\sqrt{|U|} \geq m^2$. 
Thus, the probability is at most $e \left(\frac{e}{\kappa}\right)^\kappa$. 
By setting $\kappa = C \log(1/\delta)$ for a sufficiently large constant $C$, it follows that the probability is at most $O(\delta)$. 
Thus, the uniformly sampled fixed-size sets are globally $\kappa$-almost-disjoint with probability at least $1-\delta$.
\end{proof}

With the promise condition satisfied, we establish the final communication lower bound by synthesizing $|U|$ parallel games.

\begin{corollary}
\label{cor:comm_cost_final}
Let $\kappa = O(\log(1/\delta))$ and assume $t=\Omega\left(\log\frac{1}{\delta}\right)$ and $t e^{-m/(6t)} \le \delta$. 
Let $\Pi$ be a one-way communication protocol of worst-case transcript length $C_\Pi$, utilizing public coins $C$, that solves the $t$-party EMostlyDISJ problem with failure probability at most $O(\delta)$, total set size parameter $m$, almost-disjointness parameter $\kappa$, and universe size $|U| \geq m^4$. 
Let $\beta := \Pr[F=1] \leq t e^{-m/(6t)}$ be the abort probability. 
The conditional information cost of the protocol satisfies:
\[I_\mu(\Pi; X' \mid C) \geq \frac{\Omega\left(\frac{m}{t}\log\left(\frac{1}{\delta}\right)\right) - h_2(\beta) - \beta(C_\Pi + 1)}{1-\beta},\]
where $h_2$ is the binary entropy function.
\end{corollary}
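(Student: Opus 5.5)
We would prove Corollary~\ref{cor:comm_cost_final} by a direct-sum embedding of $|U|$ independent copies of the one-bit problem $F_{ct,t}$, one per universe element, followed by a conditioning argument that accounts for the abort event $F$.

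\textbf{Setup.} The input $X'$ is generated as follows. Each player $i$ draws, for every $u\in U$, an independent bit $X_{i,u}\sim\mathrm{Bern}(p)$ with $p=m/(2t|U|)$. This gives $X=(X_{i,u})$ distributed as $\mu_p^{\otimes |U|}$ across coordinates. Each player then converts $\{u: X_{i,u}=1\}$ into a set of size exactly $m/t$ by padding with fresh random elements. The abort indicator $F$ is set to $1$ if some player has more than $m/t$ ones. By a Chernoff bound (mean $m/(2t)$) and a union bound over the $t$ players, $\beta=\Pr[F=1]\le t e^{-m/(6t)}\le\delta$. Conditioned on $F=0$, each $S_i$ is uniform of size $m/t$, so Proposition~\ref{prop:almost_disjoint} shows the instance is a $\kappa$-almost-disjoint NO instance except with probability $\delta$. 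The protocol is therefore correct on this distribution up to $O(\delta)$ error.

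\textbf{Embedding.} Fix a coordinate $u$. Given a one-bit input $Y\in\{0,1\}^t$ for $F_{ct,t}$, the players place $Y$ at coordinate $u$. Using public coins, they sample all other coordinates $X_{\cdot,v}$, $v\ne u$, from $\mu_p$; since the product distribution factors across players, each player can sample its own row privately. They run $\Pi$, and the referee uses target $j=u$. A YES instance of $F_{ct,t}$ (weight $\ge ct$) yields a YES instance of EMostlyDISJ with $j_0=u$. Weight $\le 1$ yields, with high probability, a NO instance. So the derived protocol solves $F_{ct,t}$ with error $O(\delta)$, after absorbing $\beta$ and the failure probability from Proposition~\ref{prop:almost_disjoint}. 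The YES case needs separate care, since we only bound information on the NO-side distribution; Lemma~\ref{lem:sparse_cost_formal} only requires correctness on all inputs, which the embedding provides.

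\textbf{Direct sum.} Because the coordinates are independent under $\mu_p^{\otimes|U|}$, the superadditivity form of the chain rule gives
\[
I(\Pi;X\mid C)\ge\sum_{u} I(\Pi;X_{\cdot,u}\mid X_{\cdot,<u},C),
\]
and each summand equals the information cost of the embedded protocol for coordinate $u$. Lemma~\ref{lem:info_relation_formal} lower bounds each term by $\Omega(p\min(t,\log(1/\delta)))=\Omega(p\log(1/\delta))$, using $t=\Omega(\log(1/\delta))$ and $p<1/t$. Summing over $|U|$ coordinates gives $\Omega(|U|p\log(1/\delta))=\Omega(\frac{m}{t}\log\frac1\delta)$. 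This is a bound on $I(\Pi;X\mid C)$ for the unconditioned input.

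\textbf{Conditioning on non-abort (main obstacle).} The remaining step transfers the bound to $X'$, i.e.\ conditioned on $F=0$. We write
\[
I(\Pi;X\mid C)\le I(\Pi;X,F\mid C)\le H(F)+I(\Pi;X\mid F,C),
\]
and expand $I(\Pi;X\mid F,C)=(1-\beta)I(\Pi;X\mid F=0,C)+\beta I(\Pi;X\mid F=1,C)$. The $F=1$ branch is at most $\beta(C_\Pi+1)$, since the transcript has at most $C_\Pi$ bits plus one for the abort flag. We have $H(F)=h_2(\beta)$. Since $X'$ is a function of $X$ and the padding, $I(\Pi;X\mid F=0,C)$ is bounded by the information about $X'$. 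Rearranging yields the stated inequality. The delicate part is ensuring the embedded protocol's error and information accounting treat the abort consistently, and that the padding randomness does not add or leak information; we would handle the padding by treating it as private randomness of each player.
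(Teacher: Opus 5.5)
Your proposal is correct and follows essentially the same route as the paper. Both arguments use Bernoulli$(p)$ product inputs with abort-and-pad, then a direct sum over the $|U|$ coordinates via Lemma~\ref{lem:info_relation_formal}, where you spell out the per-coordinate embedding with referee target $j=u$ and the paper instead cites Lemma 4.6 of \cite{BravermanZ25}. Both then apply the chain rule over $F$ and data processing along $X\to X'\to\Pi$, paying $h_2(\beta)+\beta(C_\Pi+1)$.

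One detail needs to be stated cleanly in the write-up. The coordinates $v>u$ must be sampled privately, which you mention in passing. If all $v\neq u$ were drawn with public coins, the embedded cost would be $I(\Pi;X_{\cdot,u}\mid X_{\cdot,-u},C)$, which only upper-bounds the chain-rule summand $I(\Pi;X_{\cdot,u}\mid X_{\cdot,<u},C)$, so the per-coordinate lower bound would not transfer to that summand.
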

\begin{proof}
We employ the direct sum reduction formalized in Section 4.2 of \cite{BravermanZ25}. 
We consider $|U|$ parallel, independent instances of the 1-bit problem $F_{ct,t}$. 
Let the input matrix $Y \in \{0,1\}^{t \times |U|}$ consist of columns $Y^j$ drawn i.i.d. from the product distribution $\mu_p$, where $p = m/(2t|U|)$. 
    
We construct a protocol $\Pi'$ that solves these parallel instances by simulating the EMostlyDISJ protocol $\Pi$. 
Each player $i$ locally constructs a preliminary set $S'_i = \{j \in U : Y^j_i = 1\}$. 
Because $Y_i^j \sim \text{Bern}(p)$, the expected size of this set is $\mathbb{E}[|S'_i|] = p|U| = m/(2t)$. 
    
If any player's set size exceeds double its expectation ($|S'_i| > m/t$), the reduction immediately aborts and declares a failure (Event $F$). 
Otherwise, player $i$ pads $S'_i$ up to a size of exactly $m/t$ using dummy elements drawn uniformly at random without replacement from the unselected elements of the common universe $U \setminus S'_i$. 
Let $X' = (S_1, \ldots, S_t)$ be the resulting collection of padded sets. 
The players then execute $\Pi(X')$.

\paragraph{Step 1: Bounding failure and error.}
By a standard Chernoff bound, the probability that a single player's set exceeds $m/t$ is at most $e^{-m/(6t)}$. 
By a union bound over all $t$ players, the probability of the failure event $F$ is bounded by $\beta := \Pr[F=1] \leq t e^{-m/(6t)}$. 
By adjusting constants appropriately in our parameter assumptions, $\beta \leq \delta$. 
Provided $F$ does not occur, Lemma 4.6 of \cite{BravermanZ25} guarantees that the resulting compound protocol $\Pi'$ correctly solves the embedded instances with high probability. 
The total error probability of $\Pi'$ is upper bounded by $O(\delta)$. 

Crucially, because the initial Bernoulli process treats all elements of $U$ symmetrically, padding uniformly from the unselected elements guarantees that the resulting set $S_i$ is distributed uniformly over all subsets of $U$ of size $m/t$. 
Because the players pad independently, conditioned on $F=0$, the collection $X'$ matches the distribution $\mu$ of independent, uniform fixed-size subsets. 
Because the universe size is $|U| \ge m^4$, Proposition \ref{prop:almost_disjoint} applies directly to the padded sets $X'$. 
Thus, with probability at least $1 - \delta$, the constructed sets satisfy the $\kappa$-almost-disjointness promise.

\paragraph{Step 2: Direct sum over independent instances.}
Because the columns $Y^j$ are drawn completely independently under $\mu_p$, the mutual information between the full matrix $Y$ and the transcript $\Pi'$ decomposes linearly via the super-additivity of independent inputs:
\[I(Y; \Pi') \geq \sum_{j \in U} I(Y^j; \Pi').\]
Applying our 1-bit lower bound from Lemma \ref{lem:info_relation_formal} to each instance, we aggregate the information:
\[I(Y; \Pi') \geq |U| \cdot \Omega\left(p \log\left(\frac{1}{\delta}\right)\right) = |U| \cdot \Omega\left(\frac{m}{2t|U|} \log\left(\frac{1}{\delta}\right)\right) = \Omega\left(\frac{m}{t} \log\left(\frac{1}{\delta}\right)\right).\]

\paragraph{Step 3: Relating to the original protocol.}
Finally, we must relate the unconditional information cost $I(Y; \Pi')$ of the simulated inputs back to the actual conditional information cost $I_\mu(\Pi; X') = I(\Pi'; X' \mid F=0)$ of the base protocol operating on the padded sets. 

Conditioned on $F=0$, the symmetric random padding from the common universe $U$ ensures the padded sets match the distribution $\mu$ of independent uniform fixed-size subsets. 
Moreover, conditioned on $X'$, $F=0$, and the public coins $C$, the transcript is generated only from $X'$ and the protocol's private coins. 
Hence, $Y \to X' \to \Pi'$ is a Markov chain, which implies by the data processing inequality:
\[I(Y; \Pi' \mid F=0, C) \leq I(X'; \Pi' \mid F=0, C) = I_\mu(\Pi; X' \mid C).\]

We expand the mutual information $I(Y; \Pi' \mid C)$ using the chain rule over the abort indicator $F$:
\begin{align*}
I(Y; \Pi' \mid C) &\leq I(Y; \Pi', F \mid C) \\
&= I(Y; F \mid C) + I(Y; \Pi' \mid F, C) \\
&\leq h_2(\beta) + (1-\beta)I(Y; \Pi' \mid F=0, C) + \beta I(Y; \Pi' \mid F=1, C).
\end{align*}
Let $C_\Pi$ be the worst-case bit length of the transcript of $\Pi$. 
Even on an abort ($F=1$), we can pad the compound transcript to length $C_\Pi + 1$. 
Thus, the conditional mutual information $I(Y; \Pi' \mid F=1)$ is upper bounded by $C_\Pi + 1$. 
Substituting these bounds into our expansion, we have
\[I(Y; \Pi' \mid C) \leq h_2(\beta) + (1-\beta)I_\mu(\Pi; X' \mid C) + \beta(C_\Pi + 1).\]
Rearranging the terms to isolate $I_\mu(\Pi; X')$, we have
\[I_\mu(\Pi; X' \mid C) \geq \frac{I(Y; \Pi' \mid C) - h_2(\beta) - \beta(C_\Pi + 1)}{1-\beta}.\]

Substituting the main term $I(Y; \Pi' \mid C) \geq \Omega\left(\frac{m}{t}\log\left(\frac{1}{\delta}\right)\right)$ from Step 2, we obtain the lower bound
\[I_\mu(\Pi; X' \mid C) \geq \frac{\Omega\left(\frac{m}{t}\log\left(\frac{1}{\delta}\right)\right) - h_2(\beta) - \beta(C_\Pi + 1)}{1-\beta}.\]
This concludes the direct sum. 
\end{proof}

\subsection{\texorpdfstring{$F_2$}{F2} Estimation Lower Bound via Multi-Scale Reduction}

We now employ the multi-scale reduction framework of \cite{BravermanZ25} (Section 5) using EMostlyDISJ as the building block.

\subsubsection{Reduction from EMostlyDISJ to \texorpdfstring{$F_2$}{F2}}

We first generalize the reduction in Section 5.1 of \cite{BravermanZ25} to EMostlyDISJ and allow for flexible parameters $t$. Note that this section describes the construction for a \emph{single level} (i.e., a single instance of the game played with exactly $t$ players); in the next section, we will embed multiple such levels simultaneously. Fix $n, \eps$. Let $2 \leq t \leq \eps\sqrt{n}/2$. 
Define the super-element size $d := \lfloor \frac{\eps^2 n}{4t^2} \rfloor \geq 1$. 

We reduce from a single instance of EMostlyDISJ$_t$. The abstract universe of items for this game is instantiated as $U^d$ (where each abstract item is a super-element, defined as a $d$-tuple of distinct elements from a base stream universe $U$). The total set size parameter $m$ from the previous section (which bounds the total number of abstract items distributed across all players' sets) is set to $m = \lfloor n/(4d) \rfloor = \Theta(t^2/\eps^2)$. We set the almost-disjointness parameter to $\kappa = O(\log(1/\delta))$.

Given an instance $X=(S_1, \ldots, S_t, x) \in (U^d)^t \times U^d$, we construct a stream $s(X)$ of base items from $U$. 
This implicitly defines a frequency vector $v \in \mathbb{N}^{|U|}$, where each coordinate $v_u$ records the exact number of times base item $u \in U$ appears in the stream. 
First, we concatenate the elements of the players' sets $S_i$: each super-element is ``unzipped'' and its $d$ distinct constituent base elements from $U$ are written to the stream. 
Since there are at most $m$ super-elements across all players, this contributes $md \leq n/4$ base items to the stream. 
Finally, we append the referee's target super-element $x$ by writing its $d$ base elements repeatedly $k = \lceil t/\eps \rceil$ times. 
Because $k = \lceil t/\eps \rceil$ and $d \le \frac{\eps^2 n}{4t^2}$, we have $dk \approx \frac{\eps n}{4t} \le n/4$. 
The total stream length is upper bounded by $md + dk \le n/4 + n/4 \le n$.

\begin{definition}[$\zeta$-Base-Separation]
A stream built from an instance $X$ is $\zeta$-base-separated if the number of non-unique base-element memberships arising from coordinates of distinct abstract items is at most $\zeta$.
\end{definition}

\begin{lemma}[Adapted from Lemma 5.2 of \cite{BravermanZ25}]
\label{lem:f2_reduction_mostly}
If $t \geq C_1 \log\frac{1}{\delta}$ and $\kappa\le C_2\log\frac{1}{\delta}$ for sufficiently large constants $C_1>C_2$, and the stream is $\zeta$-base separated for $\zeta\le C_2\log\frac{1}{\delta}$, then a $(1\pm\Theta(\eps))$-approximation to $F_2(s(X))$ solves any valid instance $X$ of EMostlyDISJ$_t$ where the total number of base element collisions is at most $\kappa$.
\end{lemma}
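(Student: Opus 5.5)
We will compute $F_2(s(X))$ explicitly in the two cases and show a multiplicative gap of $1+\Theta(\eps)$, so a sufficiently accurate approximation separates them. Write the frequency vector as $v = a + k\,\mathbf{1}_x$, where $a$ collects the contributions of the players' unzipped super-elements and $\mathbf{1}_x$ is the indicator of the $d$ base elements of the referee's item $x$. Then
\[
F_2(v) = \|a\|_2^2 + 2k\langle a, \mathbf{1}_x\rangle + k^2 d .
\]
The term $k^2 d$ is known to the referee. The plan is to show that $\|a\|_2^2$ is essentially a fixed quantity $md$ in every valid instance, and that the cross term $2k\langle a,\mathbf{1}_x\rangle$ is the signal: it is large exactly when $x=j_0$ in a YES instance. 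Since the referee knows $x$, the referee can run the algorithm on the stream, subtract $k^2 d$, and threshold the result.

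\textbf{First step: bound $\|a\|_2^2$.} If every base element in the players' part appeared once, we would have $\|a\|_2^2 = \sum_i |S_i|\, d$. Deviations come from two sources.
\begin{itemize}
\item Repeated abstract items. There are at most $\kappa$ memberships from almost-disjointness, plus, in the YES case, the $\ge ct$ copies of $j_0$.
\item Base collisions between distinct abstract items. There are at most $\zeta$ of these.
\end{itemize}
In the YES case, $j_0$ contributes $d\,\ell^2$ instead of $d\,\ell$, where $\ell \ge ct$ is its multiplicity. The $\kappa$ noisy repeats contribute $O(d\kappa^2)$ at worst. The base collisions contribute $O(\zeta^2)$, or more carefully $O(\zeta\cdot\max\text{mult})$.

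\textbf{Second step: the cross term.} If $x = j_0$ in a YES instance, then $\langle a,\mathbf{1}_x\rangle \ge ct\,d$. So the cross term is at least $2k\,ct\,d \approx 2c\,t^2 d/\eps$, and the $d\ell^2 \ge c^2t^2 d$ term from $\|a\|^2$ adds to this. Otherwise, $x$ overlaps the players' items in at most $O(\kappa)$ memberships plus $O(\zeta)$ base collisions. The cross term is then at most $2k\,d\,O(\kappa+\zeta) = O(t d\log(1/\delta)/\eps)$, and $\|a\|^2$ is at most $md + O(d\kappa^2)$ (in a YES instance with $x\neq j_0$ the extra $d\ell^2$ term must still be handled; see below).

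\textbf{Third step: compare with the total.} The total satisfies $F_2 = \Theta(md + k^2 d) = \Theta(t^2 d/\eps^2)$, since $m=\Theta(t^2/\eps^2)$ and $k^2=\Theta(t^2/\eps^2)$. The YES-with-$x=j_0$ signal is therefore $\Theta(\eps)\cdot F_2$, and a $(1\pm\Theta(\eps))$ approximation with a small enough constant detects it. The noise in the other case is $O(t\log(1/\delta) d/\eps)$. Using $t \ge C_1\log(1/\delta)$ with $C_1$ large relative to $C_2$, this noise is at most a small constant times the signal $c\,t^2 d/\eps$. This is where $C_1 > C_2$ is used.

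\textbf{Main obstacle: YES instances with $x \ne j_0$.} Here $\|a\|^2$ already contains an extra $d(\ell^2-\ell) = \Theta(t^2 d)$, which is only an $\eps^2$ fraction of $F_2$. It is negligible against the $\Theta(\eps)F_2$ signal, but the comparison must be stated carefully with an additive threshold $md + k^2 d + c\,k\,t\,d$. I would handle it by bounding every non-signal deviation by $O(t^2 d + t d\log(1/\delta)/\eps) = o(\eps F_2)$ relative to the signal, and choosing the approximation constant accordingly.
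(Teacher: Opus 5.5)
Your proposal is correct and is essentially the paper's argument. The paper bounds $F_2$ separately for a YES instance and the two NO cases (no heavy element, or heavy $j_0\neq x$); its gap $d\,[tk - O(t\kappa)-O(k\kappa)-O(\kappa^2)-O(t)] = \Omega(\eps n)$ over a $\Theta(n)$ baseline is exactly your cross term $2k\langle a,\mathbf{1}_x\rangle$ set against the same $\kappa$-noise and heavy-element self-square terms. Your decomposition $v=a+k\mathbf{1}_x$, with an explicit additive threshold and a variable multiplicity $\ell\in[ct,t]$, is a somewhat cleaner packaging; like the paper, it implicitly uses that $\sum_i |S_i| = m$ is fixed, so that $\|a\|_2^2\approx md$ is known to the referee.
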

\begin{proof}
We analyze the gap in the second frequency moment $F_2(v) = \sum_{u \in U} v_u^2$ between YES and NO instances. 
For simplicity, we assume $c=1/2$ and $\gamma\le\kappa$. 
Let $F_{bg}$ denote the $F_2$ contribution from the background base items (all items not belonging to the referee's target $x$ or the heavy element $j_0$). 
By assumption, the total number of base element collisions is at most $\kappa$. 
Because the $m$ super-elements contribute $md \leq n/4$ base items, then we have 
\[F_{bg} \le d(\kappa+1)^2 + n/4 \le n/4 + \mathcal{O}(d\kappa^2).\]
We now distinguish the YES instance from two logically distinct NO instances.

In the first NO case, the input sets are $\kappa$-almost disjoint, and the referee's target $x$ is repeated $k = \lceil t/\eps \rceil$ times. 
The target $x$ appears at most $\kappa$ times among the players. 
Including additional collisions, the $d$ coordinates of the target reach a frequency of at most $k + 2\kappa$. 
Hence, we have 
\[F_2(\text{NO}_1) \le F_{bg} + d(k + 2\kappa)^2 \le n/4 + d k^2 + O(dk\kappa + d\kappa^2).\]

In the second NO case, there is a heavy element $j_0$ appearing $t/2$ times, but the referee's target is a different element $x$ appearing $k$ times. 
Because $x$ and $j_0$ are distinct super-elements, their base coordinates are completely disjoint except for the additional collisions, which occur at most $\kappa$ times. 
Thus, the $d$ coordinates of $j_0$ have frequency at most $t/2 + \kappa$, and the $d$ coordinates of $x$ have frequency at most $k + \kappa$. 
Hence, we have
\[F_2(\text{NO}_2) \le F_{bg} + d(t/2 + \kappa)^2 + d(k + \kappa)^2 \le n/4 + d\left( \frac{t^2}{4} + k^2 \right) + O(dt\kappa + dk\kappa + d\kappa^2).\]

In a YES instance, the heavy element $j_0$ is exactly the referee's target $x$. 
The $d$ target coordinates appear at least $t/2$ times from the players and $k$ times from the referee, reaching a frequency of $t/2 + k$.
Then we have
\[F_2(\text{YES}) \geq F_{bg}' + d \cdot (t/2+k)^2 \geq n/4 - d(t/2) + d \cdot (t/2+k)^2.\]
Here, $F_{bg}' \ge n/4 - d(t/2)$ accounts for the target items being removed from the background calculation, assuming the total size of the players' sets is exactly $m$. 

Observe that the $F_2$ moment for the NO instance is maximized in Case 2, and so the additive gap between the YES case and the NO case is at least:
\begin{align*}
\Delta F_2 &= F_2(\text{YES}) - F_2(\text{NO}_2) \\
&\geq d \cdot [(t/2+k)^2 - (t/2+\kappa)^2 - (k+\kappa)^2] - O(dt\kappa + dk\kappa + d\kappa^2) - dt/2 \\
&= d \cdot [ (t/2)^2 + tk + k^2 - (t/2)^2 - t\kappa - \kappa^2 - k^2 - 2k\kappa - \kappa^2 ] - O(dt\kappa + dk\kappa + d\kappa^2) - dt/2 \\
&= d \cdot [ tk - O(t\kappa) - O(k\kappa) - O(\kappa^2) - O(t)].
\end{align*}
Substituting $k = \lceil t/\eps \rceil$ and $d = \lfloor \eps^2 n / (4t^2) \rfloor$, we have
\[\Delta F_2 \geq \frac{\eps^2 n}{4t^2} \cdot \left[ \frac{t^2}{\eps} - O\left(\frac{\kappa t}{\eps}\right) \right].\]
The dominant term is $\frac{\eps^2 n}{4t^2} \cdot \frac{t^2}{\eps} = \frac{\eps n}{4}$.
The subtracted negative terms are negligible, since $t^2/\eps \gg \kappa t / \eps$ for $t \geq C_1 \log\frac{1}{\delta}$ and $\kappa\le C_2\log\frac{1}{\delta}$ for sufficiently large constants $C_1>C_2$.  
We also need $t^2/\eps \gg t^2$, which holds for $\eps < 1$. If $t \geq C \log(1/\delta)$, the additive gap remains $\Delta F_2 = \Omega(\eps n)$.

To establish that this additive gap translates to a strict $(1+\Omega(\eps))$-multiplicative factor, we bound the maximum value of $F_2(\text{NO}_2)$. 
By our parameter choices, the referee's isolated squared contribution is $dk^2 \approx \left(\frac{\eps^2 n}{4t^2}\right)\frac{t^2}{\eps^2} = \Theta(n)$. 
Since $F_{bg} \le n/4 + O(\kappa^2)$, it follows that 
\[F_2(\text{NO}_2) \le n/4 + \Theta(n) + o(n) = \Theta(n).\]
Since the baseline is $F_2(\text{NO}_2) = \Theta(n)$ and the additive gap is $\Delta F_2 = \Omega(\eps n)$, the multiplicative ratio between the instances is:
\[\frac{F_2(\text{YES})}{F_2(\text{NO}_2)} \geq \frac{F_2(\text{NO}_2) + \Omega(\eps n)}{F_2(\text{NO}_2)} = 1 + \Omega\left(\frac{\eps n}{n}\right) = 1 + \Omega(\eps).\]
Because the YES and NO cases differ by a $(1+\Omega(\eps))$-multiplicative factor, any streaming algorithm providing a $(1 \pm c\eps)$-relative approximation to $F_2$ (for a sufficiently small constant $c$) will successfully distinguish between the two cases, solving the single-level EMostlyDISJ$_t(X)$ instance.
\end{proof}

A streaming algorithm $\mathcal{A}$ for $F_2$ naturally induces a communication protocol $\Pi$ for EMostlyDISJ$_t$, where player $i$ simulates $\mathcal{A}$ on their part of the stream and passes the memory state to player $i+1$. The total communication is $t \cdot \text{Space}(\mathcal{A})$.

\subsubsection{The Multi-Scale Construction and Direct Sum}
We now employ the multi-scale reduction framework of \cite{BravermanZ25} (Section 5.2). We define levels $l$, where $t_l = 2^l$, and conceptually embed an instance of EMostlyDISJ$_{t_l}$ at each level.

We define a range of ``good'' levels. 
A level $l$ is good if $t_l$ satisfies the necessary conditions for both the $F_2$ reduction and the communication lower bound:
\begin{enumerate}
    \item $t_l \leq \eps\sqrt{n}/2$ (Ensures the super-element size $d_l \geq 1$).
    \item $t_l \geq C \log(1/\delta)$ (Ensures the $F_2$ gap is distinguishable, Lemma \ref{lem:f2_reduction_mostly}).
    \item $\delta \ge t_l e^{-m_l/(6t_l)}$ (Condition for Corollary~\ref{cor:comm_cost_final}). Since $m_l/t_l = \Theta(t_l/\eps^2)$, this requires $t_l/\eps^2 \geq \Omega(\log(t_l/\delta))$.
\end{enumerate}

The number of good levels $L$ is approximately $\log_2(\eps\sqrt{n}) - \log_2(C' \max(1, \eps^2) \log(1/\delta))$. Assuming standard parameter regimes where $\eps$ is small, this yields $L = \Omega(\log(\frac{\eps\sqrt{n}}{\log(1/\delta)}))$.

A na\"{i}ve concatenation of these instances would result in a stream of length $\omega(n \log n)$, which would artificially inflate the space bounds. 
To embed these instances without increasing the total stream length beyond $n$, the construction interleaves them implicitly within a single continuous stream. 
The information cost is analyzed over a stream drawn from a distribution $\mathcal{E}$, where $n$ base elements are chosen uniformly and independently from $U$. 

To upper bound element repetitions across instances, we set the base universe size to $|U| = n^3$.
Let $\mathcal{E}_{coll}$ be the event that there are fewer than $\kappa = C_1\log\frac{1}{\delta}$ collisions in the entire stream of length $n$. 
Note that since the expected number of collisions is $\mu \le \binom{n}{2} / n^3 < \frac{1}{2n}$, then by standard tail bounds, the probability of $\kappa$ or more collisions is at most $O(\delta)$. 
Hence, for the remainder of the section, we condition on the event $\mathcal{E}_{coll}$. 

Now, because the stream consists of independent random variables, it can be analyzed at multiple scales simultaneously. 
For any level $l$, partitioning the stream into contiguous blocks of size $n/t_l$ to form $d_l$-tuples naturally generates a valid NO instance for the $t_l$-player EMostlyDISJ game.
However, before bounding the information measure, we must reconcile a subtle distributional difference between the multiscale stream and the communication game of Corollary~\ref{cor:comm_cost_final}. 
The communication inputs $X'$ are fixed-size sets drawn \textit{without replacement} from the universe $U^{d_l}$, whereas the continuous multiscale stream generates $d_l$-tuples \textit{with replacement}.


\begin{lemma}[Distributional Coupling]
\label{lem:coupling}
Let $\mathcal{D}_{\text{stream}}$ be the distribution of a single active stream block at level $l$, consisting of $m_l$ abstract $d_l$-tuples drawn i.i.d. with replacement from $U^{d_l}$. 
Let $\mathcal{D}_{\text{game}}$ be the exact communication distribution from Corollary~\ref{cor:comm_cost_final}, which generates sets \textit{without replacement}. 
For any streaming algorithm's memory state $M$ processing the stream $X \sim \mathcal{D}_{\text{stream}}$, the mutual information satisfies:
\[I(X ; M) \ge \left(1 - \frac{1}{n}\right) \cdot I(X_{\text{game}} ; M)\]
where $X_{\text{game}} \sim \mathcal{D}_{\text{game}}$.
\end{lemma}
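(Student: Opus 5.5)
We couple the two distributions through a single ``collision'' indicator and then use the chain rule over that indicator, in the same way Corollary~\ref{cor:comm_cost_final} handled the abort event $F$. Let $X \sim \mathcal{D}_{\text{stream}}$ be the $m_l$ i.i.d.\ $d_l$-tuples. Let $G$ be the indicator that the tuples in each player's block are pairwise distinct; more precisely, every base element in the block is distinct, so the tuples are distinct super-elements. Conditioned on $G=1$, the set of tuples held by each player is uniform over subsets of $U^{d_l}$ of the prescribed size, and the players' sets are independent. This holds because i.i.d.\ sampling conditioned on no repeats is exchangeable, so it is the same as sampling without replacement. Hence $(X \mid G=1)$ has the law of $X_{\text{game}}$, up to forgetting the within-block order. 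We then argue that ordering does not reduce the information: $X_{\text{game}}$ is a function of $X$, so $I(X;M\mid G=1)\ge I(X_{\text{game}};M\mid G=1)$ by data processing.

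\textbf{Step 1: bound $\Pr[G=0]$.} A block contains at most $n$ base elements drawn uniformly from $|U|=n^3$. A union bound over pairs gives $\Pr[G=0]\le \binom{n}{2}/n^3 < 1/(2n)\le 1/n$. This is the same computation used to justify $\mathcal{E}_{coll}$. If the statement is meant with $G$ defined at the level of super-elements only, the bound is even smaller.

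\textbf{Step 2: chain rule over $G$.} Since $G$ is a deterministic function of $X$,
\[ I(X;M) = I(X,G;M) \ge I(X;M\mid G) \ge \Pr[G=1]\, I(X;M\mid G=1). \]
The first inequality uses $I(X,G;M)=I(G;M)+I(X;M\mid G)$ and nonnegativity. The second drops the $G=0$ term, which is nonnegative. Combining this with Step 1 and the identification from the first paragraph,
\[ I(X;M)\ge (1-1/n)\, I(X_{\text{game}};M), \]
where on the right $M$ is the memory state produced by running the algorithm on a stream distributed as $X_{\text{game}}$. This is legitimate because $M$ is generated from the stream by the same channel (the algorithm with its own randomness) in both cases. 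So the joint law of $(X,M)$ given $G=1$ equals the joint law of $(X_{\text{game}},M)$ under $\mathcal{D}_{\text{game}}$.

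\textbf{Main obstacle.} The delicate point is the exact distributional identification under $G=1$. The game distribution consists of \emph{sets}, while the stream is an ordered sequence. In addition, the game's sets $S_i$ are sampled independently per player, so they may overlap across players, exactly as the i.i.d.\ stream may. We must therefore condition only on within-player distinctness, not global distinctness, so that no conditioning bias is introduced between players. If the game requires tuples to consist of distinct base coordinates, we include that in $G$ as well; it still costs only $O(1/n)$ probability. We present the streaming order as a uniformly random ordering of the set, which holds by exchangeability. The set is then recoverable from the ordered block, and the data processing step goes through.
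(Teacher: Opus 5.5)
This is essentially the paper's proof. You condition on the indicator that no tuple repeats within any player's block and use $I(X;M)=I(X,G;M)\ge I(X;M\mid G)\ge \Pr[G=1]\,I(X;M\mid G=1)$. You then bound $\Pr[G=0]\le 1/n$ via $|U|=n^3$ and identify the conditional law with $\mathcal{D}_{\text{game}}$. Your data-processing remark about forgetting the within-block order, and your optional within-tuple distinctness clause, are harmless refinements that the paper glosses over.

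The one thing to clean up is your opening ``more precisely, every base element in the block is distinct.'' Conditioning on block-wide base-level distinctness forces tuples to be coordinate-disjoint and makes the players dependent, so the conditional law is then \emph{not} $\mathcal{D}_{\text{game}}$. Define $G$ as within-player super-element distinctness, as your final paragraph and the paper do, and use the base-level union bound only as an upper bound on $\Pr[G=0]$.
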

\begin{proof}
Let $E$ be the event that the sequence of $m_l$ items drawn from $U^{d_l}$ contains no duplicate tuples within each player. 
Conditioned on $E$, a sequence drawn i.i.d.\ with replacement is identical to a uniform sequence drawn without replacement. 
Thus, $(X \mid E) \sim \mathcal{D}_{\text{game}}$.

Because we set the base universe size to $|U| \ge n^3$ and $d_l \ge 1$, the abstract universe size is $|U^{d_l}| = |U|^{d_l} \ge n^3$. 
The total number of abstract items drawn across all $t_l$ players in the instance is $m_l \le n$. 
Hence, the probability of drawing a duplicate is at most
\[\Pr[\neg E] \le \frac{m_l^2}{2|U^{d_l}|} \le \frac{n^2}{2n^3} \le \frac{1}{2n} \le \frac{1}{n}.\]

To lower bound the information cost, we use the chain rule for mutual information. 
Let $I_E$ be the binary indicator of event $E$. 
Expanding $I(X, I_E; M)$ in two ways, we have
\[I(X; M) + I(I_E; M \mid X) = I(I_E; M) + I(X; M \mid I_E).\]
Since $I_E$ is a deterministic function of $X$, we have $H(I_E \mid X) = 0$ and hence $I(I_E; M \mid X) = 0$. 
Since mutual information is non-negative, $I(I_E; M) \ge 0$. 
Thus $I(X; M) \ge I(X; M \mid I_E)$. 
Expanding the conditional mutual information over the states of $I_E$,
\begin{align*}
I(X; M \mid I_E) &= \Pr[E] \cdot I(X; M \mid E) + \Pr[\neg E] \cdot I(X; M \mid \neg E) \\
&\ge \Pr[E] \cdot I(X; M \mid E).
\end{align*}
Conditioned on $E$, the input is distributed as $\mathcal{D}_{\text{game}}$ and the memory state is produced by the same algorithm, so the joint law of $(X, M)$ given $E$ equals that of $(X_{\text{game}}, M)$; hence $I(X; M \mid E) = I(X_{\text{game}}; M)$. 
Substituting $\Pr[E] \ge 1 - 1/n$ gives the desired claim.

\end{proof}

We use the information measures defined by \cite{BravermanZ25}. Let $X$ be the random stream and $M_j$ the memory state of the algorithm after processing index $j$.

\begin{definition}[Definition 5.6 of \cite{BravermanZ25}]
For level $l$, with $t_l = 2^l$, define the information measure:
$$ I_l := \sum_{j=1}^{n} I\left(X_{(j - \frac{n}{2^l}, j - \frac{n}{2^{l+1}})}; M_j \mid M_{j - \frac{n}{2^l}}\right) $$
\end{definition}
This measure sums over all time steps $j$ to capture the total information the algorithm must retain about the past stream segments relevant to scale $l$.

\begin{lemma}[Adapted from Lemma 5.7 of \cite{BravermanZ25}]
\label{lem:I_l_bound_formal}
Let $S$ be the worst-case space (in bits) of the streaming algorithm. 
For each good level $l$, the information measure $I_l$ satisfies:
\[ I_l \ge \Omega\left(\frac{n}{\eps^2}\log\frac{1}{\delta}\right) - O(n \beta_l S) - O\left(\frac{n}{t_l} h_2(\beta_l)\right). \]
\end{lemma}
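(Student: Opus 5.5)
We follow the proof of Lemma 5.7 of \cite{BravermanZ25}: partition the $n$ time steps into groups that each induce one embedded EMostlyDISJ$_{t_l}$ game, bound the information per game from below by Corollary~\ref{cor:comm_cost_final}, and sum over the games.

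\textbf{Step 1: decomposing into games.} Fix a good level $l$ and write $B := n/t_l$ for the block length. For each offset $r \in [B]$ and each window start, consider the $t_l$ consecutive blocks of length $B$ that end at the time steps $j \equiv r \pmod B$.

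\textbf{Step 2: embedding a game in each window.} Each window, together with a referee segment, is an instance of the single-level reduction of Lemma~\ref{lem:f2_reduction_mostly}.
\begin{itemize}
\item The $d_l$-tuples inside player $i$'s block are player $i$'s set.
\item The stream content before the window is sampled by the players using public randomness, so it can be treated as fixed.
\item The rest of the stream after the window is generated by the referee, who appends $k$ copies of its target.
\end{itemize}
Conditioned on $\mathcal{E}_{coll}$, the window is a valid NO instance with at most $\kappa$ collisions. Hence the $F_2$ algorithm, run with memory states passed between players, solves EMostlyDISJ$_{t_l}$ with error $O(\delta)$.

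\textbf{Step 3: a chain-rule form of the information.} In the one-way protocol, player $i$'s message is the memory state $M_{j_i}$ at the end of its block. We express the information the final message carries about all player inputs as a chain-rule sum of per-block terms. By the Markov structure of the stream ($X$ after time $j$ is independent of the past), the term for player $i$ is at most
\[
I\bigl(X_{(j_i-B,\,j_i]};\,M_{j_i'} \bigm| M_{j_i-B}\bigr),
\]
where $j_i'$ is some later time step.

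We then match these to the summands of $I_l$. These use half-blocks $(j-n/2^l,\, j-n/2^{l+1})$ and condition on $M_{j-n/2^l}$. Each full block splits into two half-blocks. For each half-block, data processing lets us replace the later state $M_{j'}$ by an appropriately shifted state, giving a summand of $I_l$. Averaging over the $B$ offsets $r$ and the two half-placements, each summand of $I_l$ is used $O(1)$ times.

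\textbf{Step 4: lower-bounding each game.} We apply Corollary~\ref{cor:comm_cost_final} with $m_l=\Theta(t_l^2/\eps^2)$, so that $m_l/t_l=\Theta(t_l/\eps^2)$. Before doing so, Lemma~\ref{lem:coupling} converts the with-replacement stream distribution to the game distribution, at a $(1-1/n)$ factor. Each game then contributes at least
\[
\Omega\!\left(\frac{t_l}{\eps^2}\log\frac1\delta\right) - h_2(\beta_l) - \beta_l(C_\Pi+1).
\]
For one-way streaming protocols, information about player inputs can be charged block-by-block. Therefore $C_\Pi$ can be taken as $O(t_l S)$, or more precisely as the per-block memory size $S$ summed appropriately.

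\textbf{Step 5: summing.} There are about $n/t_l$ disjoint games per offset class once we normalize by the $B$ offsets. Summing gives
\[
I_l \ge \frac{n}{t_l}\left[\Omega\!\left(\frac{t_l}{\eps^2}\log\frac1\delta\right) - h_2(\beta_l) - \beta_l\, O(t_l S)\right],
\]
which equals
\[
\Omega\!\left(\frac{n}{\eps^2}\log\frac1\delta\right) - O(n\beta_l S) - O\!\left(\frac{n}{t_l}h_2(\beta_l)\right),
\]
the claimed bound.

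\textbf{Main obstacle.} Step 3 is the delicate part: the counting must match exactly the half-block structure of $I_l$, and the abort and error terms must be charged correctly. I would first check the bookkeeping in the idealized case $\beta_l=0$. I would then add the abort term through the $F$-expansion already carried out in Corollary~\ref{cor:comm_cost_final}, noting that conditioning on $\mathcal{E}_{coll}$ only changes the error by $O(\delta)$.
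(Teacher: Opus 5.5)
Your overall plan is the paper's: one EMostlyDISJ$_{t_l}$ game per offset class, Corollary~\ref{cor:comm_cost_final} plus Lemma~\ref{lem:coupling} per game with $C_\Pi\le t_lS$, and a sum over $n/t_l$ games. The gap is in Step~3, which is exactly the step the paper defers to \cite{BravermanZ25}; the mechanism you propose there fails.

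\textbf{Why Step~3 fails.} You take player $i$'s input to be the whole block $X_{(j_i-B,j_i]}$. The one-way chain rule, applied to the full transcript $(M_{j_1},\dots,M_{j_{t_l}})$ rather than the final message, gives the per-player term $I(X_{(j_i-B,j_i]};M_{j_i}\mid M_{j_i-B})$. Split this into its two half-blocks.
\begin{itemize}
\item The first-half part is exactly the $I_l$ summand at index $j_i$.
\item The second-half part, after fixing the algorithm's coins, equals $I(X_{(j_i-B/2,j_i]};M_{j_i}\mid M_{j_i-B/2})$. This is information about a half-block held at the moment that half-block ends.
\end{itemize}
The only summand of $I_l$ involving this segment is the one at index $j_i+B/2$, and it uses the \emph{later} state $M_{j_i+B/2}$. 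That state is $M_{j_i}$ processed together with fresh independent input. So data processing gives $I(\cdot\,;M_{j_i+B/2}\mid\cdot)\le I(\cdot\,;M_{j_i}\mid\cdot)$, which is the opposite of what you need. Your claim that the player's term is at most the information in a later state $M_{j_i'}$ has the same reversed direction.

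Worse, by the chain-rule decomposition behind Lemma~\ref{lem:sum_info}, the second-half quantity is exactly the sum of the level-$(l+1),(l+2),\dots$ summands at $j_i$. So with full-block inputs your games lower-bound only $\sum_{k\ge0}I_{l+k}$, not $I_l$. Summing such bounds over levels double counts and loses the $\log\frac{\eps\sqrt n}{\log(1/\delta)}$ factor.

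\textbf{The missing idea.} Build the gap into the game; this is what the half-block form of $I_l$ is for, and what the paper relies on when it cites \cite{BravermanZ25}. In the game with offset $r$:
\begin{itemize}
\item player $i$'s input is only the first half-block $X_{(j_i-B,\,j_i-B/2)}$;
\item the second half-block is filler that player $i$ samples with private randomness;
\item player $i$'s message is $M_{j_i}$.
\end{itemize}
The one-way chain rule then gives $I(X;\Pi)=\sum_i I(X_{(j_i-B,\,j_i-B/2)};M_{j_i}\mid M_{j_i-B})$ exactly. These are precisely the summands of $I_l$ with $j\equiv r\pmod B$, so summing over the $B=n/t_l$ offsets recovers $I_l$ itself, with no data processing or averaging. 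The filler adds only $O(n)$ background mass to $F_2$, which the gap analysis of Lemma~\ref{lem:f2_reduction_mostly} absorbs, and the inputs still have $m_l=\Theta(t_l^2/\eps^2)$. Also, since $t_l\cdot B=n$, each offset class is a single game, so Step~5 needs no extra normalization.
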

\begin{proof}
Following the structure of Lemma 5.7 in \cite{BravermanZ25}, we lower bound $I_l$ by tracking the number of independent instances of EMostlyDISJ$_{t_l}$ that can be embedded within the stream.
For a fixed level $l$, simulating a single instance of the game requires $t_l$ players sequentially passing the memory state of the streaming algorithm, corresponding to $t_l$ stream blocks of length $n/t_l$. 
Because the streaming algorithm uses at most $S$ bits of space, the induced communication protocol has a worst-case transcript length $C_{\Pi_l} \le t_l S$.

Evaluating this instance requires examining the sequence of memory states $M_j$ spaced at intervals of exactly $n/t_l$. 
However, this sequence of memory states need not begin at index $1$. 
Because the stream elements are drawn independently, shifting the starting index of these intervals yields disjoint set of indices $j$ defining $I_l$, and thus independent communication games. 
Since this starting offset can be shifted exactly $n/t_l$ times before the blocks begin to overlap, \cite{BravermanZ25} identified $\Omega(n/t_l)$ disjoint sequences of indices $j_1, \ldots, j_{t_l}$ that serve as independent parallel transcripts for the EMostlyDISJ$_{t_l}$ instances at that level.

By Corollary \ref{cor:comm_cost_final} (and noting $1/(1-\beta_l) \le O(1)$ since $\beta_l \le \delta \le 1/2$), the conditional information cost to solve one such instance is $\Omega(\frac{m_l}{t_l} \log(1/\delta)) - O(\beta_l C_{\Pi_l}) - O(h_2(\beta_l))$. 
Substituting $m_l = \Theta(t_l^2/\eps^2)$ from our parameter choice and $C_{\Pi_l} \le t_l S$, the cost per instance becomes $\Omega(\frac{t_l}{\eps^2} \log(1/\delta)) - O(\beta_l t_l S) - O(h_2(\beta_l))$.

To lower bound the information cost of a single embedded instance, we apply the conditioning argument required by Lemma \ref{lem:coupling}. 
Let $X \sim \mathcal{D}_{\text{stream}}$ be the stream instance. 
The coupling guarantees:
\[I(X ; \Pi) \ge \left(1 - \frac{1}{n}\right) \cdot I(X_{\text{game}} ; \Pi).\]
Because the streaming algorithm guarantees a failure probability of at most $\delta$ on \textit{any} valid input sequence, it fails on the communication game on the conditional distribution $X_{\text{game}}$ with probability less than $\delta$. 
Thus, Corollary~\ref{cor:comm_cost_final} applies directly to $I(X_{\text{game}} ; \Pi)$:
\[I(X_{\text{game}} ; \Pi) \ge \Omega\left(\frac{m_l}{t_l}\log\frac{1}{\delta}\right) - O(\beta_l C_{\Pi_l}) - O(h_2(\beta_l)).\]
Substituting $m_l = \Theta(t_l^2/\eps^2)$ and $C_{\Pi_l} \le t_l S$, the expected information cost per instance under $\mathcal{D}_{\text{stream}}$ is lower bounded by:
\[\left(1 - \frac{1}{n}\right) \left[ \Omega\left(\frac{t_l}{\eps^2}\log\frac{1}{\delta}\right) - O(\beta_l t_l S) - O(h_2(\beta_l)) \right].\]

Now, using Lemma 3.2 of \cite{BravermanZ25} (a chain rule for streaming mutual information), the total information cost $I_l$ is lower bounded by the sum of the costs of these independent interleaved instances. 
Therefore:
\begin{align*}
I_l &\geq (\text{Number of independent instances}) \times (\text{Information cost per instance}) \\
&\geq \Omega\left(\frac{n}{t_l}\right) \times \left[ \Omega\left(\frac{t_l}{\eps^2} \log \frac{1}{\delta}\right) - O(\beta_l t_l S) - O(h_2(\beta_l)) \right] \\
&= \Omega\left(\frac{n}{\eps^2} \log \frac{1}{\delta}\right) - O(n \beta_l S) - O\left(\frac{n}{t_l} h_2(\beta_l)\right).
\end{align*}
Observe that the dependence on $t_l$ cancels out in the main term. 
This gives a uniform lower bound on the required information cost across all good levels, independent of the number of players involved at scale $l$.
\end{proof}

The final step is the innovative direct sum argument over dependent instances developed by \cite{BravermanZ25}. They define the average information stored by the algorithm:

\begin{definition}[Definition 5.9 of \cite{BravermanZ25}]
$\bar{I} := \frac{1}{n}\sum_{j=1}^{n} I(X_{<j};M_j)$.
\end{definition}

They prove that the information measures $I_\ell$ from different scales essentially add up, even though the underlying problems are dependent.

\begin{lemma}[Lemma 5.10 of \cite{BravermanZ25}]
\label{lem:sum_info}
$\bar{I} \ge \frac{1}{n} \sum_{\ell \in \text{Good}} I_\ell$.
\end{lemma}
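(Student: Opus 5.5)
The plan is to prove the inequality pointwise in $j$ and then average. Specifically, for every index $j$ I will show
\[
I(X_{<j};M_j)\;\ge\;\sum_{\ell\in\text{Good}} I\!\left(X_{(j-\frac{n}{2^\ell},\,j-\frac{n}{2^{\ell+1}})};\,M_j \,\middle|\, M_{j-\frac{n}{2^\ell}}\right).
\]
Summing over $j\in[n]$ and dividing by $n$ then gives $\bar I\ge\frac1n\sum_{\ell\in\text{Good}}I_\ell$ directly from the definitions of $\bar I$ and $I_\ell$. Segments whose indices fall below $1$ are empty and contribute zero.

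\textbf{Step 1 (dyadic chain rule).} Fix $j$. Write $A_\ell := X_{[j-n/2^\ell,\,j-n/2^{\ell+1})}$ and $P_\ell := X_{<j-n/2^\ell}$, the prefix strictly before $A_\ell$. The blocks $A_\ell$ for $\ell=0,1,2,\ldots$ are pairwise disjoint. Together with the final short remainder, they partition $X_{<j}$, with older blocks having smaller $\ell$. I will apply the chain rule for mutual information from oldest block to newest:
\[
I(X_{<j};M_j)=\sum_{\ell} I(A_\ell;M_j\mid P_\ell)+(\text{remainder term}).
\]
Here the conditioning for $A_\ell$ is exactly $P_\ell$, the union of all older blocks. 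Every term is nonnegative, so I drop the remainder and the terms for levels that are not good.

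\textbf{Step 2 (replacing the prefix by the memory state).} Set $s=j-n/2^\ell$. The goal is $I(A_\ell;M_j\mid P_\ell)\ge I(A_\ell;M_j\mid M_s)$. The stream elements are independent, so $A_\ell$ is independent of $P_\ell$. Hence
\[
I(A_\ell;M_j\mid P_\ell)=I(A_\ell;M_j,P_\ell)\ge I(A_\ell;M_j,M_s)=I(A_\ell;M_s)+I(A_\ell;M_j\mid M_s)=I(A_\ell;M_j\mid M_s).
\]
The first equality uses independence. The final equality uses $I(A_\ell;M_s)=0$, which holds because $M_s$ depends only on the stream before $s$ and on randomness independent of $A_\ell$. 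The inequality is data processing along the Markov chain $A_\ell\to(M_j,P_\ell)\to M_s$.

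\textbf{Main obstacle: the algorithm's internal randomness.} The Markov chain $A_\ell\to(M_j,P_\ell)\to M_s$ in Step 2 is exactly where I expect trouble. $M_s$ is a function of $P_\ell$ together with the internal coins used before time $s$, and those same coins also influence $M_j$. So $M_s$ is not literally a post-processing of $(M_j,P_\ell)$. I would try to avoid this first by treating the algorithm's randomness as fixed: condition on the random string $R$ throughout, as in the public-coin convention of the communication lower bounds. With $R$ fixed, $M_s$ is a deterministic function of $P_\ell$ and the chain holds trivially. All identities above then hold conditioned on $R$, and the per-level bounds from Lemma~\ref{lem:I_l_bound_formal} are stated with respect to the same conditioning.

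If conditioning on $R$ is not acceptable, the fallback is to fold the coins consumed before time $s$ into the prefix, i.e.\ replace $P_\ell$ by $(P_\ell,R_{<s})$. These coins are still independent of $A_\ell$, so Step 2 goes through verbatim. The only change is that Step 1 must then be run on $(X_{<j},R_{<j})$, giving $\bar I$ measured with the coins included. The subsequent space bound only uses $\bar I\le S$, which remains valid since $M_j$ has at most $S$ bits. Combining Steps 1 and 2 over the good levels yields the pointwise inequality, and averaging over $j$ completes the proof.
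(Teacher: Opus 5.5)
Your proposal is correct and is essentially the argument the paper invokes from Lemma 5.8 of \cite{BravermanZ25}. For each $j$, it applies the chain rule over the disjoint dyadic past segments, replaces the conditioning on the prefix by the memory state at the start of each segment (using independence of the stream elements), and then sums over $j$. You handle the algorithm's internal coins explicitly by conditioning on $R$, which is consistent with the public-coin information costs in Lemma~\ref{lem:I_l_bound_formal}; the paper leaves this implicit, but the data-processing step genuinely needs it. One detail to fix is the boundary index. Under the paper's convention $M_s$ is the state after processing $X_s$, so each segment should start at $s+1$ (the paper's open interval), with prefix $X_{\le s}$.
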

\begin{proof}
This follows from Lemma 5.8 of \cite{BravermanZ25}, which uses the chain rule to show that for any $j$, $I(X_{<j}; M_j)$ is lower bounded by the sum over $\ell$ of the information $M_j$ has about the specific past segment relevant to level $\ell$, conditioned on the memory state corresponding to the beginning of that segment. This relies on the observation that these past segments are disjoint for different $\ell$. Summing over $j$ and reordering yields the result.
\end{proof}

\begin{proof}[Proof of Theorem \ref{thm:main}]
Let $S$ be the space required by the algorithm. 
This space is at least the maximum information stored, which is at least the average information $\bar{I}$ (Observation 3.1 in \cite{BravermanZ25}), so $S \ge \bar{I}$.
By combining Lemma \ref{lem:I_l_bound_formal} and Lemma \ref{lem:sum_info}:
\begin{align*}
S \ge \bar{I} &\ge \frac{1}{n} \sum_{\ell \in \text{Good}} I_\ell \\
&\ge \frac{1}{n} \sum_{\ell \in \text{Good}} \left[ \Omega\left(\frac{n}{\eps^2}\log\left(\frac{1}{\delta}\right)\right) - O(n \beta_\ell S) - O\left(\frac{n}{t_\ell} h_2(\beta_\ell)\right) \right] \\
&= (\text{Number of Good Levels}) \cdot \Omega\left(\frac{1}{\eps^2}\log\left(\frac{1}{\delta}\right)\right) - S \sum_{\ell \in \text{Good}} O(\beta_\ell) - \sum_{\ell \in \text{Good}} O\left(\frac{h_2(\beta_\ell)}{t_\ell}\right).
\end{align*}
The number of good levels is $L = \Omega(\log(\frac{\eps\sqrt{n}}{\log(1/\delta)}))$. 
Observe that the abort probabilities $\beta_\ell \le t_\ell e^{-m_\ell/(6t_\ell)}$ decay exponentially across the dyadic levels since $m_\ell / t_\ell = \Theta(t_\ell/\eps^2)$. 
Thus, by choosing constants appropriately, we ensure that the aggregated defect sum is bounded by a fraction: $\sum_{\ell \in \text{Good}} O(\beta_\ell) \le 1/2$. 
The entropy penalty sum is similarly upper bounded by a negligible constant $O(1)$. 

Substituting these bounds into our inequality yields:
\[S \ge \Omega\left(\log\left(\frac{\eps\sqrt{n}}{\log\frac{1}{\delta}}\right)\frac{1}{\eps^2} \log\left(\frac{1}{\delta}\right)\right) - \frac{1}{2}S - O(1).\]
Thus, we conclude:
\[S \ge \Omega\left(\log\left(\frac{\eps\sqrt{n}}{\log\frac{1}{\delta}}\right)\frac{1}{\eps^2} \log\left(\frac{1}{\delta}\right)\right).\]
\end{proof}

\section{Improved Algorithms}

\subsection{Bounded Frequencies}

We present an $\ell_2$-estimation algorithm for insertion-only streams where the frequencies are bounded by $B$. 
The goal is to achieve a space complexity where the dependence on $B$ is only logarithmic, and the dependence on $n$ is also logarithmic, independent of $1/\eps^2$.

\begin{theorem}\label{thm:l2_bounded_freq_improved}
Let $x \in \mathbb{Z}_{\geq 0}^n$ be the frequency vector of an insertion-only stream of length $m$. 
Suppose that $0 \leq x_i \leq B$ for all $i \in [n]$. 
Then there exists a streaming algorithm that, with probability at least $1 - \delta$, computes a $(1 \pm \eps)$-approximation to $\|x\|_2^2$ using space (in bits)
\[O\left(\frac{1}{\eps^2}\log^2\frac{B}{\eps}\log\frac{1}{\delta}\left(\log B + \log\frac{1}{\eps}\right)\right) + O\left(\log\frac{n}{\delta}\right).\]
\end{theorem}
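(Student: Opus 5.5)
We will follow the subsampling outline from the technical overview: subsample the support of $x$ at rate $q \approx K/F_0$ with $K = \Theta(B^2/\eps^2)$, run AMS sketches on the subsampled substream, and rescale by $1/q$. The main difficulty is that $F_0$ is unknown and changes during the stream. We handle this with a continuous $F_0$ tracker and a sliding window of sketches at geometrically decreasing sampling rates.

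\textbf{Step 1 (static estimator).} Fix a rate $q$ and a pairwise independent hash $h:[n]\to[0,1]$, and keep coordinate $i$ iff $h(i)\le q$. Write $Y=\sum_i x_i^2\mathbf 1[h(i)\le q]$. Then $\mathbb{E}[Y/q]=F_2$. By pairwise independence, $\mathrm{Var}(Y/q)\le \frac1q\sum_i x_i^4 \le \frac{B^2}{q}F_2$. Since every nonzero $x_i\ge 1$, we have $F_2\ge F_0$. Hence if $q\ge 4B^2/(\eps^2 F_0)$, Chebyshev gives a $(1\pm\eps/2)$-approximation with probability $3/4$. The sampled support has expected size $qF_0=O(B^2/\eps^2)$, so with constant probability the sampled $F_2$ is at most $O(B^4/\eps^2)$.

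We then run $O(1/\eps^2)$ AMS counters on the sampled substream. Each counter has absolute value at most the sampled $\ell_1$ mass, which is $O(B\cdot B^2/\eps^2)$. So each counter fits in $O(\log B+\log(1/\eps))$ bits once we cap it and declare failure on overflow, and overflow happens only with small constant probability by Markov. Taking the median of $O(\log(1/\delta))$ independent copies gives failure probability $\delta$.

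\textbf{Step 2 (unknown $F_0$).} We run the $F_0$ tracker of Theorem~\ref{thm:fzero:monitor} with constant $\eps$ throughout the stream, amplified to failure probability $\delta$. This costs $O(\log\frac{n}{\delta})$-type space, which is the additive term in the statement. Rates are dyadic, $q_r=2^{-r}$. When the estimate $\hat F_0$ first exceeds $2^{r}\cdot K$, we spawn a fresh bank of sketches at rate $q_r$. That bank only sees the suffix of the stream from its spawn time onward. At any time we keep only the $W=O(\log(B/\eps))$ most recent banks and discard older ones.

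At query time we use the bank whose rate is the smallest one still satisfying $q\ge K/\hat F_0$, up to a constant factor. This bank was spawned when $F_0$ was smaller by a factor $2^{\Theta(\log(B/\eps))}$, i.e.\ by a factor $\mathrm{poly}(\eps/B)$. So the prefix it missed has $F_0^{\mathrm{pre}}\le (\eps/B)^2 F_0/4$, and its $F_2$ mass is at most $B^2 F_0^{\mathrm{pre}}\le \eps F_2/4$.

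The delicate part is coordinates that straddle the spawn time, since their frequency is split between prefix and suffix. Their contribution to $F_2(x)-F_2(x^{\mathrm{suffix}})$ is at most $\sum_{i\in\mathrm{pre}}(2x_ix_i^{\mathrm{pre}})\le 2B\cdot B\,F_0^{\mathrm{pre}}$. This is absorbed by the same choice of $W$, so we estimate the suffix vector and incur relative error $O(\eps)$.

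\textbf{Step 3 (accounting).} Each bank uses $O(\frac{1}{\eps^2}\log\frac1\delta(\log B+\log\frac1\eps))$ bits for counters. The banks share the hash randomness via seeds, or alternatively we store $O(\log n)$-bit seeds per bank. We union bound over the $O(\log(B/\eps))$ live banks by running each median trick at failure probability $\delta/\log(B/\eps)$, which contributes the second $\log\frac{B}{\eps}$ factor. Multiplying by the $W$ banks gives the stated $\frac{1}{\eps^2}\log^2\frac{B}{\eps}\log\frac1\delta(\log B+\log\frac1\eps)$ bound.

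\textbf{Main obstacle.} The hard step is Step 2: making the suffix-only estimate correct while $F_0$ grows adaptively, and keeping the hash-seed storage out of the dominant term. I would first argue correctness for a fixed spawn time, which is oblivious to the sketch randomness because the $F_0$ tracker uses independent coins. Then I would argue that the tracker's $(1\pm 1/2)$ guarantee at all times forces the chosen bank's true rate to lie within a constant factor of the ideal rate.
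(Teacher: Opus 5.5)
Step 1 matches the paper's static estimator, but Step 2 has a genuine gap: your spawning rule and your query rule are incompatible. Under your rule, the bank at rate $q_r=2^{-r}$ is created when $\hat F_0$ first exceeds $2^rK$, so each bank's rate is calibrated to $F_0$ \emph{at its own birth}. Suppose that at query time $\hat F_0\in[2^{r^*}K,2^{r^*+1}K)$. Then the bank with the smallest rate satisfying $q\ge K/\hat F_0$ (even up to a constant factor) is $q_{r^*}$ or one of the $O(1)$ banks just before it. Such a bank was spawned when $F_0$ was already within a constant factor of its final value, possibly moments before the query. It can therefore miss a constant fraction, or almost all, of the support. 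So your claim that this bank ``was spawned when $F_0$ was smaller by a factor $\mathrm{poly}(\eps/B)$'' is false. The bank that does have that property is the oldest surviving one. However, its rate is roughly $2^{W}$ times larger than $K/\hat F_0$, so the Step 1 bounds you rely on (sampled support $O(B^2/\eps^2)$, counters $O(B^3/\eps^2)$) do not apply to it. With one rate per milestone and your selection rule, no single bank is both ``old enough'' and ``at the right rate.''

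The paper decouples the two. At each milestone it spawns a group of $N+1$ sketches, $N=\lceil\log(B^2/\eps^2)\rceil$, with rates calibrated to the future values $2^0\hat F_{0,t_i},\dots,2^N\hat F_{0,t_i}$. It keeps the $N+1$ most recent groups, and at the end it takes, from the oldest surviving group, the sketch whose rate matches the final $\hat F_0$. The resulting $O(N^2)$ live sketches are where the $\log^2\frac{B}{\eps}$ in the statement comes from. Your attribution of the second $\log\frac{B}{\eps}$ to a union bound is wrong: running the median at failure probability $\delta/\log\frac{B}{\eps}$ only turns $\log\frac1\delta$ into $\log\frac1\delta+\log\log\frac{B}{\eps}$. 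That is harmless as an overestimate, but it shows the accounting isn't tracking the construction.

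A cheaper repair of your own scheme would be to query the oldest surviving bank. You would then need to prove that oversampling by $2^{W}=\mathrm{poly}(B/\eps)$ only lowers the subsampling variance, while counter magnitudes stay $\mathrm{poly}(B/\eps)$ and hence fit in $O(\log\frac{B}{\eps})$ bits. That argument is missing.

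Separately, Step 1 runs AMS with sign functions on $[n]$. Giving each counter its own $O(\log n)$-bit seed would reintroduce the multiplicative $\log n$ the theorem is meant to remove. The paper avoids this by first hashing the sampled coordinates into $[M]$ with $M=\Theta(K^2)$, which is collision-free on the sample with constant probability, so each sign function costs only $O(\log\frac{B}{\eps})$ bits. Your ``share the hash randomness via seeds'' needs to be replaced by a concrete mechanism of this kind.
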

\begin{proof}
The strategy is to use subsampling to reduce the effective dimension of the problem to a size that depends polynomially on $B$ and $\frac{1}{\eps}$, but only logarithmically on $n$. 
Then, we apply an efficient $\ell_2$ estimator on this reduced stream.

We use pairwise independent hash functions $h : [n] \to [2^L]$ (where $L = O(\log n)$). 
We define levels of sampling based on the number of trailing zeros in the hash value $h(i)$. 
Let $z(i)$ be the number of trailing zeros of $h(i)$. 
We aim to find a sampling level $\ell$ such that the expected number of distinct items $i$ with $z(i) \geq \ell$ is approximately $K$, where $K = \Theta\left(\frac{B^2}{\eps^2}\right)$, independent of $\delta$. 

We first utilize a continuous $F_0$ tracking algorithm \cite{kane2010optimal} to estimate the total number of distinct elements $F_0$. 
By setting the failure probability of the $F_0$ estimator to $\delta/\log n$, we may assume that it yields a constant-factor approximation to $F_0$ at all times throughout the stream. 
Let $F_{0,t}$ be the number of non-zero entries at time $t$ of the stream, and let $\hat{F}_{0,t}$ be the corresponding estimator. 
We first describe a two-pass version of our algorithm, and then explain how to extend it to a single pass.
    
\paragraph{A two-pass algorithm.} 
On the first pass, we compute the final estimate $\hat{F}_0$. 
We choose the sampling level $\ell$ such that $2^\ell \approx \hat{F}_0 / K$. 
This ensures that the expected number of items surviving the sampling is within a constant factor of $K$. 
Let $S$ be the set of surviving coordinates: $S = \{i \in [n] \mid z(i) \geq \ell\}$. 
The expected size of $S$ is roughly $F_0 / 2^\ell \approx K$.

We define the natural inverse-probability estimator based on the sampled set $S$: 
\[Y = 2^\ell \sum_{i \in S} x_i^2 = \frac{1}{p} \sum_{i \in S} x_i^2,\]
where $p = 2^{-\ell}$. 
Since $Y$ is a standard sampling estimator and elements are sampled pairwise independently, its expectation is $\mathbb{E}[Y] = \|x\|_2^2$, and its variance is upper bounded by:
\[\text{Var}(Y) \leq \frac{1-p}{p} \sum_{i=1}^n x_i^4 \leq \frac{1}{p} \|x\|_4^4.\]
Since the frequencies are upper bounded by $B$ ($x_i \leq B$), we have $\|x\|_4^4 \leq B^2 \|x\|_2^2$, which implies $\text{Var}(Y) \leq \frac{B^2}{p} \|x\|_2^2$.

To achieve our $(1 \pm \eps)$ guarantee, Chebyshev's inequality implies that it suffices to have $\text{Var}(Y) \leq \eps^2 \|x\|_2^4$. 
This holds if $p \geq \Omega\left(\frac{B^2}{\eps^2 \|x\|_2^2}\right)$. 
Since all non-zero entries are at least $1$, $\|x\|_2^2 \geq F_0$. 
Therefore, it suffices to set $p \geq \Omega\left(\frac{B^2}{\eps^2 F_0}\right)$. 
Our choice of $\ell$, which ensures $p \approx K / \hat{F}_0$, guarantees this condition when the constants are set appropriately. 
Thus, a single sampling instance gives a $(1 \pm \eps)$-approximation with constant probability, e.g., $7/8$. 
Finally, to achieve the required $1-\delta$ success probability, confidence bound, we utilize the standard median-of-means technique, taking the median of $T = \Theta(\log(1/\delta))$ independent instances. 

\paragraph{Implementation and space complexity.} 
We do not explicitly store the sampled set $S_r$ for each independent instance $r$. 
Instead, for each of the $T$ independent instances, we observe that the stream restricted to the coordinates in $S_r$ is itself a valid sparse stream. 
To avoid paying a logarithmic dependence on $n$ for the sketch hash seeds, we apply dimensionality reduction. 
Let $M = \Theta(K^2)$. 
We draw a pairwise independent hash function $g_r : [n] \to [M]$. 
Because the expected support size of the restricted stream is $K$, hashing into $M = \Theta(K^2)$ buckets yields zero collisions with high constant probability. 
Let $y^{(r)}$ be this compressed vector. 
When a stream update $(i, \Delta)$ arrives, we evaluate $h_r(i)$ to check if $z_r(i) \ge \ell$. 
If so, we pass the update to a standard AMS sketch \cite{AlonMS99} tracking $x|_{S_r}$.

Because $K = \Theta(B^2/\eps^2)$ is independent of $\delta$, the expected $\ell_1$ norm of each restricted substream $x|_S$ is at most $BK = \Theta(B^3/\eps^2)$. 
By standard concentration bounds, the maximum $\ell_1$ norm is $O(BK)$ with high probability. 
Thus, the bit complexity for a single counter is at most $\log(BK) = O(\log B + \log(1/\eps))$. 

Each of the $T = \Theta(\log(1/\delta))$ independent instances maintains an AMS sketch with $O(1/\eps^2)$ counters, operating on the reduced domain $[M]$. 
Thus, each of the $O(1/\eps^2)$ hash functions uses $O(\log M) = O(\log K) = O(\log B + \log(1/\eps))$ bits to store. 
Hence, the overall space required for the AMS sketches is:
\[O\left( \frac{1}{\eps^2} \log\left(\frac{1}{\delta}\right) \left(\log B + \log\frac{1}{\eps}\right) \right).\]
We also need a constant-factor approximation to determine our sampling probability, through a continuous monitoring $F_0$ algorithm that uses an additive $O\left(\log\frac{n}{\delta}\right)$ bits, c.f., Theorem~\ref{thm:fzero:monitor}. 

\paragraph{Extension to a single pass.} 
To extend this to a single pass, one could run $O(\log n)$ instances of the above two-pass algorithm in parallel, using every possible power-of-two sampling rate $p$. 
At the end of the stream, we would query $\hat{F}_0$ and output the estimate from the instance that used the correct value of $p$. 
However, this would increase the overall space by an undesired $\log n$ factor. 
    
To avoid this overhead, we utilize our continuous $F_0$ tracker. 
Let $t_0 < t_1 < \ldots$ be the timestamps at which the $F_0$ estimate crosses consecutive powers of two, meaning $\hat{F}_{0,t_i} \geq 2^i$. 
Let $N= \lceil \log(B^2/\eps^2) \rceil$. 
At each milestone $t_i$, we initialize a new group of $N+1$ instances of our sketch, which consists of the $T = \Theta(\log(1/\delta))$ independent instances necessary for the median-of-means aggregation.  
These $N+1$ instances use sampling probabilities $p$ calibrated for future $F_0$ scales of $2^0 \hat{F}_{0,t_i}, 2^1 \hat{F}_{0,t_i}, \ldots, 2^{N} \hat{F}_{0,t_i}$. 

Crucially, we do not store every sketch indefinitely. 
We maintain only the sketch groups generated at the $N+1$ most recent milestones (from $t_i$ down to $t_{i-N}$). 
Any older sketches are permanently discarded. 
Consequently, we maintain at most $O(N^2) = O(\log^2(B/\eps))$ active sketches at any point in time.

At the end of the stream, we examine the earliest remaining group of non-discarded sketches. 
Among the $N$ sketches in this group, we select the one that used the correct sampling rate $p$ corresponding to our final $F_0$ estimate. 
This selected sketch guarantees a $(1 \pm \eps)$-approximation to $F_2$, but only for the \emph{suffix} of the stream processed after the sketch was initialized. 
    
Let $x_2$ be the frequency vector corresponding to this suffix, and let $x_1$ be the discarded prefix, so that $x = x_1 + x_2$. Because we retained sketches stretching back $N = \lceil\log(B^2/\eps^2)\rceil$ epochs, the support size of the unsketched prefix $x_1$ satisfies:
\[\|x_1\|_0 \leq 2^{-N}\|x\|_0 \leq \frac{\eps^2}{B^2}\|x\|_0.\]
Exploiting the bounded-frequency assumption ($x_i \leq B$) and the fact that all non-zero entries are at least $1$ (so $\|x\|_0 \leq \|x\|_2^2$), we can bound the $\ell_2$ norm of the prefix:
\[\|x_1\|_2^2 \leq B^2 \|x_1\|_0 \leq B^2 \left(\frac{\eps^2}{B^2}\|x\|_0\right) = \eps^2 \|x\|_0 \leq \eps^2 \|x\|_2^2.\]
Taking the square root yields $\|x_1\|_2 \leq \eps \|x\|_2$. 
By the triangle inequality, the difference between the full norm $\|x\|_2$ and the estimated suffix norm $\|x_2\|_2$ is at most $\|x_1\|_2 \leq \eps \|x\|_2$. 
Adjusting $\eps$ by a constant factor ensures a strict $(1 \pm \eps)$-approximation for the final estimate. 
    
The total space complexity is dominated by the $F_0$ tracker plus the $O(\log^2(B/\eps))$ active epoch states, where each state maintains $T = \Theta(\log(1/\delta))$ independent AMS instances. 
Multiplying the space complexity for each instance by these factors gives the final desired bound. 
\end{proof}

\subsection{Sparse Streams}

We present an $\ell_2$-estimation algorithm for insertion-only streams where the frequency vector $x$ is $k$-sparse, meaning it has at most $k$ non-zero entries (i.e., $\|x\|_0 \leq k$). This algorithm achieves a space complexity where $k$ and the stream length $m$ appear only inside logarithmic factors. The core idea is to first use hashing to reduce the dimensionality of $x$. We then apply the standard AMS sketch to the resulting compressed vector. To aggressively optimize the space required per counter, we do not maintain the exact sketch counters. Instead, we conceptually split each counter into positive and negative increments and maintain them approximately using Morris counters.

\begin{theorem}
\label{thm:l2_sparse_improved}
Let $x \in \mathbb{Z}_{\geq 0}^n$ be the frequency vector of an insertion-only stream of length $m$. Suppose that $\|x\|_0 \leq k$. 
Then there exists a streaming algorithm that, with probability at least $1 - \delta$, computes a $(1 \pm \eps)$-approximation to $\|x\|_2^2$ using space (in bits)
\[O\left(\frac{1}{\eps^2}\log\left(\frac{1}{\delta}\right)\left(\log\frac{k}{\eps} + \log\log m\right) + \log n \cdot \log\frac{1}{\delta} \right).\]
\end{theorem}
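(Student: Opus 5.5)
The plan follows the two-stage outline from the technical overview: hash the universe into few buckets, run AMS on the compressed vector, and store each AMS counter as a pair of Morris counters. Throughout, $M = \Theta(k/\eps^2)$ and $g:[n]\to[M]$ is a pairwise independent hash. The compressed vector is $y_b = \sum_{i: g(i)=b} x_i$.

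\textbf{Step 1: the hashing stage.} Since $x\ge 0$, $\|y\|_2^2 = \|x\|_2^2 + \sum_{i\ne i',\, g(i)=g(i')} x_i x_{i'}$, and the cross term is nonnegative. Its expectation is at most $\frac{1}{M}\|x\|_1^2 \le \frac{k}{M}\|x\|_2^2$ by Cauchy--Schwarz. With $M = Ck/\eps^2$, Markov's inequality gives $\|x\|_2^2 \le \|y\|_2^2 \le (1+\eps/4)\|x\|_2^2$ with constant probability. The seed of $g$ costs $O(\log n)$ bits, which accounts for the additive $\log n\cdot\log(1/\delta)$ term once we repeat $O(\log(1/\delta))$ times. (Using $M=\Theta(k/\eps)$ also suffices for the expectation bound; I will take whichever choice keeps $\log M = O(\log(k/\eps))$.)

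\textbf{Step 2: AMS on $y$ with Morris counters.} Use $O(1/\eps^2)$ AMS counters $Z = \sum_b \sigma(b) y_b$, where $\sigma:[M]\to\{\pm1\}$ is 4-wise independent, so each seed costs only $O(\log M) = O(\log(k/\eps))$ bits. Each counter is written as $Z = P - N$, where $P$ and $N$ count the stream updates landing on buckets with $\sigma=+1$ and $\sigma=-1$ respectively. Both are insertion-only counts bounded by $m$. Each is kept by a Morris counter with accuracy $\eps' = \Theta(\eps/k)$ (and small failure probability), using the standard base-$(1+\eps')$ variant. That variant uses $O(\log(1/\eps') + \log\log m) = O(\log(k/\eps) + \log\log m)$ bits, which gives the per-counter cost in the theorem.

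\textbf{Step 3: error analysis (the main obstacle).} The estimates satisfy $\hat P = P \pm \eps' P$ and $\hat N = N \pm \eps' N$, so $|\hat Z - Z| \le \eps'(P+N) \le \eps' m$. The squared error is then
\[|\hat Z^2 - Z^2| \le 2|Z|\eps' m + \eps'^2 m^2 .\]
Use $m^2 \le k F_2$ (Cauchy--Schwarz, since $\|x\|_0\le k$), so $\eps' m \le (\eps/k)\sqrt{kF_2} \le \eps\sqrt{F_2}$. Averaging $Z^2$ over the $O(1/\eps^2)$ counters, and noting that $\mathbb{E}[Z^2] = \|y\|_2^2 \le 2F_2$ bounds the typical $|Z|$ by $O(\sqrt{F_2})$, the total additive error is $O(\eps F_2)$. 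The delicate part is making this argument clean. We need the Morris guarantee to hold simultaneously for all counters, which requires boosting the Morris success probability so that it is constant per AMS group, at cost only $O(\log\log)$ bits of extra precision. We also need to control $\sum |Z_r|$ via Cauchy--Schwarz against $\sum Z_r^2$ rather than bounding each $|Z_r|$ individually.

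\textbf{Step 4: amplification.} Combine Steps 1--3 into a constant-success-probability estimator, then take the median of $O(\log(1/\delta))$ independent copies. This multiplies both the sketch space and the hash-seed space by $\log(1/\delta)$, giving the stated bound.
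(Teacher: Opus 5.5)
Your proposal is correct and uses the same construction as the paper: pairwise hashing into $\Theta(k/\eps^2)$ buckets, and AMS on the compressed vector with each counter split into its positive and negative parts. Those parts are kept by Morris counters whose precision is governed by $\eps/k$, and the proof then uses $m^2\le kF_2$ and a median over $O(\log(1/\delta))$ copies. The difference is in Step 3.

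The paper takes each Morris estimate to be unbiased with variance $\gamma m^2$, where $\gamma=\Theta(\eps^2/k^2)$. It then runs a second-moment analysis of $\hat A^2=A^2+2AE+E^2$. The bias is $\mathrm{Var}(\hat A)\le\gamma m^2\le(\eps^2/k)F_2$ and the per-counter variance is $O(\gamma m^4)\le O(\eps^2F_2^2)$, so no union bound is needed.

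You instead condition on all $O(1/\eps^2)$ Morris counters of a group being $(1\pm\eps')$-accurate and bound $|\hat Z^2-Z^2|$ deterministically. Cauchy--Schwarz ties $\frac{1}{R}\sum_r|Z_r|$ to the AMS estimate itself. This costs a tail bound and a union bound. In exchange, it avoids the fourth moment of the Morris error that sits in the $E^2$ term, which the paper's variance bound does not treat explicitly.

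One parameter should be fixed. A Morris counter with base $1+a$ has relative standard deviation about $\sqrt{a/2}$. So $(1\pm\eps')$-accuracy with failure probability $\eta$ via Chebyshev needs $a=\Theta(\eps'^2\eta)$, not base $1+\eps'$. Taking $\eta=\Theta(\eps^2)$ for the union bound still gives $O(\log(1/a)+\log\log m)=O(\log(k/\eps)+\log\log m)$ bits. Your space bound therefore holds without the sharper $\log\log(1/\eta)$ analysis you allude to.
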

\begin{proof}
The strategy involves a two-stage process: dimensionality reduction tailored specifically for sparse vectors, followed by an AMS sketch where the underlying counters are maintained approximately using Morris counters \cite{morris1978counting}.

\paragraph{Dimensionality reduction via hashing.} 
We first reduce the dimension by hashing the coordinates of $x$ into a much smaller set of random buckets. 
We draw $T = \Theta(\log(1/\delta))$ independent pairwise independent hash functions $h_r : [n] \to [M]$, where $M = \lceil C k / \eps^2 \rceil$ for a sufficiently large constant $C$. 
For each independent instance $r\in[T]$, we define the reduced vector $y^{(r)} \in \mathbb{R}^M$, where $y^{(r)}_j = \sum_{i : h_r(i) = j} x_i$. 
Since $x$ is $k$-sparse, the expected $\ell_2^2$ error introduced by collisions in a single instance is at most $\|x\|_1^2 / M \le k \|x\|_2^2 / M$. 
By Markov's inequality, a single hash preserves $\|y^{(r)}\|_2^2 = (1 \pm \eps/4)\|x\|_2^2$ with constant probability. 
Taking the median across the $T$ independent instances amplifies this probability to the desired $1-\delta$.

\paragraph{AMS Sketch with Morris Counters.} 
For each of the $T$ instances, we apply an AMS sketch to the lower-dimensional vector $y^{(r)}$ using $R = \Theta(1/\eps^2)$ estimators. 
We conceptually decompose each exact AMS counter into its positive and negative accumulating parts, i.e., $A_{r,w} = A_{r,w}^+ - A_{r,w}^-$. 
Notice that the total accumulated mass across both parts is precisely the $\ell_1$ norm of $x$: $A_{r,w}^+ + A_{r,w}^- = \|x\|_1 = m$.
To avoid the space overhead of storing these counters exactly, we maintain them approximately. 
We instantiate two unbiased Generalized Morris counters for each index $(r, w)$, one for $A_{r,w}^+$ and one for $A_{r,w}^-$, and each with variance $\gamma = \Theta(\eps^2 / k^2)$. 
Let $\hat{A}_{r,w}^+$ and $\hat{A}_{r,w}^-$ be their respective estimates. 
Our approximate counter is then $\hat{A}_{r,w} = \hat{A}_{r,w}^+ - \hat{A}_{r,w}^-$, which remains unbiased, i.e., $\mathbb{E}[\hat{A}_{r,w}] = A_{r,w}$. 
Thus, the variance of the approximate counter is at most
\[\text{Var}(\hat{A}_{r,w}) = \text{Var}(\hat{A}_{r,w}^+) + \text{Var}(\hat{A}_{r,w}^-) \le \gamma (A_{r,w}^+)^2 + \gamma (A_{r,w}^-)^2 \le \gamma m^2.\]

\paragraph{Error analysis.} 
Let $\hat{Y}'_r$ be the AMS estimator computed using these approximate counters for a single instance $r$, i.e., averaging $R = \Theta(1/\eps^2)$ squared approximate counters $\hat{A}_{r,w}^2$ across $w\in[R]$. 
Because the Morris counters are unbiased, the expected value of a squared approximate counter $\hat{A}_{r,w}^2$ includes the true squared signal plus the variance, i.e., $\mathbb{E}[\hat{A}_{r,w}^2] = A_{r,w}^2 + \text{Var}(\hat{A}_{r,w})$. 
Thus, the expected bias introduced to the final estimator by the approximate counters is at most $\text{Var}(\hat{A}_{r,w}) \le \gamma m^2\le O(\eps^2 m^2/k^2)$. 
Because $x$ is $k$-sparse and $\|x\|_1=m$, then $\|x\|_2^2 \ge m^2/k$, so that the bias is at most $O(\eps^2/k)\cdot\|x\|_2^2$. 

Similarly, we bound the variance of the overall estimator. 
Let $E = \hat{A}_{r,w} - A_{r,w}$ be the error of the Morris counter. 
Expanding the squared counter, we have $\hat{A}_{r,w}^2 = (A_{r,w} + E)^2 = A_{r,w}^2 + 2A_{r,w}E + E^2$.  
Since $A_{r,w}^2=O(m^2)$ and $\text{Var}(E) \le \gamma m^2$, then the variance is at most $O(\gamma m^4)$. 
Averaging over the $R = \Theta(1/\eps^2)$ and applying the above argument, it follows that the variance is at most $O(\eps^4 \|x\|_2^4) \ll \eps^2 \|x\|_2^4$. 
Thus Chebyshev's inequality, each instance outputs a $(1 \pm \eps/2)$ approximation to $\|y^{(r)}\|_2^2$ with constant probability. 
By taking the median of $O(\log 1/\delta)$ instances, we can boost the success probability to $1-\delta$. 

\paragraph{Space Complexity.} 
A Generalized Morris counter tracking up to $m$ with variance parameter $\gamma = \Theta(\eps^2/k^2)$ uses $O(\log(1/\gamma) + \log\log m) = O(\log(k/\eps) + \log\log m)$ bits. 
Maintaining $2RT = O(\frac{1}{\eps^2}\log\frac{1}{\delta})$ such counters thus uses space
\[O\left(\frac{1}{\eps^2}\log\left(\frac{1}{\delta}\right) \left(\log\frac{k}{\eps} + \log\log m\right)\right).\]
Additionally, the $T$ initial pairwise independent hash functions $h_r : [n] \to [M]$ require $O(\log(1/\delta) \log n)$ additional bits of space. 
Similarly, the AMS hash functions $\sigma$ require $O(\log M) = O(\log(k/\eps))$ bits to generate. 
Altogether, the space complexity is
\[O\left(\frac{1}{\eps^2}\log\left(\frac{1}{\delta}\right)\left(\log\frac{k}{\eps} + \log\log m\right) + \log n \cdot \log\frac{1}{\delta} \right).\]
\end{proof}

\bibliographystyle{alpha}
\bibliography{references}

\newcommand{\etalchar}[1]{$^{#1}$}
\begin{thebibliography}{KNPW11}

\bibitem[AGMS02]{AlonGMS02}
Noga Alon, Phillip~B. Gibbons, Yossi Matias, and Mario Szegedy.
\newblock Tracking join and self-join sizes in limited storage.
\newblock {\em J. Comput. Syst. Sci.}, 64(3):719--747, 2002.

\bibitem[AMS99]{AlonMS99}
Noga Alon, Yossi Matias, and Mario Szegedy.
\newblock The space complexity of approximating the frequency moments.
\newblock {\em J. Comput. Syst. Sci.}, 58(1):137--147, 1999.

\bibitem[BDN17]{BlasiokDN17}
Jaroslaw Blasiok, Jian Ding, and Jelani Nelson.
\newblock Continuous monitoring of $\ell_p$ norms in data streams.
\newblock In {\em Approximation, Randomization, and Combinatorial Optimization. Algorithms and Techniques, {APPROX/RANDOM}}, volume~81, pages 32:1--32:13, 2017.

\bibitem[BEO22]{Ben-EliezerEO22}
Omri Ben{-}Eliezer, Talya Eden, and Krzysztof Onak.
\newblock Adversarially robust streaming via dense-sparse trade-offs.
\newblock In {\em 5th Symposium on Simplicity in Algorithms, SOSA@SODA}, pages 214--227, 2022.

\bibitem[BJKS04]{Bar-YossefJKS04}
Ziv Bar{-}Yossef, T.~S. Jayram, Ravi Kumar, and D.~Sivakumar.
\newblock An information statistics approach to data stream and communication complexity.
\newblock {\em J. Comput. Syst. Sci.}, 68(4):702--732, 2004.

\bibitem[BKSV14]{BravermanKSV14}
Vladimir Braverman, Jonathan Katzman, Charles Seidell, and Gregory Vorsanger.
\newblock An optimal algorithm for large frequency moments using {O}$(n^{1-2/k})$ bits.
\newblock In {\em Approximation, Randomization, and Combinatorial Optimization. Algorithms and Techniques, {APPROX/RANDOM}}, pages 531--544, 2014.

\bibitem[Bla20]{Blasiok20}
Jaroslaw Blasiok.
\newblock Optimal streaming and tracking distinct elements with high probability.
\newblock {\em {ACM} Trans. Algorithms}, 16(1):3:1--3:28, 2020.

\bibitem[BO13]{BravermanO13}
Vladimir Braverman and Rafail Ostrovsky.
\newblock Approximating large frequency moments with pick-and-drop sampling.
\newblock In {\em Approximation, Randomization, and Combinatorial Optimization. Algorithms and Techniques - 16th International Workshop, {APPROX}, and 17th International Workshop, {RANDOM}. Proceedings}, 2013.

\bibitem[BVWY18]{BravermanVWY18}
Vladimir Braverman, Emanuele Viola, David~P. Woodruff, and Lin~F. Yang.
\newblock Revisiting frequency moment estimation in random order streams.
\newblock In {\em 45th International Colloquium on Automata, Languages, and Programming, {ICALP}}, pages 25:1--25:14, 2018.

\bibitem[BZ25]{BravermanZ25}
Mark Braverman and Or~Zamir.
\newblock Optimality of frequency moment estimation.
\newblock In Michal Kouck{\'{y}} and Nikhil Bansal, editors, {\em Proceedings of the 57th Annual {ACM} Symposium on Theory of Computing, {STOC}}, pages 360--370, 2025.

\bibitem[CKS03]{ChakrabartiKS03}
Amit Chakrabarti, Subhash Khot, and Xiaodong Sun.
\newblock Near-optimal lower bounds on the multi-party communication complexity of set disjointness.
\newblock In {\em 18th Annual {IEEE} Conference on Computational Complexity}, pages 107--117, 2003.

\bibitem[Gan11]{Ganguly11}
Sumit Ganguly.
\newblock Polynomial estimators for high frequency moments.
\newblock {\em CoRR}, abs/1104.4552, 2011.

\bibitem[GGI{\etalchar{+}}02]{GilbertGIKMS02}
Anna~C. Gilbert, Sudipto Guha, Piotr Indyk, Yannis Kotidis, S.~Muthukrishnan, and Martin Strauss.
\newblock Fast, small-space algorithms for approximate histogram maintenance.
\newblock In {\em Proceedings on 34th Annual {ACM} Symposium on Theory of Computing}, pages 389--398, 2002.

\bibitem[GLW{\etalchar{+}}25]{GribelyukLWYZ25}
Elena Gribelyuk, Honghao Lin, David~P. Woodruff, Huacheng Yu, and Samson Zhou.
\newblock Lifting linear sketches: Optimal bounds and adversarial robustness.
\newblock In {\em Proceedings of the 57th Annual {ACM} Symposium on Theory of Computing, {STOC}}, pages 395--406, 2025.

\bibitem[GLW{\etalchar{+}}26]{GribelyukLWYZ26}
Elena Gribelyuk, Honghao Lin, David~P. Woodruff, Huacheng Yu, and Samson Zhou.
\newblock Adversarial robustness on insertion-deletion streams.
\newblock In {\em Proceedings of the 58th Annual {ACM} Symposium on Theory of Computing, {STOC}}, pages 2278--2289, 2026.

\bibitem[GW18]{GangulyW18}
Sumit Ganguly and David~P. Woodruff.
\newblock High probability frequency moment sketches.
\newblock In {\em 45th International Colloquium on Automata, Languages, and Programming, {ICALP}}, 2018.

\bibitem[Ind06]{Indyk06}
Piotr Indyk.
\newblock Stable distributions, pseudorandom generators, embeddings, and data stream computation.
\newblock {\em J. {ACM}}, 53(3):307--323, 2006.

\bibitem[IW05]{IndykW05}
Piotr Indyk and David~P. Woodruff.
\newblock Optimal approximations of the frequency moments of data streams.
\newblock In {\em Proceedings of the 37th Annual {ACM} Symposium on Theory of Computing (STOC)}, pages 202--208, 2005.

\bibitem[JWZ24]{JayaramWZ24}
Rajesh Jayaram, David~P. Woodruff, and Samson Zhou.
\newblock Streaming algorithms with few state changes.
\newblock {\em Proc. {ACM} Manag. Data}, 2(2):82, 2024.

\bibitem[KNPW11]{KaneNPW11}
Daniel~M. Kane, Jelani Nelson, Ely Porat, and David~P. Woodruff.
\newblock Fast moment estimation in data streams in optimal space.
\newblock In {\em Proceedings of the 43rd {ACM} Symposium on Theory of Computing, {STOC}}, pages 745--754, 2011.

\bibitem[KNW10a]{KaneNW10}
Daniel~M. Kane, Jelani Nelson, and David~P. Woodruff.
\newblock On the exact space complexity of sketching and streaming small norms.
\newblock In {\em Proceedings of the Twenty-First Annual {ACM-SIAM} Symposium on Discrete Algorithms, {SODA}}, pages 1161--1178, 2010.

\bibitem[KNW10b]{kane2010optimal}
Daniel~M Kane, Jelani Nelson, and David~P Woodruff.
\newblock An optimal algorithm for the distinct elements problem.
\newblock In {\em Proceedings of the twenty-ninth ACM SIGMOD-SIGACT-SIGART symposium on Principles of database systems}, pages 41--52, 2010.

\bibitem[KPW21]{KPW21}
Akshay Kamath, Eric Price, and David~P. Woodruff.
\newblock A simple proof of a new set disjointness with applications to data streams.
\newblock In Valentine Kabanets, editor, {\em 36th Computational Complexity Conference, {CCC} 2021, July 20-23, 2021, Toronto, Ontario, Canada (Virtual Conference)}, volume 200 of {\em LIPIcs}, pages 37:1--37:24. Schloss Dagstuhl - Leibniz-Zentrum f{\"{u}}r Informatik, 2021.

\bibitem[KSZC03]{KrishnamurthySZC03}
Balachander Krishnamurthy, Subhabrata Sen, Yin Zhang, and Yan Chen.
\newblock Sketch-based change detection: methods, evaluation, and applications.
\newblock In {\em Proceedings of the 3rd {ACM} {SIGCOMM} Internet Measurement Conference, {IMC}}, pages 234--247, 2003.

\bibitem[Li08]{Li08}
Ping Li.
\newblock Estimators and tail bounds for dimension reduction in $\ell_{\alpha} (0<\alpha\le 2)$ using stable random projections.
\newblock In {\em Proceedings of the Nineteenth Annual {ACM-SIAM} Symposium on Discrete Algorithms, {SODA}}, pages 10--19, 2008.

\bibitem[LSW{\etalchar{+}}26]{Lin0W0Z26}
Honghao Lin, Zhao Song, David~P. Woodruff, Shenghao Xie, and Samson Zhou.
\newblock $l_p$ sampling in distributed data streams with applications to adversarial robustness.
\newblock In {\em Proceedings of the 2026 Annual {ACM-SIAM} Symposium on Discrete Algorithms, {SODA}}, pages 4342--4409, 2026.

\bibitem[Mor78]{morris1978counting}
Robert Morris.
\newblock Counting large numbers of events in small registers.
\newblock {\em Communications of the ACM}, 21(10):840--842, 1978.

\bibitem[Mut05]{muthukrishnan2005data}
Shanmugavelayutham Muthukrishnan.
\newblock Data streams: Algorithms and applications.
\newblock {\em Foundations and Trends{\textregistered} in Theoretical Computer Science}, 1(2):117--236, 2005.

\bibitem[Woo04]{Woodruff04}
David~P. Woodruff.
\newblock Optimal space lower bounds for all frequency moments.
\newblock In {\em Proceedings of the Fifteenth Annual {ACM-SIAM} Symposium on Discrete Algorithms, {SODA}}, pages 167--175, 2004.

\bibitem[Woo14]{woodruff2014sketching}
David~P Woodruff.
\newblock Sketching as a tool for numerical linear algebra.
\newblock {\em Foundations and Trends{\textregistered} in Theoretical Computer Science}, 10(1--2):1--157, 2014.

\bibitem[WZ21a]{WoodruffZ21b}
David~P. Woodruff and Samson Zhou.
\newblock Separations for estimating large frequency moments on data streams.
\newblock In {\em 48th International Colloquium on Automata, Languages, and Programming, {ICALP}}, pages 112:1--112:21, 2021.

\bibitem[WZ21b]{WoodruffZ21}
David~P. Woodruff and Samson Zhou.
\newblock Tight bounds for adversarially robust streams and sliding windows via difference estimators.
\newblock In {\em 62nd {IEEE} Annual Symposium on Foundations of Computer Science, {FOCS}}, pages 1183--1196, 2021.

\bibitem[WZ24]{WoodruffZ24}
David~P. Woodruff and Samson Zhou.
\newblock Adversarially robust dense-sparse tradeoffs via heavy-hitters.
\newblock In {\em Advances in Neural Information Processing Systems 37: Annual Conference on Neural Information Processing Systems, NeurIPS}, 2024.

\end{thebibliography}

\end{document}